\pdfoutput=1
\documentclass{article}

\usepackage[pass]{geometry}
\usepackage[T1]{fontenc}
\usepackage[utf8]{inputenc}
\usepackage{lmodern}
\usepackage{amsmath,amssymb,amsthm}
\usepackage{mathtools}
\mathtoolsset{showonlyrefs}
\usepackage{graphicx}
\usepackage{booktabs}
\usepackage{algorithm}
\usepackage{algpseudocode}
\usepackage{placeins}
\usepackage{natbib}
\usepackage{xurl}
\usepackage{hyperref}
\hypersetup{
    pdftitle={Approximate Bayesian Fusion},
    pdfauthor={Filippo Pagani; Adam M. Johansen; Gareth O. Roberts},
    pdfkeywords={Divide-and-conquer, distributed inference, sequential Monte Carlo, Brownian bridge, error bounds},
    colorlinks=true,
    linkcolor=blue,
    citecolor=blue,
    urlcolor=blue}
\renewcommand\harvardurl[1]{\textbf{URL:} \url{#1}}

\newtheorem{assumption}{Assumption}
\newtheorem{lemma}{Lemma}
\newtheorem{proposition}{Proposition}
\newtheorem{definition}{Definition}
\newtheorem{theorem}{Theorem}
\newtheorem{remark}{Remark}

\title{\bf Approximate Bayesian Fusion}
\author{Filippo Pagani\thanks{Corresponding author: filippo.pagani@warwick.ac.uk}, Adam M. Johansen and Gareth O. Roberts \\[1.5ex]
    Department of Statistics, University of Warwick}
\date{}

\begin{document}

\maketitle

\begin{abstract} \noindent
    Distributed inference plays an important role in modern computational statistics and machine learning. In the frequentist setting, methods can often aggregate results by simply averaging estimators computed on data subsets. However, distributed Bayesian inference is more challenging, as it requires reconstructing the full posterior by combining subposterior distributions computed on different machines. Bayesian Fusion performs distributed Bayesian inference exactly, but at a sizeable computational cost. We propose a method that uses numerical approximations to reduce the cost of the most computationally expensive steps of Bayesian Fusion, and expands its domain of applicability. We bound the numerical error as a function of the step size of the approximation. Under regularity conditions, our approach can approximate the target posterior distribution to an arbitrary degree of accuracy, and is a practical and efficient alternative for large-scale distributed Bayesian inference. In numerical experiments on a logistic regression with more than $300{,}000$ observations, our method reaches the accuracy reported for Divide-and-Conquer Bayesian Fusion about two orders of magnitude faster. On a proportional hazards model with non-Gaussian posteriors, which lacks the lower bounds that Bayesian Fusion requires, our method reproduces funnel features of the posterior that simpler methods like Consensus Monte Carlo miss.
\end{abstract}

\noindent%
{\it Keywords:} Divide-and-conquer, distributed inference, sequential Monte Carlo, Brownian bridge, error bounds

\section{Introduction}

Modern statistical analyses increasingly require inference from data distributed across machines, institutions, or other locations. This may arise from computational constraints, the scale of the dataset, or restrictions on data sharing. Distributed inference addresses this by analysing data subsets, or shards, separately and then combining the results. In Bayesian inference, this is challenging because independent shard analyses produce subposteriors that must be recombined to recover the full posterior, which is a non-trivial task.

A wide range of methods has been proposed for this problem. In this work we do not consider methods based primarily on subsampling the full dataset \citep{welling2011, nemeth2021, bierkens2019}, but we focus on approaches that initially perform inference independently on each shard and subsequently combine the resulting distributions or samples. 
These approaches range from fast approximations to exact but computationally expensive methods.

At one end of the spectrum, approximate methods simplify either the subposteriors or their combination. Consensus Monte Carlo (CMC) \citep{scott2016cmc} combines subposterior draws by weighted averaging. It is exact when the subposteriors are Gaussian, but its error can be arbitrarily large when they are not.
Density-based approaches instead estimate or smooth individual subposteriors before combining them, with kernel density estimates \citep{neiswanger2014} or the Weierstrass transform \citep{wang2014}, but density estimation becomes increasingly difficult as the dimension grows. More flexible approximations have also been developed using diffusion-based models of shard-level subposteriors \citep{trojan2024}.

At the other end of the spectrum, exact methods target the full posterior itself. Monte Carlo Fusion (MCF) \citep{dai2019mf} uses a rejection-sampling construction to simulate from the exact posterior distribution given samples from each of the subposteriors.
However, this approach is computationally expensive, particularly 
as the dimension or the number of subposteriors increases. Bayesian Fusion (BF) \citep{dai2023bf} improves computational performance over MCF by embedding fusion within a sequential Monte Carlo framework. Furthermore, Divide-and-Conquer Bayesian Fusion \citep{chan2023dcf} recursively combines subposteriors through a tree-structured scheme, which improves efficiency further. 
These methods are exact, or consistent when implemented with SMC, but they remain expensive, scale poorly, and require model-specific bounds that make them hard to apply.

There currently does not exist methodology for the fusion problem which is both scalable and equipped with a controllable approximation error.

In this work, we develop and analyse Approximate Bayesian Fusion (ABF). ABF retains the sequential Monte Carlo structure of BF, but replaces the unbiased estimates of its importance weights with a simple numerical approximation. The contributions of this article are:
\begin{enumerate}
    \item A fusion method that is robust, scalable, and needs no model-specific bounds. It applies to any model whose subposteriors can be sampled and whose log densities can be differentiated twice.
    \item Control of the approximation error. Under regularity conditions, the total variation distance between the exact and the approximate fusion distributions decreases as the square root of the time step of the approximation (Theorem~\ref{th:bound_strong}). Unlike for CMC and similar methods, the approximation error can therefore be reduced at the cost of more computation.
    \item Numerical experiments. On the logistic regression of \cite{chan2023dcf}, ABF reaches the accuracy reported for Divide-and-Conquer Bayesian Fusion about two orders of magnitude faster (Section~\ref{sec:nycflights}). On a proportional hazards model where the bounds required by BF do not exist, so that BF cannot be applied, ABF recovers the funnel shapes of the posterior that CMC misses (Section~\ref{sec:ph}).
\end{enumerate}
Our error bounds cover the logistic regression of Section~\ref{sec:logistic_comparison}, but not the other examples, which use an adaptive step size and divide-and-conquer fusion. These examples show that ABF works well beyond the conditions of the theory.

The rest of the manuscript is organised as follows. Section~\ref{sec:BF} reviews Bayesian Fusion, on which ABF is built, and introduces the notation used in the rest of the paper. Section~\ref{sec:ABF} introduces ABF and its numerical approximation, and Section~\ref{sec:controlling_error} bounds the error of this approximation as a function of the step size. Section~\ref{sec:implementation} describes how we choose the parameters of ABF and how we simulate it efficiently, including its divide-and-conquer version. Section~\ref{sec:experiments} compares ABF with other methods on four examples, two simulated and two based on real data, and Section~\ref{sec:discussion} shows our conclusions.

\section{Bayesian Fusion} \label{sec:BF}

\subsection{The fusion target}

Suppose that the data $X = (X_1,\ldots,X_C)$ are split across $C$ machines, cores, or shards, and that the parameter of interest is $\vartheta \in \mathbb{R}^d$. Running inference separately on shard $c$ gives access to samples from a shard-specific subposterior $f_c(\vartheta) = f_c(\vartheta | X_c)$. Bayesian Fusion is used when these subposterior densities are constructed so that the target posterior density satisfies
\begin{equation}
    f(\vartheta) \propto \prod_{c=1}^C f_c(\vartheta) \, .
\end{equation}
This product representation can be constructed, for example, by distributing part of the factors composing the likelihood alongside a fractional prior across shards, or by assigning the prior to one of the factors and likelihood contributions to the others. In what follows we assume only that this product representation holds.

The aim of Bayesian Fusion is to obtain samples from $f$ given samples from the $C$ component densities $f_1,\ldots,f_C$. The construction in \cite{dai2023bf} can be understood in three parts. 
First, one draws an initial sample from each of the $C$ subposteriors. 
Second, these  $C$ samples are propagated according to an interacting diffusion whose components coalesce at a common value at time $T$. 
Third, a change of measure reweights these paths so that, under the resulting fusion measure, this common terminal value has density $f$. In the SMC implementation, the propagation and reweighting in the latter two parts are performed sequentially over a temporal partition.

\subsection{The coalescing diffusion}

For each shard $c=1,\ldots,C$, Bayesian Fusion defines a process $\theta_t^{(c)} \in \mathbb{R}^d$ on $[0,T]$. We write
\begin{equation}
    \theta_t = (\theta_t^{(1)}, \ldots, \theta_t^{(C)})
\end{equation}
for the full $Cd$-dimensional state at time $t$, which can be viewed as $C$ coupled copies of the original parameter, one for each shard. The process is initialised with $\theta_0^{(c)} \sim f_c$ independently, and is constructed so that all shard-specific components coalesce at time $T$:
\begin{equation}
    \theta_T^{(1)} = \ldots = \theta_T^{(C)} := \vartheta \, ,
\end{equation}
where the common terminal value $\vartheta$ is the fused sample.

More explicitly, each particle evolves according to the coupled system of stochastic differential equations
\begin{equation} \label{eq:sde}
    \mathrm{d} \theta_t^{(c)} = \frac{\bar{\theta}_t - \theta_t^{(c)}}{T-t} \, \mathrm{d}t + \mathrm{d} W_t^{(c)} \, , \quad \theta_0^{(c)} \sim f_c \, , \quad c=1,\ldots,C \, ,
\end{equation}
where $\{ W_t^{(c)} \}_{c=1}^C$ are independent $d$-dimensional Brownian motions and $\bar{\theta}_t = C^{-1} \sum_{c=1}^C \theta_t^{(c)}$ represents the mean across different shards.
The drift term pulls each shard-specific component $\theta_t^{(c)}$ towards the current mean across shards $\bar \theta_t$. Conditionally on $\theta_0$ and on the common endpoint $\vartheta$, the $C$ components are independent Brownian bridges from $\theta_0^{(c)}$ to $\vartheta$ \citep[proof of Theorem~1]{dai2023bf}, so all components coalesce at time $T$, as illustrated in Figure~\ref{fig:coalescence}.

\begin{figure}
    \centering
    \includegraphics[width=1\linewidth]{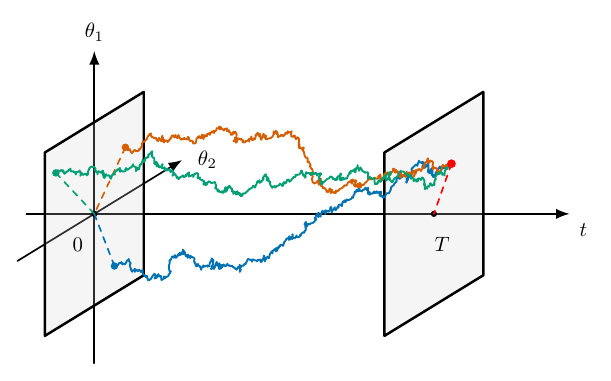}
    \caption{Schematic of $C=3$ separate shards collapsing on the same point in two dimensions at the coalescence time $T$ by following Brownian bridge trajectories.}
    \label{fig:coalescence}
\end{figure}

\subsection{The fusion weight}

On its own, the coalescing diffusion of Equation~\eqref{eq:sde} does not give the fused sample $\vartheta$ the density $f$. Bayesian Fusion corrects this by reweighting each path $\theta_{0:T}$ by the fusion weight
\begin{align} \label{eq:weight}
    \rho(\theta_{0:T})
    & := \rho_0 ( \theta_0) \, \exp \left \{ - \sum_{c=1}^C \int_0^T ( \phi^{(c)}(\theta_s^{(c)}) - \Phi^{(c)} ) \, \mathrm{d} s \right \} \, ,
\end{align}
and we proceed to explain the role of its terms one by one.

The first factor in the equation above is the initial weight
\begin{equation}
    \label{eq:init_weight}
    \rho_0 ( \theta_0 ) := \exp \left \{ - \frac{1}{2T} \sum_{c=1}^C \| \theta_0^{(c)} - \bar{\theta}_0 \|^2 \right \} \, ,
\end{equation}
where $\| \cdot \|$ is the Euclidean $2$-norm. It downweights initial shard configurations whose components are far apart relative to the coalescence time $T$. The second factor in Equation \eqref{eq:weight} updates the path weight\footnote{For the technical reader, this factor comes from the Dacunha-Castelle representation of the transition density of a Langevin diffusion with invariant density proportional to $f_c^2$; see the proof of Proposition~2 in \cite{dai2019mf}.} according to the trajectory between times $0$ and $T$.

The function $\phi^{(c)}$ contains the shard-specific posterior information needed for this path correction. For a general $\theta \in \mathbb{R}^d$, it is defined as
\begin{align}
    \phi^{(c)}(\theta) 
    & := \frac{1}{2} \| \nabla \log f_c(\theta) \|_2^2 + \frac{1}{2} \sum_{r=1}^d \frac{\partial^2}{\partial \theta_r^2} \log f_c(\theta) \, . \label{eq:phi}
\end{align}
The constant $\Phi^{(c)}$ denotes a lower bound for $\phi^{(c)}$ whose value need not be known to implement the algorithm described in this work.

This construction is justified by Theorem~1 of \cite{dai2023bf}, which states that 
under the law of the coalescing diffusion reweighted by $\rho$, the common terminal value $\vartheta$ has density $f$. Since each $\phi^{(c)}$ depends only on the subposterior $f_c$, Bayesian Fusion turns samples from the $C$ subposteriors into samples from the full posterior without evaluating $f$. The difficulty lies in computing the weights, which we address in the next subsection.

\subsection{Sequential Monte Carlo implementation}

Bayesian Fusion approximates the fusion posterior by a weighted particle system. Suppose that we have access to $N$ samples from each subposterior $f_c$, either by direct sampling or from a previous MCMC run.
Each SMC particle contains one draw from each of the $C$ shards and therefore has dimension $Cd$. The algorithm propagates these particles forward on a temporal partition
\begin{equation} \label{eq:partition}
    \mathcal{P} = \{ t_0, t_1, \ldots, t_n \quad | \quad 0 := t_0 < t_1 < \ldots < t_n := T \, , \quad t_j - t_{j-1} = \Delta_j  \} \, .
\end{equation}
Between successive times in this partition, each particle is propagated according to the transition law induced by Equation \eqref{eq:sde}:
\begin{equation} \label{eq:transitions_theta}
    \theta_{t_{j}} \mid \theta_{t_{j-1}} \sim \mathcal{N} ( M_{t_{j-1}, t_{j}}, V_{t_{j-1}, t_{j}}) \, .
\end{equation}
The mean vector is $M_{t_{j-1}, t_{j}} = ( M_{t_{j-1}, t_{j}}^{(1)},\ldots,M_{t_{j-1}, t_{j}}^{(C)} )$, with
\begin{equation} \label{eq:transitionsM}
    M_{t_{j-1}, t_{j}}^{(c)} = \frac{T - t_{j}}{T - t_{j-1}} \theta_{t_{j-1}}^{(c)} + \frac{t_{j} - t_{j-1}}{T - t_{j-1}} \bar{\theta}_{t_{j-1}} \, ,
\end{equation}
and the covariance is $V_{t_{j-1}, t_{j}} = \Sigma_{t_{j-1}, t_{j}} \otimes I_d$, where $\otimes$ denotes the Kronecker product, and
\begin{equation} \label{eq:transitionsS}
    \Sigma_{t_{j-1}, t_{j}} = \frac{(t_{j} - t_{j-1}) (T - t_{j})}{(T - t_{j-1})} I_C + \frac{(t_{j} - t_{j-1})^2}{C (T - t_{j-1})} \mathbf{1}_C \mathbf{1}_C^\top \, ,
\end{equation}
where $\mathbf{1}_C$ represents a vector of ones of length $C$. This transition is exact, so the coalescing diffusion is simulated without discretisation error.

At every transition, the algorithm updates particle weights. For a path segment $\theta_{[t_{j-1},t_j]} = \{ \theta_s : s \in [t_{j-1}, t_j] \}$, define the incremental weight function as
\begin{equation} \label{eq:weight2}
    \rho_j(\theta_{[t_{j-1},t_j]}) = \exp \left \{ -\sum_{c=1}^C \int_{t_{j-1}}^{t_{j}} ( \phi^{(c)}(\theta_s^{(c)}) - \Phi^{(c)} ) \, \mathrm{d} s \right \} \, .
\end{equation}
For particle $i$, the unnormalised weight after step $j$ is
\begin{equation}
    w_{j,i} := W_{j-1,i} \, \rho_j(\theta_{[t_{j-1},t_j],i}) \, ,
\end{equation}
where $W_{j-1,i}$ is the normalised weight of the particle after step $j-1$, after any resampling, and $W_{0,i} \propto \rho_0(\theta_{0,i})$. Without resampling, $w_{j,i}$ is proportional to $\rho_0(\theta_{0,i}) \prod_{k=1}^{j} \rho_k(\theta_{[t_{k-1},t_k],i})$. The weights $w_{j,1},\ldots,w_{j,N}$ are then normalised across particles, 
and resampling is performed when the effective sample size (ESS) drops below a chosen threshold $\tau N$, where $\tau \in (0,1]$. The final output is a weighted particle approximation to the fusion posterior $f$.

Bayesian Fusion is exact as $N \to \infty$, but its weights are costly to compute. Each incremental weight requires a path integral along the bridge between two grid points, which is not simulated, so Bayesian Fusion uses the conditional expectation of $\rho_j$ given $\theta_{t_{j-1}}$ and $\theta_{t_j}$. \cite{dai2023bf} estimate it using the layered Brownian bridge (LBB) construction \citep{beskos2008,pollock2016exact}, which preserves exactness by producing unbiased estimates. This estimator can be expensive, and implementing it requires model-specific bounds for $\phi^{(c)}$ on each layer.

\section{Approximate Bayesian Fusion} \label{sec:ABF}

The core idea behind Approximate Bayesian Fusion is to construct a principled approximate method by replacing the exact, but computationally expensive weight update in Equation \eqref{eq:weight2} with an inexpensive numerical approximation. 

Conceptually, we replace the integral of $\phi^{(c)}$ along the bridge path by a low-order quadrature rule, in our case a simple trapezium approximation using evaluations at the endpoints of the time interval $[t_{j-1}, t_j]$, which is equivalent to approximating $\phi^{(c)}$ by a linear function.
The approximation becomes more accurate as the mesh size $\Delta_j = t_j - t_{j-1}$ decreases, so that the bridge has limited time to wander, and $\phi^{(c)}$ varies little over the interval. This yields an inexpensive approximation of the path integral, and removes the need for the layered Brownian bridge machinery used to build unbiased estimators of the weights in Equation \eqref{eq:weight}, which are computationally costly and require model-specific bounds on $\phi^{(c)}$ for each layer.
As an additional consequence of avoiding the layered Brownian bridge construction, there is no need for us to calculate the minimum of $\phi^{(c)}$ and use it in the weight update in Equation \eqref{eq:weight} as it can be absorbed into the normalising constant of the weights of the particle system, and we can simply ignore the $\Phi^{(c)}$ term in the integral. 

In practice, when using a simple trapezium approximation to the path integral calculated at every SMC transition, the weight update in Equation \eqref{eq:weight2} can be approximated as
\begin{equation}
    \int_{t_{j-1}}^{t_j} \phi^{(c)}(\theta_s^{(c)}) \, \mathrm{d} s \approx \frac{1}{2} \big ( \phi^{(c)}(\theta_{t_{j-1}}^{(c)}) + \phi^{(c)}(\theta_{t_j}^{(c)}) \big ) (t_j - t_{j-1}) \, ,
\end{equation}
for all $j = 1, \ldots, n$ and $c = 1, \ldots, C$. A pseudo-algorithm for Approximate Bayesian Fusion is given in Algorithm \ref{alg:abf-trapezium}.

\begin{algorithm}[!ht] 
\caption{Approximate Bayesian Fusion}
\label{alg:abf-trapezium}
\begin{algorithmic}[1]
\Require Subposterior particles $\{\theta_{0,i}^{(c)}\}_{i=1,\ldots,N}^{c=1,\ldots,C}$, potentials $\{\phi^{(c)}\}_{c=1}^C$, coalescence time $T$, step size $\Delta$, resampling threshold $\tau$.

\Statex
\For{$i=1,\ldots,N$} \Comment{Set up initial weights}
    \State Set the initial particle $\theta_{t_0,i} \gets (\theta_{0,i}^{(1)},\ldots,\theta_{0,i}^{(C)})$.
    \State Set the initial weights $w_{0,i} \gets \rho_0(\theta_{t_0,i})$ using Equation~\eqref{eq:init_weight}.
\EndFor
\State Normalise $W_{0,i} \gets w_{0,i}/\sum_{k=1}^N w_{0,k}$.

\Statex
\For{$j=1,\ldots,n$} \Comment{SMC iteration}
    \If{$\mathrm{ESS}(W_{j-1,1:N}) < \tau N$} \Comment{Resampling step}
        \State Resample ancestor indices $A_1,\ldots,A_N$ from $W_{j-1,1:N}$.
        \State Set $\theta_{t_{j-1},i} \gets \theta_{t_{j-1},A_i}$ and $W_{j-1,i} \gets 1/N$ for all $i$.
    \EndIf
    
    \Statex
    \For{$i=1,\ldots,N$} \Comment{Propagation step and weight update}
        \State Propagate $\theta_{t_j,i}$ from the Gaussian transition in Equations~\eqref{eq:transitions_theta}--\eqref{eq:transitionsS}.
        \State Approximate the path integral over $[t_{j-1},t_j]$ by
        \Statex
        \begin{equation*}
            \ell_{j,i}
            \gets
            \frac{\Delta}{2}
            \sum_{c=1}^C
            \left\{
                \phi^{(c)}(\theta_{t_{j-1},i}^{(c)})
                +
                \phi^{(c)}(\theta_{t_j,i}^{(c)})
            \right\}.
        \end{equation*}
        \State Update weights $w_{j,i} \gets W_{j-1,i} \, \exp\{-\ell_{j,i}\}$.
    \EndFor
    \State Normalise $W_{j,i} \gets w_{j,i}/\sum_{k=1}^N w_{j,k}$.
\EndFor

\Statex
\State Set $Y_i \gets \theta_{t_n,i}^{(1)}$ for $i=1,\ldots,N$.
\State \Return weighted fused particles $\{(Y_i,W_{n,i})\}_{i=1}^N$.
\end{algorithmic}
\end{algorithm}

The resampling scheme for SMC particles that we employ in practice is residual resampling, with the residual part drawn by systematic resampling \citep{douc2005comparison, hol2006resampling}.

The coalescence time $T$, the step size and the resampling threshold $\tau$ required by Algorithm~\ref{alg:abf-trapezium} are chosen as described in Section~\ref{sec:implementation}, which also describes how we simulate ABF efficiently.

The approximation of the integral introduces bias in the weight update, and therefore the resulting algorithm is no longer exact in the limit as $N \to \infty$. However, the bias vanishes as $\max_j \Delta_j \to 0$. In Section~\ref{sec:controlling_error} we show how the bias can be controlled and, particularly, that the distance between the exact target distribution and its numerical approximation is bounded by a term of order $\sqrt{\Delta}$, where $\Delta = \Delta_j$ for all $j$.

\section{Controlling the numerical error} \label{sec:controlling_error}

We now study the error introduced by the numerical approximations used in ABF. Under suitable regularity conditions, we show that the discrepancy between the true posterior and the distribution targeted by the approximate procedure is controlled by the SMC step size $\Delta$. In particular, the approximation error is bounded by a term of order $\sqrt{\Delta}$. 
All the proofs pertaining to this section are gathered in Appendix \ref{app:proofs}. 

Let $\mathbb{P}$ be the law of the process $\theta_t$, described by the SDE \eqref{eq:sde}, and let $\mathbb{E} = \mathbb{E}_\mathbb{P}$ represent the expected value with respect to $\mathbb{P}$. 
As in \cite{dai2023bf} and \cite{chan2023dcf}, we work on a time partition $\mathcal{P} = \{ t_j \}_{j=0}^{n}$, which for simplicity we take to be uniform, $t_j = j \Delta$. The results extend to a deterministic non-uniform partition, with $\Delta$ replaced by the largest step $\max_j \Delta_j$, because the proofs bound the error interval by interval, and each bound increases with the length of the interval.

\noindent Let us make the following assumptions:

\begin{assumption}[Regularity] \label{as:smoothness}
    For each $c \in \{1,\ldots,C\}$, the density $f_c$ is positive on $\mathbb{R}^d$, $\log f_c$ is twice continuously differentiable on $\mathbb{R}^d$, and $f_c$ has finite mean $\mu^{(c)}$ and finite covariance matrix $Q^{(c)}$. Moreover, for each $r = 1, \ldots, d$, the second derivative $\partial^2 f_c / \partial \theta_r^2$ is integrable on $\mathbb{R}^d$, and $\partial f_c(\theta) / \partial \theta_r \to 0$ as $\theta_r \to \pm \infty$.
\end{assumption}

The differentiability is needed for the $\phi^{(c)}$ function to exist. The conditions on the derivatives of $f_c$ are used in the proof of Remark~\ref{rem:Phi_neg}, and the finite moments are used in the lemmas of Appendix~\ref{app:proofs}.

\begin{assumption}[Integrability] \label{as:fubini}
  For each $c \in \{1,\ldots,C\}$, $\phi^{(c)}$ is a Borel measurable function, $\theta_s^{(c)}$ is the solution of the SDE in Equation~\eqref{eq:sde}, and 
  \begin{equation}
      \mathbb{E} \left [ \int_0^T \left | \phi^{(c)}(\theta_s^{(c)}) \right | \, \mathrm{d}s \right ] < \infty \, .
  \end{equation}
\end{assumption}
Assumption \ref{as:fubini} is used to apply Fubini's theorem, in the proofs of all the main results.

\begin{assumption}[Lipschitzness] \label{as:phi_lip}
  There exists a constant $L$ such that, for all $c \in \{1,\ldots,C\}$ and all $\theta,\theta' \in \mathbb{R}^d$,
  \begin{equation}
        |\phi^{(c)}(\theta) - \phi^{(c)}(\theta')| \le L \| \theta-\theta' \| \, ,
  \end{equation}
  where $\| \cdot \|$ is the Euclidean norm.
\end{assumption}
Assumption~\ref{as:phi_lip} is not strictly necessary. It is made for simplicity of exposition, and can be relaxed to weaker conditions without changing the core argument, i.e.\ the scaling in $\Delta$. We discuss this in more detail in Appendix~\ref{app:weaker_assumptions}.

\begin{assumption}[Lower-boundedness] \label{as:phi_min}
    For each $c \in \{1,\ldots,C\}$, $\phi^{(c)}$ is lower bounded: $\exists \Phi^{(c)}$ s.t.\ $\inf_\theta \phi^{(c)}(\theta) \ge \Phi^{(c)} > -\infty$.
\end{assumption}

\noindent Assumption \ref{as:phi_min} and the value of $\Phi^{(c)}$ are used to produce a sample from the layered Brownian bridge in \cite{dai2019mf, dai2023bf, chan2023dcf}. Though it is not necessary to know the value of $\Phi^{(c)}$ for running the Approximate Bayesian Fusion algorithm, its existence guarantees that our importance weights have finite variance. It is also useful when proving Theorem \ref{th:bound_strong}, i.e.\ our theoretical bound on the numerical error on the fusion measure. 

In order to fix notation, let us now give a definition of the trapezium rule tailored to our problem.

\begin{definition}[Trapezium rule] \label{def:trapezium}

    Let $\theta_{t_{0:n}} = \{ \theta_{t} \}_{t \in \mathcal{P}}$. The trapezium rule approximation 
    \begin{equation}
      I_\Delta^{{(c)}} \approx \int_0^T \phi^{(c)}(\theta_s^{(c)}) \, \mathrm{d} s
    \end{equation}
    is defined as 
    \begin{align}
      I_{\Delta}^{(c)} 
      & = \sum_{j=0}^{n-1} \frac{\Delta}{2} \left ( \phi^{(c)}(\theta_{t_{j}}^{(c)}) + \phi^{(c)}(\theta_{t_{j+1}}^{(c)}) \right ) 
      =: \int_0^T \phi^{(c)}_\Delta (s, \theta_{t_{0:n}}^{(c)}) \, \mathrm{d} s \, ,
    \end{align}
    where $\phi_\Delta^{(c)}$ is the piecewise linear interpolation in time of the values $\phi^{(c)}(\theta_{t_j}^{(c)})$ along the discretised path:
    \begin{equation}
        \phi_\Delta^{(c)} (s, \theta_{t_{0:n}}^{(c)}) := \sum_{j=0}^{n-1} \left ( \phi^{(c)} (\theta_{t_j}^{(c)}) + \frac{(s-t_j)}{\Delta} \left (\phi^{(c)} (\theta_{t_{j+1}}^{(c)}) - \phi^{(c)} (\theta_{t_j}^{(c)}) \right ) \right ) \mathbf{1}_{[t_j, t_{j+1})}(s) \, ,
    \end{equation}
    with the final interval being closed, i.e.\ $\mathbf{1}_{[t_{n-1}, t_{n}]}(s)$. Therefore, for $s \in \mathcal{P}$,
    \begin{equation} \label{eq:approx_at_grid}
        \phi_\Delta^{(c)}(s, \theta_{t_{0:n}}^{(c)}) = \phi^{(c)}(\theta_s^{(c)}) \, ,
    \end{equation}
    and for $s = t_j + h$, with $s \in [0,T] \setminus \mathcal{P}$ and $0 \le h \le \Delta$,
    \begin{equation}
        \phi_\Delta^{(c)} (t_j+h, \theta_{t_{0:n}}^{(c)}) = \phi^{(c)}(\theta_{t_{j}}^{(c)}) + \frac{h}{\Delta} (\phi^{(c)}(\theta_{t_{j+1}}^{(c)}) - \phi^{(c)}(\theta_{t_{j}}^{(c)})) \, .
    \end{equation}
\end{definition}

We first establish that the strong error between the exact and approximated integral in the exponential in Equation \eqref{eq:weight} is bounded from above by an expression that, as a function of the step size, scales as $\sqrt{\Delta}$.

\begin{proposition}[Strong error bound] \label{prop:strong_error}
    Let the strong error be defined as
    \begin{align}
        \epsilon_\Delta := \mathbb{E} \left | \sum_{c=1}^C \int_0^T \phi^{(c)}(\theta_s^{(c)}) \, \mathrm{d} s - \sum_{c=1}^C \int_0^T \phi_\Delta^{(c)}(s, \theta_{t_{0:n}}^{(c)}) \, \mathrm{d} s \right | \, .
    \end{align}
    Then under Assumptions \ref{as:smoothness}, \ref{as:fubini}, and \ref{as:phi_lip}, there exists a bound $E_\Delta \ge \epsilon_\Delta$ such that
    \begin{align}
        E_\Delta = 
        L C \sqrt{d} \left ( \Delta \sqrt{\frac{T}{C}} + \Delta \bar{\kappa} + \frac{7}{6} T \sqrt{\Delta} \right ) \, ,
    \end{align}
    where $\bar{\kappa} = C^{-1} \sum_{c=1}^C \kappa^{(c)}$, with
    \begin{equation} \label{eq:kappa}
        \kappa^{(c)} := d^{-1/2} \left ( \sqrt{ \mathrm{tr}(Q^{(c)}) + \| \mu^{(c)} \|^2 } + \sqrt{ \frac{1}{C} \mathrm{tr}(\bar{Q}) + \| \bar{\mu} \|^2 } \right ) \, ,
    \end{equation}
    $\bar{\mu} := C^{-1} \sum_{c=1}^C \mu^{(c)}$, and $\bar{Q} := C^{-1} \sum_{c=1}^C Q^{(c)}$, where $\mu^{(c)}$ and $Q^{(c)}$ are the mean and covariance matrix of $f_c$ in Assumption~\ref{as:smoothness}.
\end{proposition}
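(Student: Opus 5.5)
The plan is to bound the error interval by interval and pointwise in time. First, the triangle inequality and Fubini's theorem (Assumption~\ref{as:fubini}) reduce the strong error to
\begin{equation}
    \epsilon_\Delta \le \sum_{c=1}^C \sum_{j=0}^{n-1} \int_0^\Delta \mathbb{E} \left| \phi^{(c)}(\theta^{(c)}_{t_j+h}) - \phi^{(c)}_\Delta(t_j+h, \theta^{(c)}_{t_{0:n}}) \right| \mathrm{d}h \, .
\end{equation}
On $[t_j,t_{j+1}]$ the interpolant is $(1-h/\Delta)\phi^{(c)}(\theta^{(c)}_{t_j}) + (h/\Delta)\phi^{(c)}(\theta^{(c)}_{t_{j+1}})$. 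By convexity of $|\cdot|$ and Assumption~\ref{as:phi_lip}, the integrand is at most
\begin{equation}
    L\Big[(1-\tfrac{h}{\Delta})\, \mathbb{E}\|\theta^{(c)}_{t_j+h}-\theta^{(c)}_{t_j}\| + \tfrac{h}{\Delta}\, \mathbb{E}\|\theta^{(c)}_{t_{j+1}}-\theta^{(c)}_{t_j+h}\|\Big] \, .
\end{equation}
So everything reduces to bounding first moments of increments of the coalescing diffusion. Each of these will be controlled by Jensen, $\mathbb{E}\|\cdot\|\le\sqrt{\mathbb{E}\|\cdot\|^2}$.

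To compute the increments I will use the exact Gaussian transitions \eqref{eq:transitions_theta}--\eqref{eq:transitionsS} with $t_{j-1},t_j$ replaced by $s,s+u$. This gives
\begin{equation}
    \theta^{(c)}_{s+u}-\theta^{(c)}_s = \frac{u}{T-s}\big(\bar\theta_s-\theta^{(c)}_s\big) + \xi \, ,
\end{equation}
where $\xi$ is centred Gaussian with per-coordinate variance at most $u(T-s-u)/(T-s) + u^2/(C(T-s)) \le u$.

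The noise part therefore contributes at most $\sqrt{du}$. Integrating $\sqrt{du}$ against the weights $(1-h/\Delta)$ and $h/\Delta$ over $h\in[0,\Delta]$ gives a constant times $\sqrt d\,\Delta^{3/2}$ per interval. Summed over $n=T/\Delta$ intervals and $C$ shards, this yields the $\tfrac{7}{6}LC\sqrt d\,T\sqrt\Delta$ term. The constant $7/6$ should come from evaluating $\int_0^\Delta[(1-h/\Delta)\sqrt h+(h/\Delta)\sqrt{\Delta-h}]\,\mathrm{d}h$ plus any slack from the variance bound. I will adjust how the two variance pieces are split so that this exact constant appears.

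The drift part contributes $\frac{u}{T-s}\,\mathbb{E}\|\bar\theta_s-\theta_s^{(c)}\|$, and this is where the main obstacle lies, because $1/(T-s)$ blows up near $T$. To handle it, I will use that $\bar\theta_s-\theta^{(c)}_s$ is linear in $\theta_0$ plus independent Gaussian noise, with mean $\tfrac{T-s}{T}(\bar\theta_0-\theta^{(c)}_0)$. Its variance I expect to be of order $(T-s)s/T$ per coordinate (the $C$-dependent factor will follow from the Kronecker structure). Hence
\begin{equation}
    \frac{1}{T-s}\,\mathbb{E}\|\bar\theta_s-\theta_s^{(c)}\| \le \frac{1}{T}\,\mathbb{E}\|\bar\theta_0-\theta_0^{(c)}\| + \sqrt{\frac{d\,s}{T(T-s)}}\,(\ldots) \, .
\end{equation}
Bounding $\mathbb{E}\|\bar\theta_0-\theta^{(c)}_0\|\le\mathbb{E}\|\theta^{(c)}_0\|+\mathbb{E}\|\bar\theta_0\|$ by the second moments of $f_c$ and of the average of independent draws gives exactly $\sqrt d\,\kappa^{(c)}$ via Assumption~\ref{as:smoothness}. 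Multiplying by $u\le\Delta$, integrating and summing over $j$ turns $\frac1T\cdot\Delta\cdot T$ into the $LC\sqrt d\,\Delta\bar\kappa$ term.

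For the Gaussian part of the drift, the singular factor $\sqrt{s/(T-s)}$ is integrable. Summing $\Delta\cdot\Delta\sqrt{1/(T(T-s))}$ over the grid, equivalently $\Delta\int_0^T \mathrm{d}s/\sqrt{T(T-s)}$-type integrals, should give a term of order $\Delta\sqrt{T/C}\cdot\sqrt d$ after collecting the $1/C$ factor from the variance of $\bar\theta_s-\theta^{(c)}_s$. The step I expect to be hardest is the last interval, where $T-s\le\Delta$. I will treat it separately by bounding the whole increment directly by the Brownian-bridge variance rather than through the drift, checking that it still fits under the stated constants.
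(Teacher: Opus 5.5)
Your route differs from the paper's in two harmless ways, but one step fails as stated.

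\emph{The harmless differences.} First, your convex-combination split of the interpolation error is sharper than the paper's split $|\phi^{(c)}(\theta^{(c)}_{t_j+h})-\phi^{(c)}(\theta^{(c)}_{t_j})|+\frac{h}{\Delta}|\phi^{(c)}(\theta^{(c)}_{t_{j+1}})-\phi^{(c)}(\theta^{(c)}_{t_j})|$, which is where $\frac76=\frac23+\frac12$ comes from. Yours gives $\int_0^\Delta[(1-\frac h\Delta)\sqrt h+\frac h\Delta\sqrt{\Delta-h}]\,\mathrm{d}h=\frac{8}{15}\Delta^{3/2}$. Since only $\epsilon_\Delta\le E_\Delta$ is needed, there is nothing to ``adjust''. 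Second, you use forward transitions conditioned on $\theta_s$. The paper instead conditions on both endpoints $(\theta_0,\vartheta)$ and writes the increment as $\frac{t-s}{T}(\vartheta-\theta_0^{(c)})+(B^{(c)}_t-B^{(c)}_s)$. Its $\sqrt{T/C}$ term then comes from $\vartheta-\bar\theta_0\mid\theta_0\sim\mathcal N(0,\frac TC I_d)$ (Lemmas~\ref{lem:theta_T_conditional}, \ref{lem:theta_T} and~\ref{lem:theta_increment}).

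\emph{The gap} is the Gaussian part of your drift term. The variance of $\bar\theta_s-\theta^{(c)}_s$ carries no $1/C$ factor. With $v=C^{-1}\mathbf 1_C-e_c$, where $e_c$ is the $c$-th standard basis vector, one has $v^\top\mathbf 1_C=0$ and $\|v\|^2=(C-1)/C$. So, given $\theta_0$, its covariance is $\frac{C-1}{C}\frac{s(T-s)}{T}I_d$. Your term is then $u\sqrt{d(C-1)s/(CT(T-s))}$. Summed over the grid, this is of order $\Delta\sqrt{dT}\sqrt{1-1/C}$ per shard, a factor of order $\sqrt C$ above the target $\Delta\sqrt{dT/C}$. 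The term-by-term fit you announce therefore fails for large $C$.

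\emph{The fix} is not to split the centred part. The noise $\xi$ is independent of $(\theta_0,\theta_s)$. So, conditionally on $\theta_0$, the sum $\frac{u}{T-s}\bigl(\bar\theta_s-\theta^{(c)}_s-\frac{T-s}{T}(\bar\theta_0-\theta^{(c)}_0)\bigr)+\xi$ is a centred Gaussian with per-coordinate variance
$\frac{u^2(C-1)s}{CT(T-s)}+u-\frac{u^2(C-1)}{C(T-s)}=u-\frac{u^2(C-1)}{CT}\le u$.
Hence $\mathbb E\|\theta^{(c)}_{s+u}-\theta^{(c)}_s\|\le\frac uT\sqrt d\,\kappa^{(c)}+\sqrt{du}$. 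This has no singularity as $s\to T$, so the last interval needs no separate treatment. It yields $\epsilon_\Delta\le LC\sqrt d\,(\frac13\Delta\bar\kappa+\frac{8}{15}T\sqrt\Delta)\le E_\Delta$.

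If you keep your split instead, you must absorb the excess into the slack $\frac{19}{30}T\sqrt\Delta=\frac{19}{30}\sqrt n\,\Delta\sqrt T$. That requires explicit constants for the grid sum and a separate check when $n$ is small.
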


\begin{proof}
    See Section \ref{sec:strong_error_proof}.
\end{proof}

Let the fusion measure $\mathbb{F}$ be the probability measure induced by the Radon--Nikodym derivative
\begin{equation}
    \frac{\mathrm{d} \mathbb{F}}{\mathrm{d} \mathbb{P}} (\theta) \propto \rho_0 ( \theta_0) \, \exp \left \{ - \sum_{c=1}^C \int_0^T \phi^{(c)}(\theta_s^{(c)}) \, \mathrm{d} s \right \} \, ,
\end{equation}
and let its numerical approximation $\mathbb{F}_\Delta$ be analogously defined as
\begin{equation}
    \frac{\mathrm{d} \mathbb{F}_\Delta}{\mathrm{d} \mathbb{P}} (\theta) \propto \hat{\rho} (\theta_{0:T}) = \rho_0 (\theta_{0}) \, \exp \left \{ - \sum_{c=1}^C I_\Delta^{(c)} \right \} \, .
\end{equation}
We are now able to translate the result from Proposition \ref{prop:strong_error} into a bound on the distance between the fusion measure
$\mathbb{F}$, and the approximate fusion measure that arises from our numerical approximation $\mathbb{F}_\Delta$. Notably, this bound also scales as $\sqrt{\Delta}$ as $\Delta \downarrow 0$. Theorem \ref{th:bound_strong} below is the main theoretical result of this work.

\begin{theorem}[TV distance bound] \label{th:bound_strong}
    Let Assumptions \ref{as:smoothness}--\ref{as:phi_min} hold.    
    Let $\mathbb{F}$ be the fusion measure on $[0,T]$, and let $\mathbb{F}_\Delta$ be the approximate fusion measure on $[0,T]$ obtained using the numerical approximation to calculate the weights. The total variation distance between these distributions can be bounded as follows. 
    \begin{align}
        \| \mathbb{F} - \mathbb{F}_\Delta \|_{TV}
        & \le \frac{1}{Z} \mathrm{e}^{- C \, \bar{\Phi}  \, T} \; L C \sqrt{d} \left ( \Delta \sqrt{\frac{T}{C}} + \Delta \bar{\kappa} + \frac{7}{6} T \sqrt{\Delta} \right ) \, ,
    \end{align}
    with $\bar{\Phi} = C^{-1} \sum_{c=1}^C \Phi^{(c)}$, where $\Phi^{(c)}$ is the lower bound in Assumption~\ref{as:phi_min}, $\bar{\kappa}$ is defined in Proposition~\ref{prop:strong_error}, $L$ is the Lipschitz constant from Assumption \ref{as:phi_lip}, and $Z$ is the normalising constant of $\mathbb{F}$,
    \begin{equation}
        Z = \mathbb{E} \left [ \rho_0(\theta_0) \exp \left \{ - \sum_{c=1}^C \int_0^T \phi^{(c)}(\theta_s^{(c)}) \, \mathrm{d} s \right \} \right ] \, .
    \end{equation}
\end{theorem}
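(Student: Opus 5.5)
The plan is to write both measures as densities with respect to $\mathbb{P}$ and bound the TV distance by an $L^1$ difference of unnormalised weights divided by $Z$. Then a Lipschitz estimate for the exponential, with slope fixed by Assumption~\ref{as:phi_min}, reduces that difference to the strong error controlled by Proposition~\ref{prop:strong_error}.

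Set $A := \sum_c \int_0^T \phi^{(c)}(\theta_s^{(c)})\,\mathrm{d}s$ and $A_\Delta := \sum_c I_\Delta^{(c)}$. Let $g := \rho_0 e^{-A}$ and $g_\Delta := \rho_0 e^{-A_\Delta}$, so that $Z=\mathbb{E}[g]$ and $Z_\Delta=\mathbb{E}[g_\Delta]$. Both normalising constants are finite by the lower bound below. Using $\|\mathbb{F}-\mathbb{F}_\Delta\|_{TV} = \frac12\mathbb{E}|g/Z - g_\Delta/Z_\Delta|$ and the triangle inequality,
\begin{equation}
\mathbb{E}\left|\frac{g}{Z}-\frac{g_\Delta}{Z_\Delta}\right| \le \frac{1}{Z}\mathbb{E}|g-g_\Delta| + \mathbb{E}[g_\Delta]\left|\frac1Z-\frac1{Z_\Delta}\right| = \frac{1}{Z}\mathbb{E}|g-g_\Delta| + \frac{|Z_\Delta - Z|}{Z} \le \frac{2}{Z}\mathbb{E}|g-g_\Delta| .
\end{equation}
The factor $\tfrac12$ from the TV definition cancels the $2$, giving $\|\mathbb{F}-\mathbb{F}_\Delta\|_{TV}\le Z^{-1}\mathbb{E}|g-g_\Delta|$.

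Next, $\phi^{(c)}\ge \Phi^{(c)}$ pointwise by Assumption~\ref{as:phi_min}. Hence $A \ge \sum_c \Phi^{(c)} T = C\bar\Phi T$. The same lower bound holds for $A_\Delta$, because the trapezium rule is a nonnegative combination of evaluations of $\phi^{(c)}$ with total weight $T$. Since $\rho_0\le 1$, the mean value theorem applied to $x\mapsto e^{-x}$ on $[C\bar\Phi T,\infty)$ gives
\begin{equation}
|g-g_\Delta| \le \rho_0\,|e^{-A}-e^{-A_\Delta}| \le e^{-C\bar\Phi T}\,|A-A_\Delta| .
\end{equation}
Taking expectations, $\mathbb{E}|g-g_\Delta| \le e^{-C\bar\Phi T}\epsilon_\Delta \le e^{-C\bar\Phi T}E_\Delta$ by Proposition~\ref{prop:strong_error}. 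Combining this with the previous paragraph yields the stated bound.

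The main subtle points are measure-theoretic. First, $\mathbb{F}_\Delta$ must be well-defined: $Z_\Delta>0$ holds because $g_\Delta>0$ everywhere, and $Z_\Delta<\infty$ follows from the lower bound. Second, the normalisation issue must be absorbed without picking up an extra factor of $2$; the computation above handles this by using the $\tfrac12$ convention for total variation. Assumption~\ref{as:fubini} ensures $A$ is finite almost surely, so $g$ is well-defined. The substantive analytic work, the $\sqrt{\Delta}$ rate, is entirely inherited from Proposition~\ref{prop:strong_error}.
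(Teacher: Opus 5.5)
Your proposal is correct and follows essentially the same route as the paper. Both arguments split $g/Z - g_\Delta/Z_\Delta$ into an unnormalised difference plus a normalising-constant term bounded by $\mathbb{E}|g-g_\Delta|$, then use $\rho_0\le 1$ and the exponential mean-value bound with the common lower bound $C\bar{\Phi}T$ on $A$ and $A_\Delta$ (the paper's Lemmas~\ref{lem:ex_mvt} and~\ref{lem:minAB}), and finish with Proposition~\ref{prop:strong_error}. Your remarks on why $Z$ and $Z_\Delta$ are finite and positive cover points the paper leaves implicit.
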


\begin{proof}
    See Section~\ref{sec:bound_strong_proof}.
\end{proof}

Theorem~\ref{th:bound_strong} compares distributions of whole paths, but what matters in practice is the distribution of the fused sample $\vartheta$. 
Under $\mathbb{F}$, $\vartheta$ has density $f$ \citep{dai2023bf}, and keeping only the endpoint of a path cannot increase the total variation distance between two distributions. Hence the total variation distance between $f$ and the distribution of $\vartheta$ under $\mathbb{F}_\Delta$, which is the target of ABF, satisfies the bound of Theorem~\ref{th:bound_strong}.

\begin{remark} \label{rem:Phi_neg}
    The exponent $- C \, \bar{\Phi}  \, T$ is always positive, because every lower bound $\Phi^{(c)}$ is strictly negative: $\phi^{(c)}$ has mean zero under $f_c$, and it cannot be non-negative everywhere. The proof is in Section~\ref{sec:Phi_neg_proof}.
\end{remark}

The bound of Proposition~\ref{prop:strong_error} is of order $\sqrt{\Delta}$ because its proof adds the absolute values of the local errors on each interval. 
The weak error, in which the local errors can cancel, is smaller. To study it, we make the following assumption on the regularity of $\phi^{(c)}$, which replaces Assumption \ref{as:phi_lip}.

\begin{assumption}[Lipschitz gradient] \label{as:weak_error_regularity}
    For each $c \in \{1,\ldots,C\}$, $\phi^{(c)}$ is twice continuously differentiable, and there exists a constant $U \ge 0$ such that, for all $\theta, \theta' \in \mathbb{R}^d$,
    \begin{equation}
        \| \nabla \phi^{(c)}(\theta) - \nabla \phi^{(c)}(\theta') \| \le U \| \theta - \theta' \| \, .
    \end{equation}
\end{assumption}

\noindent Using Assumption \ref{as:weak_error_regularity}, we can prove the following result.

\begin{proposition}[Weak error bound] \label{prop:weak_error}
    Let Assumptions \ref{as:smoothness}, \ref{as:fubini} and \ref{as:weak_error_regularity} be satisfied, and let the weak error be defined as 
    \begin{equation} \label{eq:weak_error}
        \eta_\Delta := \left | \mathbb{E} \left [ \sum_{c=1}^C \int_0^T \phi^{(c)}(\theta_s^{(c)}) \, \mathrm{d} s - \sum_{c=1}^C \int_0^T \phi_\Delta^{(c)}(s, \theta_{t_{0:n}}^{(c)}) \, \mathrm{d} s \right ] \right | \, .
    \end{equation}
    Then, for small $\Delta$, there exists a bound $H_\Delta \ge \eta_\Delta$ such that
    \begin{equation}
        H_\Delta = \frac{1}{2} U C T \Delta d + \mathcal{O}(\Delta^2) \, ,
    \end{equation}
    where the constant $U$ is defined in Assumption \ref{as:weak_error_regularity} above. 
\end{proposition}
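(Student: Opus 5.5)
We plan to work interval by interval and use Fubini (Assumption~\ref{as:fubini}) to swap the expectation with the time integral and the sum over $c$. This gives
\begin{equation}
    \eta_\Delta \le \sum_{c=1}^C \sum_{j=0}^{n-1} \left| \int_0^\Delta \mathbb{E}\Big[ \phi^{(c)}(\theta_{t_j+h}^{(c)}) - \big(1-\tfrac{h}{\Delta}\big)\phi^{(c)}(\theta_{t_j}^{(c)}) - \tfrac{h}{\Delta}\phi^{(c)}(\theta_{t_{j+1}}^{(c)}) \Big] \, \mathrm{d}h \right| .
\end{equation}
The integrand is $\phi^{(c)}$ at a point, minus a convex combination of $\phi^{(c)}$ at two other points. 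We Taylor expand $\phi^{(c)}$ to second order around the interpolating point
\begin{equation}
    m := \big(1-\tfrac{h}{\Delta}\big)\theta_{t_j}^{(c)} + \tfrac{h}{\Delta}\theta_{t_{j+1}}^{(c)}.
\end{equation}
By Assumption~\ref{as:weak_error_regularity}, $|\phi(x)-\phi(m)-\nabla\phi(m)^\top(x-m)| \le \tfrac{U}{2}\|x-m\|^2$. The first-order terms of the two endpoints cancel exactly under the convex weights, since $(1-\tfrac h\Delta)(\theta_{t_j}-m)+\tfrac h\Delta(\theta_{t_{j+1}}-m)=0$. What remains is $\mathbb{E}[\nabla\phi(m)^\top(\theta_{t_j+h}-m)]$ plus quadratic remainders of size $\tfrac U2$ times the squared distances.

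The key structural fact concerns the conditional law of $\theta_{t_j+h}^{(c)}$ given the grid values $\theta_{t_j}$ and $\theta_{t_{j+1}}$. It is the Brownian bridge between those values, because the process is Gaussian and, as the paper notes, the components are Brownian bridges given endpoints. Its conditional mean is therefore exactly $m$, and its conditional covariance is $\tfrac{h(\Delta-h)}{\Delta} I_d$. Conditioning on the grid values and using the tower property, the linear term $\mathbb{E}[\nabla\phi(m)^\top(\theta_{t_j+h}-m)]$ vanishes, since $m$ is measurable with respect to the grid.

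The quadratic remainder at the interior point then contributes at most
\begin{equation}
    \tfrac U2 \, \mathbb{E}\|\theta_{t_j+h}-m\|^2 = \tfrac U2 \, d \, \tfrac{h(\Delta-h)}{\Delta}.
\end{equation}
Integrating over $h\in[0,\Delta]$ gives $\tfrac U2 d \tfrac{\Delta^2}{6}$. The endpoint remainders are $\tfrac U2\big[(1-\tfrac h\Delta)(\tfrac h\Delta)^2+\tfrac h\Delta(1-\tfrac h\Delta)^2\big]\mathbb{E}\|\theta_{t_{j+1}}-\theta_{t_j}\|^2 = \tfrac U2 \tfrac{h}{\Delta}(1-\tfrac h\Delta)\,\mathbb{E}\|\theta_{t_{j+1}}-\theta_{t_j}\|^2$.

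The main obstacle is computing $\mathbb{E}\|\theta_{t_{j+1}}^{(c)}-\theta_{t_j}^{(c)}\|^2$ accurately enough. From Equations~\eqref{eq:transitionsM}--\eqref{eq:transitionsS} we get
\begin{equation}
    \theta_{t_{j+1}}^{(c)}-\theta_{t_j}^{(c)} = \tfrac{\Delta}{T-t_j}\big(\bar\theta_{t_j}-\theta_{t_j}^{(c)}\big) + \text{noise},
\end{equation}
where the noise has covariance $\big(\Delta\tfrac{T-t_{j+1}}{T-t_j}+\tfrac{\Delta^2}{C(T-t_j)}\big)I_d$. Its trace is $d\Delta+\mathcal{O}(\Delta^2)$. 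The drift part requires $\mathbb{E}\|\bar\theta_{t_j}-\theta_{t_j}^{(c)}\|^2/(T-t_j)^2$, which we control by writing out the Gaussian marginals given $\theta_0$. The spread $\theta^{(c)}_t-\bar\theta_t$ has mean $\tfrac{T-t}{T}(\theta_0^{(c)}-\bar\theta_0)$ and variance of order $(T-t)$, so the ratio is $\mathcal{O}(\Delta^2/(T-t_j))$. The moments of $\theta_0$ are finite by Assumption~\ref{as:smoothness}.

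Near the last interval this is only $\mathcal{O}(\Delta)$ rather than $\mathcal{O}(\Delta^2)$. Summed over $j$, however, it behaves like a harmonic sum of order $\Delta^2\log(T/\Delta)$, which is absorbed into the higher-order term for small $\Delta$. Hence $\mathbb{E}\|\theta_{t_{j+1}}-\theta_{t_j}\|^2 = d\Delta + \text{h.o.t.}$, and the endpoint remainders integrate to $\tfrac U2 \tfrac{\Delta}{6} d\Delta$.

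Per interval and per shard, the total is therefore at most $\tfrac U2 d\Delta^2(\tfrac16+\tfrac16)$ plus higher-order terms. Multiplying by $n=T/\Delta$ intervals and $C$ shards gives a bound of order $UCT\Delta d$. If the direct bound yields a constant different from $\tfrac12$, the fallback is to avoid splitting the remainder crudely. Instead we write the error as $\int_0^1(1-u)\,(x-m)^\top\nabla^2\phi(m+u(x-m))(x-m)\,\mathrm{d}u$, use $\|\nabla^2\phi\|\le U$, and bound the combined conditional second moment of the interior and endpoint deviations jointly. In either case the leading term is $\tfrac12 UCT\Delta d$ with the stated prefactor, with the remaining terms collected into $\mathcal{O}(\Delta^2)$.
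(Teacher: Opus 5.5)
Your argument is correct, but it expands about a different point than the paper. The paper expands $\phi^{(c)}$ about the left endpoint $\theta_{t_j}^{(c)}$, so the linear terms cancel only after integrating in $h$ (Lemma~\ref{lem:int_incr_h}). Its quadratic remainders are controlled by $\frac{\Delta}{3}\|\theta_{t_{j+1}}^{(c)}-\theta_{t_j}^{(c)}\|^2+\frac{\Delta^2}{6}d$ and by $\frac{\Delta}{2}\|\theta_{t_{j+1}}^{(c)}-\theta_{t_j}^{(c)}\|^2$, which gives the constant $\frac{5}{12}+\frac{1}{12}=\frac12$. You expand about the interpolant $m$, so the linear terms vanish for each fixed $h$, by the convex weights together with the bridge conditional mean, and your remainders are smaller. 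Your own tally gives $\frac{U}{2}(\frac16+\frac16)d\Delta^2$ per interval and shard, hence a leading term $\frac16 UCT\Delta d$, three times smaller than the paper's. This already proves the statement, because $\frac12 UCT\Delta d$ plus your remainder dominates your bound. The ``fallback'' in your last paragraph is therefore unnecessary. Your closing claim that your route yields the prefactor $\frac12$ is also not what your computation shows; it shows something stronger.

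One step needs correcting. A term of order $\Delta^2\log(T/\Delta)$ is not $\mathcal{O}(\Delta^2)$, so it cannot be ``absorbed into the higher-order term'' of the statement. The log term is an artefact of bounding the drift and noise parts of the one-step increment separately. The noise trace is exactly $d\Delta-d(1-\tfrac1C)\frac{\Delta^2}{T-t_j}$. This is not $d\Delta+\mathcal{O}(\Delta^2)$ uniformly in $j$ (it equals $d\Delta/C$ on the last interval), though it is at most $d\Delta$. The drift part is exactly $\frac{\Delta^2}{T^2}\mathbb{E}\|\theta_0^{(c)}-\bar\theta_0\|^2+d(1-\tfrac1C)\frac{\Delta^2 t_j}{T(T-t_j)}$. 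The $1/(T-t_j)$ terms cancel, leaving $\mathbb{E}\|\theta_{t_{j+1}}^{(c)}-\theta_{t_j}^{(c)}\|^2=d\Delta-d(1-\tfrac1C)\frac{\Delta^2}{T}+\frac{\Delta^2}{T^2}\mathbb{E}\|\theta_0^{(c)}-\bar\theta_0\|^2$ uniformly in $j$, so your remainder really is $\mathcal{O}(\Delta^2)$. The paper obtains a bound uniform in $j$ with no singular terms at all, by conditioning on $(\theta_0,\vartheta)$ instead of using the one-step kernel (Lemma~\ref{lem:theta_delta_square_norm}, bounded via Lemma~\ref{lem:theta_T_square_norm}). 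That is the cleaner way to handle this step. Alternatively, the slack $\frac13 UCT\Delta d$ between your constant and $\frac12$ would absorb a $\Delta^2\log(1/\Delta)$ term for small $\Delta$.
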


\begin{proof}
    See Section \ref{sec:weak_error_proof}.
\end{proof}

Proposition \ref{prop:weak_error} shows that for small $\Delta$, the weak error scales linearly in the step size $\Delta$.
Although the rate in $\Delta$ is faster, it is not clear how to turn Proposition~\ref{prop:weak_error} into a bound on the distance between the exact and approximate fusion measures, as Theorem~\ref{th:bound_strong} does for the strong error.

\section{Practical implementation} \label{sec:implementation}

\subsection{ABF parameter tuning}
\label{sec:abftuning}

Approximate Bayesian Fusion requires a coalescence time $T$, a mesh $0 = t_0 < t_1 < \cdots < t_n = T$ for the SMC transitions, and a resampling threshold $\tau$. The threshold $\tau$ plays its usual role: particles are resampled whenever the ESS falls below $\tau N$, and we use $\tau = 0.5$ in our experiments. We choose $T$ first, through the target acceptance probability $a_0$ of a rejection sampler, and then the mesh, through a target ratio $\zeta_\star$ for the conditional effective sample size (CESS) of each transition.

\subsubsection{Tuning the coalescence time \texorpdfstring{$T$}{T}} \label{sec:coalescence_time}

The coalescence time enters the initial weight $\rho_0$ in Equation~\eqref{eq:init_weight}, which penalises the dispersion of the shard components of a particle relative to $T$. A small $T$ gives negligible weight to all but the least dispersed tuples, so the initial weights can degenerate. A large $T$ admits more dispersed tuples, but it requires a longer bridge and more SMC transitions.

Rather than weighting the initial particles by $\rho_0$, we select them by rejection sampling. Each proposal is a tuple formed by drawing one particle independently from each shard, and it is accepted with probability $\rho_0$, so that the $N$ accepted tuples carry equal weights. This replaces the initial weights in Algorithm~\ref{alg:abf-trapezium} by $W_{0,i} = 1/N$.

We choose $T$ so that the acceptance probability of this sampler equals a target $a_0 \in (0,1)$. Writing $\rho_0 = \exp \{ - \mathcal{E} / T \}$, with energy $\mathcal{E} := \frac{1}{2} \sum_{c=1}^C \| \theta^{(c)} - \bar{\theta} \|^2$, we compute the energies $\mathcal{E}_1, \ldots, \mathcal{E}_M$ of $M$ proposed tuples and solve
\begin{equation}
    \widehat{a}(T) := \frac{1}{M} \sum_{q=1}^M \exp \{ - \mathcal{E}_q / T \} = a_0
\end{equation}
by bracketing and bisection. Appendix~\ref{app:rejection_acceptance_tuning} gives the details, and Algorithms~\ref{alg:tune-coalescence-time} and~\ref{alg:rejection-initialisation} there summarise the calibration and the initialisation.

Since $\widehat{a}$ is increasing in $T$ (see Proposition~\ref{prop:rejection_jensen}), smaller values of $a_0$ give smaller values of $T$, at the cost of more proposals: the sampler needs on average $N / a_0$ proposals to accept $N$ tuples. If one proposal takes on average a time $t_{\mathrm{iter}}(d,C)$, estimated from a short pilot run, a time budget $\mathcal{B}$ for this step gives
\begin{equation}
    a_0 = \frac{N \, t_{\mathrm{iter}}(d,C)}{\mathcal{B}} \, .
\end{equation}
In our experiments, $a_0 \in [0.01, 0.1]$ worked well across a range of problems, and the algorithm was not particularly sensitive to the value within this range.

\subsubsection{Choosing the step size \texorpdfstring{$\Delta$}{Delta}} \label{sec:adaptive_step_size}

The error analysis of Section~\ref{sec:controlling_error} is carried out on a uniform mesh $t_j = j \Delta$. Such a step size can be calibrated once $T$ is known: we can take the smallest integer $n$ such that the ESS after the first SMC transition is at least $\zeta_\star N$, found by bisection, and set $\Delta = T / n$. 

The weight degeneracy caused by a transition of given length, however, tends to change along the bridge, so a step size that suits the first transition is not necessarily the same for all the others. We therefore let the step size adapt, as suggested by \cite{chan2023dcf}, though with a different rule, and require the target $\zeta_\star$ of every transition rather than of the first alone. 
The analysis of Section~\ref{sec:controlling_error} does not cover the resulting mesh, and we use the rule below as a practical heuristic.

Let $\widehat{\zeta}_j \in (0,1]$ denote the CESS ratio of transition $j$ \citep{zhou2016toward}, which measures the weight degeneracy introduced by that transition alone.
For a short transition, the CESS ratio behaves as $\exp \{ - \mathrm{Var} ( \sum_{c=1}^C \phi^{(c)} ) \, \Delta_j^2 \}$, where the variance of the summed potential is taken across the particles before the transition; Appendix~\ref{app:adaptive_step_size} derives this approximation and the rule below. Assuming that this variance changes little from one transition to the next, the step size that should attain $\zeta_\star$ at the next transition is
\begin{equation} \label{eq:adaptive_rule_main}
    \widehat{\Delta}_{j+1} = \Delta_j \sqrt{ \frac{\log \zeta_\star}{\log \widehat{\zeta}_j} } \, .
\end{equation}
Both logarithms are negative, so the step shrinks when the observed CESS ratio falls below the target and grows when it exceeds it. Unlike the bisection used to choose the steps of adaptive tempering, this prediction requires no additional propagation of the particles. Appendix~\ref{app:adaptive_step_size} also describes the choice of the first step, the limits on how much the step can change between transitions, and other implementation details.

The rule differs from the adaptive mesh of \citet[Section~4.2.2]{chan2023dcf}, which takes each step from the asymptotic lower bound on the CESS of Bayesian Fusion in their Theorem~15. That bound is derived for Gaussian subposteriors, holds in the limit $N \to \infty$ and $\Delta_j \to 0$, and depends on the particles only through the average squared distance of their shard components from the subposterior means, rather than through the spread of the potentials across particles. It bounds the CESS of the BF weights, whose unbiased estimators add variance, and can therefore be conservative for the ABF weights. Appendix~\ref{app:tuning_calcs_bf} computes the steps given by this rule along ABF runs on the logistic regression of Section~\ref{sec:logistic_comparison}, and finds them about an order of magnitude smaller than those given by Equation~\eqref{eq:adaptive_rule_main}.

In practice, the choice of $\zeta_\star$ is relatively simple. Values between $0.5$ and $0.99$ worked well in our experiments, with higher values needed for harder problems. Increasing $\zeta_\star$ from $0.5$ to $0.9$ roughly doubled the runtime, and increasing it from $0.9$ to $0.99$ roughly tripled it. 
Such an increase is affordable when, as in our examples, a single run takes seconds or minutes. 
Since the discretisation error grows with the step size, we recommend starting from $\zeta_\star = 0.9$. If the step size of a fusion collapses, so that it takes many more steps than the others, the particles have most likely entered a region where $\phi^{(c)}$ changes very quickly, and $\zeta_\star$ should be increased, for example to $0.99$. On the proportional hazards model of Section~\ref{sec:ph}, a particularly pathological model, values below $0.95$ caused a bias.

\subsection{Efficient simulation} \label{sec:bridge_transition_summary}

We use three devices to make ABF faster in practice: a factorised form of the bridge transitions, preconditioning, and divide-and-conquer fusion. The first is an exact rewriting of the transitions and is always used; the other two, both from \cite{chan2023dcf}, are used in the more difficult examples of Section~\ref{sec:experiments}.

The Gaussian transition in Equation \eqref{eq:transitions_theta} can be simulated without explicitly forming or factorising the covariance matrix in Equation \eqref{eq:transitionsS}. Following Corollary~2 of \cite{dai2023bf}, the transition admits a representation in terms of a shared and a shard-specific Gaussian component. For a transition from time $s$ to time $t$, let 
$\alpha = (T-t)/(T-s)$, so that $1 - \alpha =(t-s)/(T-s)$.
Then for particle $i$, shard $c$, and coordinate $r$,
\begin{equation} \label{eq:efficient_bridge_update_summary}
    \theta_{t,i,r}^{(c)}
    =
    \alpha \theta_{s,i,r}^{(c)}
    +
    (1 - \alpha) \bar{\theta}_{s,i,r}
    +
    \sqrt{t-s}\sqrt{C^{-1}(1 - \alpha)}\,\chi_{i,r}
    +
    \sqrt{t-s}\sqrt{\alpha}\,\psi_{i,r}^{(c)} \, ,
\end{equation}
with
$\bar{\theta}_{s,i,r}=C^{-1} \sum_{c=1}^C \theta_{s,i,r}^{(c)}$,
where $\chi_{i,r}$ is standard normal and shared across shards for a given particle and coordinate, and the $\psi_{i,r}^{(c)}$ are independent standard normal variables, independent also of $\chi_{i,r}$.
The shared $\chi_{i,r}$ term induces the required covariance between shards, while the $\psi_{i,r}^{(c)}$ terms provide the shard-specific variation.

When a subposterior is strongly correlated, or has very different scales in different coordinates, the Brownian motion in Equation~\eqref{eq:sde}, whose covariance is the identity, is poorly matched to its shape. Following \cite{chan2023dcf}, the Brownian motion of shard $c$ can instead be given a covariance matrix $\Lambda_c$ that approximates the covariance of $f_c$. We set $\Lambda_c$ to the empirical covariance matrix $\hat{Q}^{(c)}$ of the draws from shard $c$, or to its diagonal. The bridge transition then becomes, in vector form,
\begin{equation} \label{eq:efficient_bridge_update_precond}
    \theta_{t,i}^{(c)} = \alpha \theta_{s,i}^{(c)} + (1 - \alpha) \tilde{\theta}_{s,i} + \sqrt{t-s} \sqrt{1 - \alpha} \, \bar{\Lambda}^{1/2} \chi_i + \sqrt{t-s} \sqrt{\alpha} \, \Lambda_c^{1/2} \psi_i^{(c)} \, ,
\end{equation}
where
\begin{equation}
    \bar{\Lambda} := \Big ( \sum_{c=1}^C \Lambda_c^{-1} \Big )^{-1} \, , \qquad \tilde{\theta}_{s,i} := \bar{\Lambda} \sum_{c=1}^C \Lambda_c^{-1} \theta_{s,i}^{(c)} \, ,
\end{equation}
$\chi_i$ and $\psi_i^{(c)}$ are independent $\mathcal{N}(0, I_d)$ vectors, and $\Lambda^{1/2}$ denotes the Cholesky factor of a matrix $\Lambda$. The mean across shards is thus replaced by an average weighted by the precisions $\Lambda_c^{-1}$, and the noise terms are scaled by $\bar{\Lambda}^{1/2}$ and $\Lambda_c^{1/2}$. When $\Lambda_c = I_d$ for all $c$, we have $\bar{\Lambda} = C^{-1} I_d$ and $\tilde{\theta}_{s,i} = \bar{\theta}_{s,i}$, and Equation~\eqref{eq:efficient_bridge_update_precond} reduces to Equation~\eqref{eq:efficient_bridge_update_summary}. Equation~\eqref{eq:efficient_bridge_update_precond} is Corollary~7.3.2 of \citet{chan2022thesis}. The initial weight $\rho_0$ changes accordingly, as detailed in Algorithm~\ref{alg:approx-bayesian-fusion}, and the functions $\phi^{(c)}$ become
\begin{equation} \label{eq:phi_precond}
    \phi^{(c)}(\theta) := \frac{1}{2} \nabla \log f_c(\theta)^\top \Lambda_c \nabla \log f_c(\theta) + \frac{1}{2} \operatorname{Tr} \big ( \Lambda_c \nabla^2 \log f_c(\theta) \big ) \, ,
\end{equation}
which reduce to Equation~\eqref{eq:phi} when $\Lambda_c = I_d$ \citep{chan2023dcf}. Algorithm~\ref{alg:approx-bayesian-fusion} in Appendix~\ref{app:complete_algo} gives the complete algorithm, with preconditioning, the rejection sampler of Section~\ref{sec:coalescence_time} and the adaptive step size of Section~\ref{sec:adaptive_step_size}.

Fusing all $C$ subposteriors in a single step becomes harder as $C$ grows, because an initial particle receives a non-negligible weight $\rho_0$ only if all of its $C$ components are close to each other. Divide-and-conquer fusion \citep{chan2023dcf} instead fuses the subposteriors two at a time along a binary tree: the $C$ subposteriors are fused in pairs into $C/2$ intermediate posteriors, each represented by the particles returned by its fusion, these are fused in pairs into $C/4$, and so on, until a single fusion returns the full posterior. Each of the $C - 1$ fusions then involves only two components, and the fusions at the same level of the tree are independent of each other. Each fusion starts from the weighted particles returned by its two children: the preconditioners $\Lambda_c$ are computed from their weighted moments, the coalescence time is calibrated afresh, and the rejection sampler of Section~\ref{sec:coalescence_time} draws each component of a proposal from the particles of the corresponding child, according to their weights. Our implementation requires $C$ to be a power of two.

\section{Experiments and examples} \label{sec:experiments}

In this section we compare ABF with other methods on four examples. The first is a small Bayesian logistic regression, on which we compare ABF with BF and study how the accuracy and computational cost of both depend on the step size. The remaining examples are a five-dimensional Rosenbrock distribution, a logistic regression on the nycflights13 data, and a proportional hazards model, which test ABF on strongly non-Gaussian targets and on larger datasets. In these examples, ABF uses the adaptive step size of Section~\ref{sec:adaptive_step_size}, together with the preconditioning and divide-and-conquer construction of Section~\ref{sec:bridge_transition_summary}. We compare ABF with CMC in all examples and, on the nycflights13 data, with the results that \citet{chan2023dcf} report for Divide-and-Conquer Bayesian Fusion and for the other methods of their Figure~7.

Consensus Monte Carlo \citep{scott2016cmc} combines the samples from the subposteriors by weighted averaging. Let $\theta_i^{(c)}$ be the $i$-th draw from shard $c$, and let $\hat{Q}^{(c)}$ be the empirical covariance matrix of the draws from that shard. CMC combines corresponding draws according to
\begin{equation}
    \theta_i^{\mathrm{CMC}} = \left( \sum_{c=1}^C \bigl(\hat{Q}^{(c)}\bigr)^{-1} \right)^{-1} \sum_{c=1}^C \bigl(\hat{Q}^{(c)}\bigr)^{-1} \theta_i^{(c)} \, .
\end{equation}
Thus, a subposterior with a smaller covariance receives greater weight. When the subposteriors are Gaussian and their covariance matrices are known, this aggregation produces exact draws from their product. With estimated covariance matrices, it is asymptotically exact as the number of subposterior draws increases. For non-Gaussian subposteriors, however, the weighted average may fail to preserve features such as skewness, curved dependence, and heavy tails. The aggregation step is computationally inexpensive and highly scalable, but CMC has no tuning parameter that directly trades additional computation for a reduction in this aggregation error. Increasing the number of draws reduces Monte Carlo error, but does not generally remove the error caused by non-Gaussianity.

We now assess the accuracy using the integrated absolute distance of \cite{chan2023dcf}. For one run of a method, let $\widehat f_{r}$ be the kernel density estimate of marginal $r$ obtained from its output, and let $\widehat f_{\mathrm{ref},r}$ be the corresponding estimate obtained from the full-data reference sample. We define
\begin{equation}\label{eq:logit_iad}
    \operatorname{IAD} = \frac{1}{2d} \sum_{r=1}^d \int_{\mathbb{R}} \left| \widehat f_{r}(y) - \widehat f_{\mathrm{ref},r}(y) \right| \, \mathrm{d}y \, .
\end{equation}
For each marginal, the integral divided by two is the total-variation distance between the two estimated densities. Hence, the IAD is the average of these distances over the $d$ one-dimensional marginals. The densities are evaluated on a common grid, using bandwidths selected from the reference sample.

Each method is run for ten replicas, except on the Rosenbrock distribution, where we show a single run. For ABF and BF, the runtime of a replica includes the rejection-sampling initialisation and the SMC run, and excludes the generation of the shard and reference samples. Following \citet{chan2023dcf}, on the nycflights13 data and on the proportional hazards model all ABF replicas use the same shard draws, whereas each CMC replica uses its own independent shard draws, and its runtime includes only the aggregation step. The logistic regression of Section~\ref{sec:logistic_comparison} is a simpler example, whose shard draws are a single Random-Walk Metropolis run per shard, without a target effective sample size. There, all methods use the same shard draws, and each CMC replica aggregates them after permuting the draws of each shard at random, which is included in its runtime. Throughout the experiments, we set the target rejection-sampling acceptance probability to $a_0 = 0.01$ and the resampling threshold to $\tau = 0.5$.

\subsection{Comparison on Bayesian logistic regression}
\label{sec:logistic_comparison}

We compare ABF with BF and CMC on a simulated Bayesian logistic regression. For observation $i$, let $x_i = (1,z_i^\top)^\top \in \mathbb{R}^6$, where the five entries of $z_i$ are independent standard Gaussian random variables. The response is generated according to
\begin{equation}\label{eq:logit_model}
    p_i = \frac{\exp(x_i^\top \beta)}{1 + \exp(x_i^\top \beta)} \, , \qquad Y_i \mid x_i, \beta \sim \operatorname{Bernoulli}(p_i) \, ,
\end{equation}
where $\beta = (\beta_1,\ldots,\beta_6)^\top$, $\beta_j = (-1)^j / \sqrt{6}$, and the prior on $\beta_j$ is $\propto 1, \forall j$. The data are distributed over $C = 4$ shards, with $500$ observations per shard and $m = 2{,}000$ observations in total. For each shard, a Random-Walk Metropolis sampler provides $N = 10{,}000$ subposterior draws, and the reference sample consists of $200{,}000$ draws from a Random-Walk Metropolis run on the full data. With a flat prior, the first three derivatives of each $\log f_c$ are bounded, so that $\phi^{(c)}$ is bounded below and Lipschitz. Since ABF runs here on a uniform mesh, with a single fusion and no preconditioning, this example satisfies the assumptions of Theorem~\ref{th:bound_strong}.

To study the effect of the step size, we run ABF and BF on uniform meshes with $\Delta = T/n$, where $n$ is chosen by the bisection of Section~\ref{sec:adaptive_step_size} for several target $\zeta_\star$. Both methods use the rejection sampler of Section~\ref{sec:coalescence_time}, which calibrates the coalescence time $T = 0.0353$ from the shard draws alone and provides the initial particles.
We use $\zeta_\star \in \{0.1, 0.2, 0.3, 0.4\}$ for BF and $\zeta_\star \in \{0.1, 0.2, 0.5, 0.8, 0.9, 0.99\}$ for ABF. Appendix~\ref{app:logit_details} reports the resulting meshes and the accuracy and runtime of each configuration.

\begin{figure}[!htb]
    \centering
    \includegraphics[width=1\linewidth]{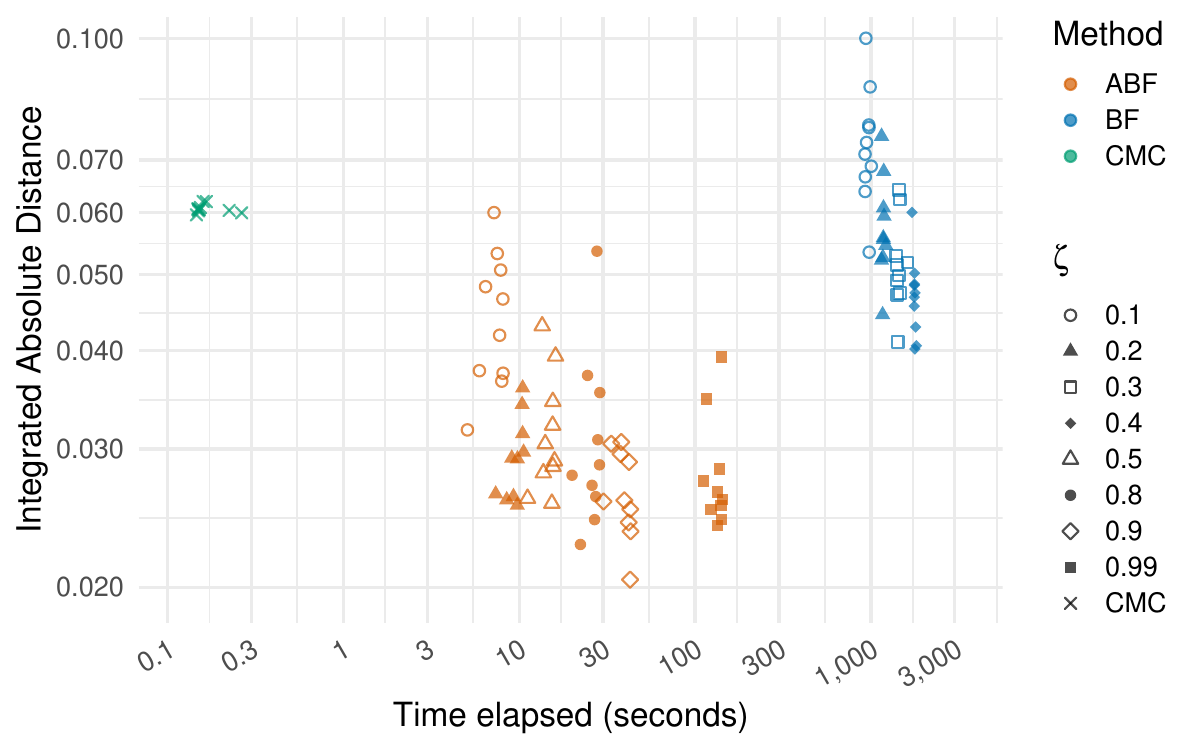}
    \caption{IAD against runtime per replica for the $6$-dimensional Bayesian logistic regression. Each point is one of ten replicas; colours indicate the method and shapes the target $\zeta_\star$. Both axes are logarithmic, and points nearer the lower-left corner are preferable.}
    \label{fig:logistic_iad_compare}
\end{figure}
Figure~\ref{fig:logistic_iad_compare} shows that, on average, every ABF configuration is both more accurate and faster than every BF configuration. 
The IAD of BF decreases as $\zeta_\star$ increases and the mesh is refined. 
As the mesh becomes finer, the weights become less variable and hence the effective particle population becomes larger.

The IAD of ABF is largest on the coarsest mesh, obtained with $\zeta_\star = 0.1$, decreases by about a third at $\zeta_\star = 0.2$, and then plateaus. We attribute this plateau to errors that do not depend on the mesh: the Monte Carlo error of a run with $N = 10{,}000$ particles; the error of the shard draws, which all replicas share and which come from Random-Walk Metropolis chains; and the error of the reference sample and of the kernel density estimates in the IAD. Once the discretisation bias of Section~\ref{sec:controlling_error} falls below these errors, a finer mesh only increases the cost.

These results reflect the different ways in which the two methods treat the path integral in the fusion weights. BF replaces it with an unbiased random estimator, so that the method is exact and its particle approximation converges to the posterior as $N \to \infty$, but the estimator adds variance to the weights. ABF replaces it with a deterministic quadrature rule, which introduces a bias but adds no randomness of its own. As a consequence, the ABF weights are much less variable than the BF weights, and the effective particle population is much larger for the same number of particles; Appendix~\ref{app:logit_details} compares the two on a common mesh.

CMC is the fastest method, but ABF is more accurate at every $\zeta_\star$. CMC does well on this example because the subposteriors and the target are all close to Gaussian.
The next examples show that the CMC approximation departs significantly from the target when the target is far from Gaussian.

\subsection{Rosenbrock distribution, \texorpdfstring{$5d$}{5d}-``Banana''} \label{sec:rosenbrock}

The Rosenbrock distribution is a common benchmark for Monte Carlo algorithms because its nonlinear dependence produces strongly curved, ``banana''-shaped contours \citep{pagani2022rosenbrock}. We use the five-dimensional hybrid Rosenbrock density
\begin{equation}
    f(\theta) \propto \exp \big \{
    - a ( \theta_1 - \mu )^2 - b_{12} ( \theta_2 - \theta_1^2 )^2 - b_{13} ( \theta_3 - \theta_2^2 )^2 - b_{22} ( \theta_4 - \theta_1^2 )^2 - b_{23} ( \theta_5 - \theta_4^2 )^2
    \big \} \, ,
\end{equation}
where $\theta = (\theta_1, \ldots, \theta_5)^\top$. 
We set $\mu = 0$, $a = 2$, and $b_{12} = b_{13} = b_{22} = b_{23} = 4$. The target is split into $C = 4$ subposteriors of the same form, with $a = 0.5$ and $b_{12} = b_{13} = b_{22} = b_{23} = 1$, so that their product is $f$. To compare the tails of the distribution reliably, we draw $10^6$ independent samples from each subposterior, and use $10^6$ independent samples from $f$ as the reference. ABF uses divide-and-conquer fusion with full preconditioning (Section~\ref{sec:bridge_transition_summary}), and the adaptive step size of Section~\ref{sec:adaptive_step_size} with $\zeta_\star = 0.9$.

\begin{figure}[!htb]
    \centering
    \includegraphics[width=1\linewidth]{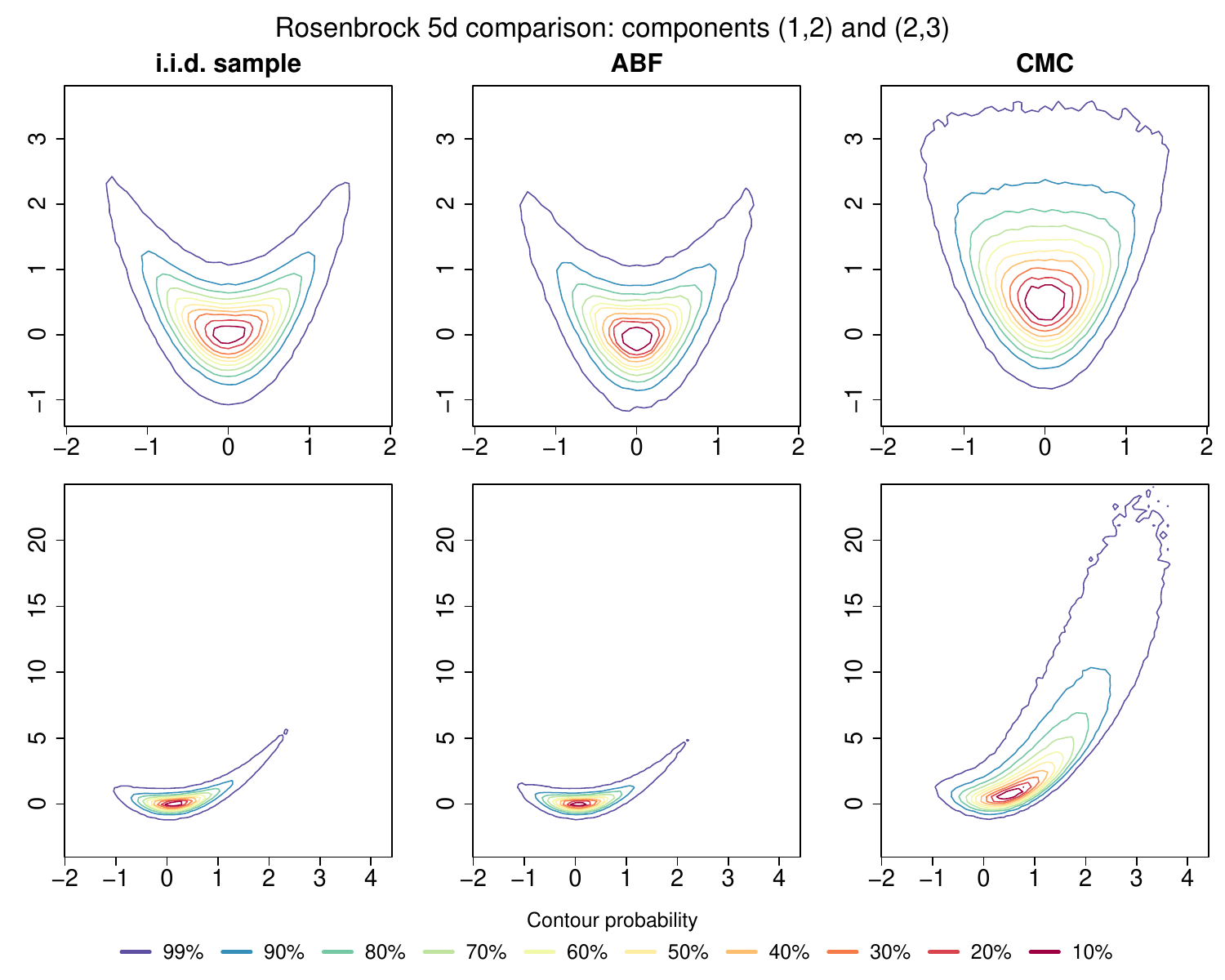}
    \caption{Two-dimensional marginals of components $(1,2)$, first row, and $(2,3)$, second row, of the five-dimensional hybrid Rosenbrock distribution, for an i.i.d.\ sample, ABF and CMC. ABF reproduces the curved dependence and the tails of the target, whereas CMC loses the curvature and shifts the mode.}
    \label{fig:rosen5d_cmc_compare}
\end{figure}

Figure~\ref{fig:rosen5d_cmc_compare} shows that ABF preserves the main features of the target: the curved dependence structure of the Rosenbrock distribution in both marginals, the position of the mode, and the length and position of the tails. The only visible difference is that the outer contour of ABF lies slightly inside that of the i.i.d.\ sample.

In contrast, CMC fails to capture the curved structure of the target, producing an upward shift of the mode and mass, and substantially greater dispersion for both pairs of components examined.

\subsection{Logistic regression on the nycflights13 data} \label{sec:nycflights}

We fit the logistic regression of \citet[Section 5.4.2]{chan2023dcf} to the nycflights13 data \citep{wickham2021nycflights13}, which record the flights that departed from New York City in 2013. 
After removing the flights with missing values, $m = 327{,}346$ flights remain. The model is given by Equation~\eqref{eq:logit_model} with $\beta \in \mathbb{R}^{d}$, $d=21$, and the prior is $\beta \sim \mathcal{N}(0, I_{d})$, so that the fractional prior of each shard is $\mathcal{N}(0, C I_{d})$, as in \citet{chan2023dcf}. 

The flights are shuffled and divided among $C = 64$ shards, so that each shard contains $5{,}114$ or $5{,}115$ flights. In $41$ shards no flight is operated by the carrier OO, and in one shard no flight is operated by the carrier HA, so that in these shards the coefficient of that carrier is determined by the fractional prior alone, resulting in noticeably different subposteriors. For each shard, Stan \citep{carpenter2017} provides the subposterior draws, and the reference sample consists of $256{,}000$ draws from $64$ Stan chains run on the full data; Appendix~\ref{app:nycflights_details} gives the settings.
ABF uses divide-and-conquer fusion on a binary tree with full preconditioning (Section~\ref{sec:bridge_transition_summary}), and the adaptive step size of Section~\ref{sec:adaptive_step_size}. Following the implementation in \citet{chan2023dcf}, the $m$ flights take only $461$ distinct values of $(x_i, Y_i)$ pairs, so the log-likelihood of each shard and its derivatives are evaluated as sums over these values, weighted by the number of flights of the shard that take each value, which leaves the target unchanged. We run ABF with $\zeta_\star \in \{0.5, 0.6, 0.7, 0.8, 0.9, 0.99\}$ and $N \in \{1{,}000, 10{,}000, 30{,}000\}$ particles, and CMC with the same values of $N$ as the number of draws per shard. The aggregation of CMC runs on a single core, as in \citet{chan2023dcf}. Each fusion uses $64$ threads, and the runtime of an ABF replica is the wall time of the whole tree. This matches the original experiment of \citet{chan2023dcf}, which was run on a $64$-core virtual machine \citep[Appendix A]{chan2022thesis}.

\begin{figure}[!htb]
    \centering
    \includegraphics[width=1\linewidth]{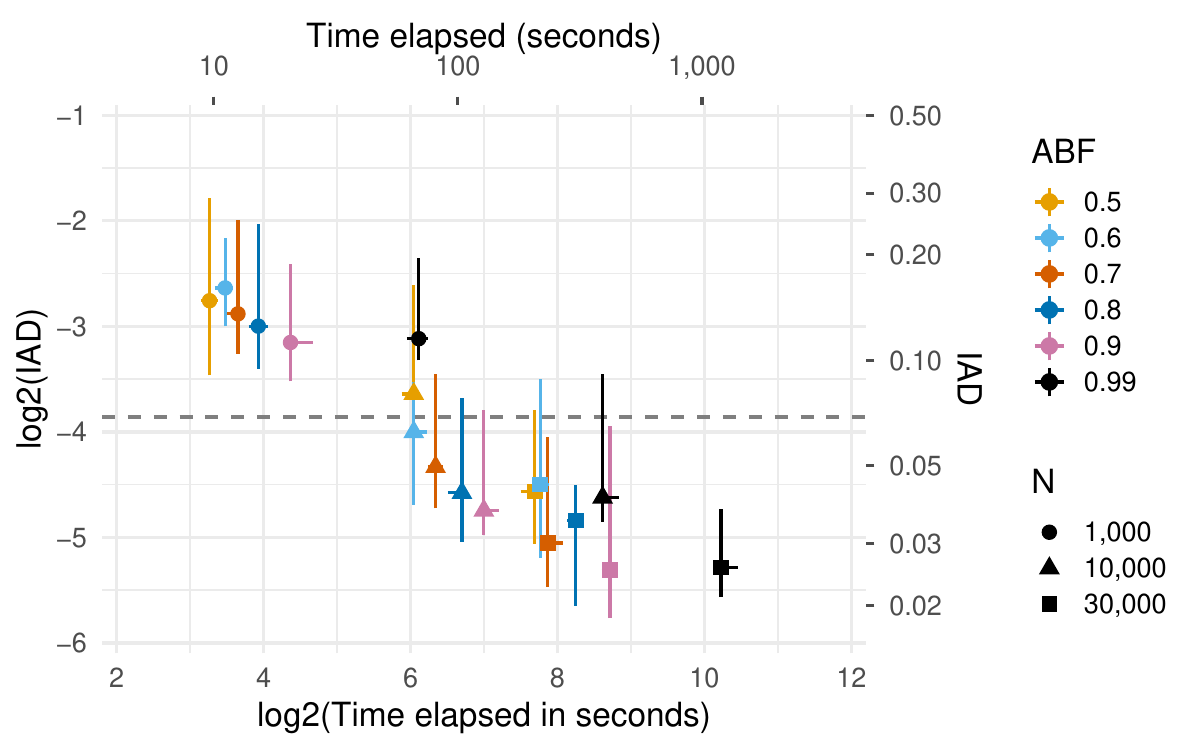}
    \caption{IAD against runtime of ABF for the logistic regression on the nycflights13 data, on the $\log_2$ scales of \citet[Figure 7]{chan2023dcf}. Colours indicate the target $\zeta_\star$, and marker shapes the number of particles $N$. The markers indicate the median runtime and the median IAD over ten replicas, and the bars extend from the minimum to the maximum. As in \citet[Figure 7]{chan2023dcf}, the dashed horizontal line marks the smallest mean IAD of CMC amongst all $N$; Figure~\ref{fig:nycflights_iad_compare_cmc} in Appendix~\ref{app:nycflights_details} also shows the CMC markers. Points nearer the lower-left corner are preferable.}
    \label{fig:nycflights_iad_compare}
\end{figure}

Figure~\ref{fig:nycflights_iad_compare} shows that the IAD of ABF decreases with $N$: for $N = 1{,}000$ the median IAD lies above the dashed line, which marks the smallest mean IAD of CMC amongst all $N$, while for $N = 30{,}000$ it lies between $0.025$ and $0.045$, depending on $\zeta_\star$, below the dashed line. This accuracy comes at a higher cost: for $N = 30{,}000$, the median runtimes are between $3$ and $7$ minutes for $\zeta_\star \leq 0.9$, and $20$ minutes for $\zeta_\star = 0.99$. CMC is much faster, with median runtimes between $0.12$ and $3.4$ seconds, but its median IAD levels off at about $0.07$ as $N$ increases (Appendix~\ref{app:nycflights_details}).

These results are consistent with Figure~7 of \citet{chan2023dcf}, which uses the same data and $C = 64$: the smallest mean IAD of CMC amongst all $N$ that they report agrees with ours, marked by the dashed line in Figure~\ref{fig:nycflights_iad_compare}. In their figure, CMC is both the most accurate and the fastest of the approximate methods, ahead of the kernel density and Weierstrass methods, so ABF with $N = 30{,}000$ is also more accurate than these. Their most accurate run of Divide-and-Conquer Bayesian Fusion reaches an IAD of about $0.025$ after about $18$ hours, while ABF with $\zeta_\star = 0.9$ and $N = 30{,}000$ reaches the same median IAD in about $7$ minutes, about two orders of magnitude faster. The hardware and the implementations differ, so this indicates only the order of magnitude of the difference.

\FloatBarrier

\subsection{Proportional hazards model} \label{sec:ph}

We run ABF on a proportional hazards model estimated on the \texttt{hds} data set of the R package \texttt{dynamichazard}, which contains records of the hard drives in Backblaze data centres \citep{backblaze-data, christoffersen2021, hds-doc}. 
We chose this data set because it combines left truncation and right censoring with group effects: drives enter the data at different ages, most have not failed by the end of the observation period, and the hazard varies between drive families. Together, these features produce a posterior with some strongly non-Gaussian shapes.

We keep one record per drive and only the $G = 4$ drive families, out of $79$, with the most drives. With only four families, the spread of the family effects is weakly identified. All drives of a family also have the same capacity, so the effect of capacity is identified only through the prior on the family effects. We chose this design deliberately. Appendix~\ref{sec:ph_model_prior} describes the resulting data. 

We fit a Weibull proportional hazards model with a random effect for each family. The hazard of drive $i$ at age $t$, measured in years of operation, is
\begin{equation}\label{eq:ph_hazard}
    h_i(t) = \gamma t^{\gamma - 1} \exp(\beta_0 + \beta_1 x_i + u_{g_i}) \, , \qquad u_g \mid \omega \overset{\mathrm{iid}}{\sim} \mathcal{N}(0, \omega^2) \, ,
\end{equation}
where $\gamma > 0$ is the Weibull shape, $x_i$ is the standardised log capacity of drive $i$, $g_i \in \{1, \ldots, G\}$ is its family, and $\omega$ is the standard deviation of the family effects. The lifetimes are left-truncated and right-censored. ABF and CMC operate on the $d = 8$ unconstrained coordinates $q = (\beta_0, \beta_1, \log \gamma, \log \omega, u_1, \ldots, u_4)^\top$. Appendix~\ref{sec:ph_model_prior} gives the likelihood, the prior, and the fractional prior of each shard.

The drives are divided among $C = 16$ shards, with the drives of each family and failure status spread as evenly as possible. 
For each shard, Stan \citep{carpenter2017} provides the subposterior draws, and the reference sample consists of $256{,}000$ draws from $64$ Stan chains run on the full data; Appendix~\ref{sec:ph_model_prior} gives the settings.
ABF uses divide-and-conquer fusion on a binary tree with $15$ fusions, full preconditioning (Section~\ref{sec:bridge_transition_summary}), the adaptive step size of Section~\ref{sec:adaptive_step_size} with $\zeta_\star \in \{0.95, 0.99\}$, and $N \in \{1{,}000, 10{,}000, 30{,}000\}$ particles. CMC uses the same values of $N$ as the number of draws per shard. Each ABF fusion uses $64$ threads, and the runtime of an ABF replica is the wall time of the whole tree. The IAD is computed on the parameters of Equation~\eqref{eq:ph_hazard}, $(\beta_0, \beta_1, \gamma, \omega, u_1, \ldots, u_4)$.

Family $4$ has only $17$ failures, about one per shard, so the data determine $u_4$ weakly, and its spread is set by $\omega$. This produces a funnel between $u_4$ and $\log \omega$, whose neck lies at values of $u_4$ and $\omega$ close to $0$. As $\omega \to 0$ with the family effects close to $0$, $\phi^{(c)} \to - \infty$, so $\phi^{(c)}$ is not bounded from below (see Appendix~\ref{sec:ph_model_prior}).

A lower bound $\Phi^{(c)}$ is assumed only in the error analysis of Section~\ref{sec:controlling_error} (Assumption~\ref{as:phi_min}), whose error bounds therefore do not apply to this model. ABF does not need the value of $\Phi^{(c)}$, because $\Phi^{(c)}$ only rescales the weights and cancels when they are normalised (see Section~\ref{sec:ABF}). Without a lower bound, however, the ABF weights are not bounded, and some care may be required when interpreting the results below. BF, in contrast, relies on lower bounds on $\phi^{(c)}$ in its layered Brownian bridge construction, and therefore cannot be applied to this model.

\begin{figure}[!htb]
    \centering
    \includegraphics[width=1\linewidth]{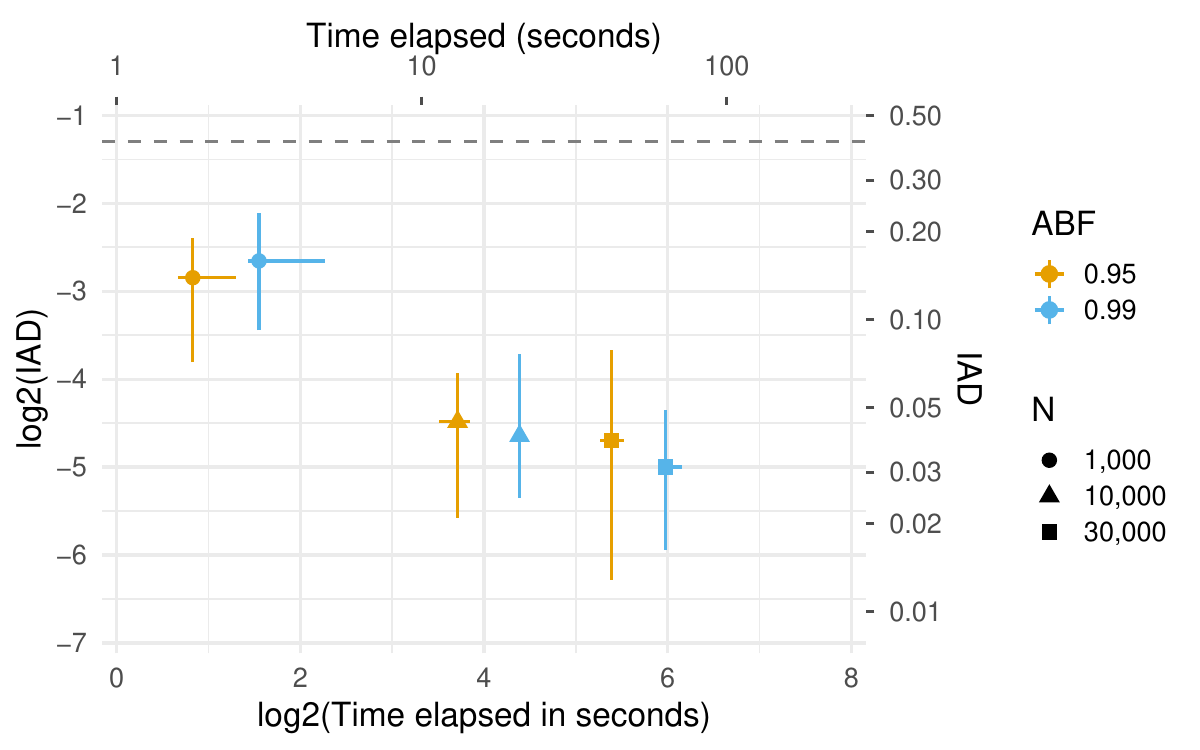}
    \caption{IAD against runtime of ABF for the proportional hazards model, for $\zeta_\star \in \{0.95, 0.99\}$, on the $\log_2$ scales of \citet[Figure 7]{chan2023dcf}. Colours indicate the target $\zeta_\star$, and marker shapes the number of particles $N$. The markers indicate the median runtime and the median IAD over ten replicas, and the bars extend from the minimum to the maximum. The dashed horizontal line marks the smallest mean IAD of CMC amongst all $N$. Points nearer the lower-left corner are preferable.}
    \label{fig:ph_iad}
\end{figure}

Figure~\ref{fig:ph_iad} shows the expected behaviour. For both values of $\zeta_\star$, the median IAD of ABF decreases as $N$ increases, because the Monte Carlo error decreases.
For $N = 1{,}000$, the Monte Carlo error dominates, and the two values of $\zeta_\star$ give similar errors. For $N = 30{,}000$, the median IAD is $0.039$ for $\zeta_\star = 0.95$ and $0.031$ for $\zeta_\star = 0.99$, close to the IAD of about $0.03$ that ABF reaches on the logistic regressions of Sections~\ref{sec:logistic_comparison} and~\ref{sec:nycflights}, although this target is much further from Gaussian. The IAD of CMC, in contrast, lies between $0.40$ and $0.43$ for every $N$, much worse than on the logistic regressions, as expected.
While CMC is very fast, at most $0.5$ seconds, the price of the accuracy of ABF is modest: for $N = 30{,}000$, the median runtime of ABF is $42$ seconds for $\zeta_\star = 0.95$ and $63$ seconds for $\zeta_\star = 0.99$.

The values of $\zeta_\star$ used above are higher than in the previous examples. With $\zeta_\star \in \{0.85, 0.9\}$, which we exclude from Figure~\ref{fig:ph_iad}, the longer steps caused a bias towards the neck of the funnel. There, $\phi^{(c)}$ behaves like $- \omega^{-2}$, and the trapezium rule, which evaluates $\phi^{(c)}$ only at the grid points, gives a very large weight to a particle that lands at a grid point deep in the neck, so that a few such particles dominate the sample and pull $\omega$ towards $0$ (Appendix~\ref{sec:ph_model_prior}). With $\zeta_\star = 0.85$ and $N = 30{,}000$, rerunning every fusion on a fixed grid obtained by splitting each step of an adaptive pilot run into four removes the error (Appendix~\ref{sec:ph_model_prior}), which confirms that it comes from the time discretisation: the criterion of Section~\ref{sec:adaptive_step_size} controls the variance of the weights over each step, but not this discretisation error. 

The IAD compares only the one-dimensional marginals. Figure~\ref{fig:ph_pairs} also compares two-dimensional marginals, one of which involves a quantity of practical interest calculated from the posterior sample: the median lifetime $m_0$ of a drive with $x_i = 0$ and $u_{g_i} = 0$. By Equation~\eqref{eq:ph_hazard}, the survival function of such a drive is $\exp(- e^{\beta_0} t^{\gamma})$, and setting it equal to $1/2$ at $t = m_0$ gives
$m_0 = ( \log 2 / e^{\beta_0} )^{1 / \gamma}$.

\begin{figure}[!htb]
    \centering
    \includegraphics[width=0.9\linewidth]{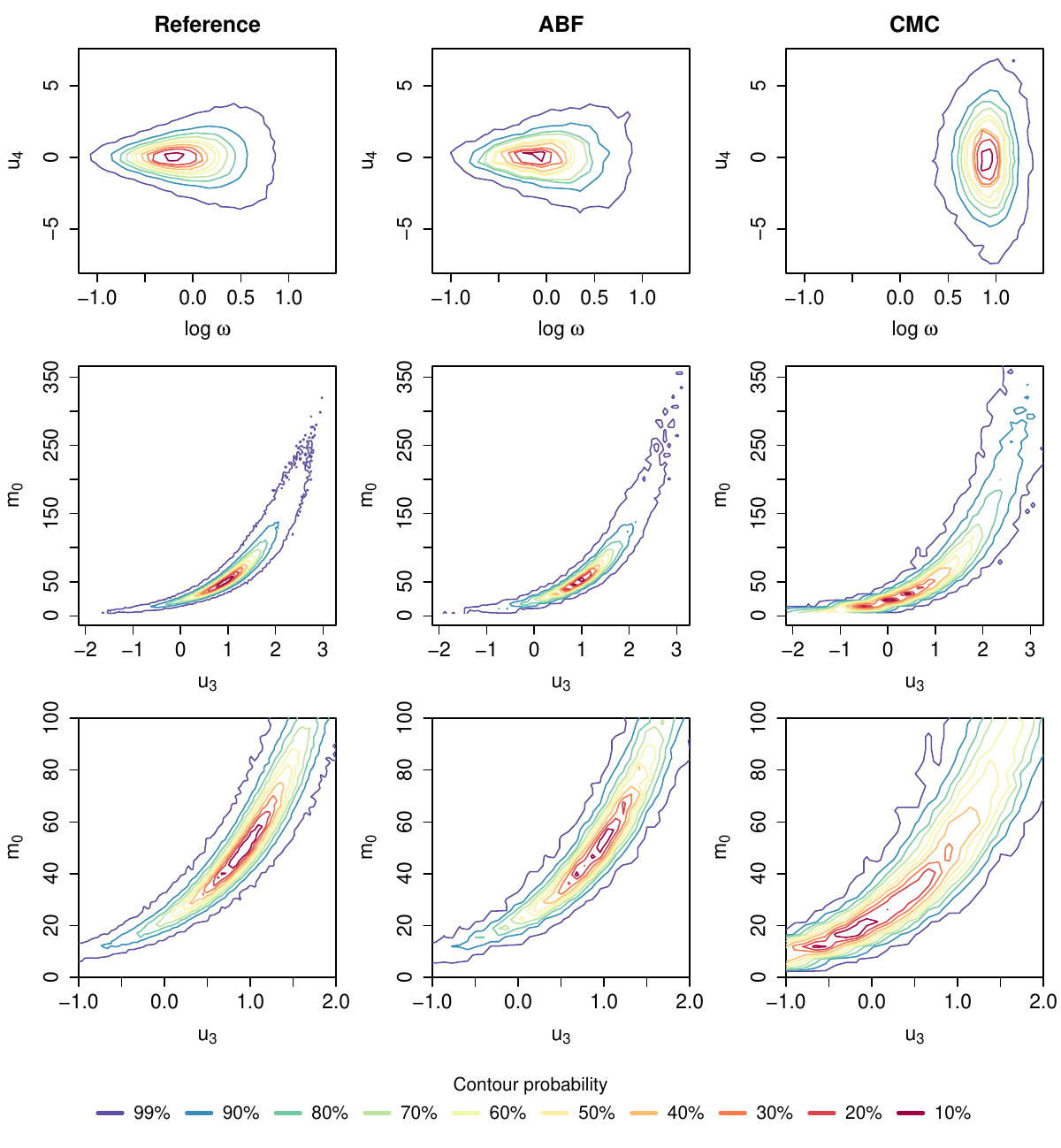}
    \caption{Two-dimensional marginals of the proportional hazards model for the reference sample, ABF and CMC. The first row shows the funnel between $\log \omega$ and $u_4$, a non-Gaussian geometry arising from real data: ABF reproduces its shape, whereas CMC misplaces the mode of $\log \omega$ and overstates the spread of $u_4$. The second row shows the same comparison for a quantity of practical interest, the median lifetime $m_0$,
    against $u_3$, and the third row magnifies the second around the mode. In the second row, the outer contours of CMC extend beyond the plotted region.}
    \label{fig:ph_pairs}
\end{figure}

Figure~\ref{fig:ph_pairs} compares the reference sample with the ABF replica for $\zeta_\star = 0.99$ and $N = 30{,}000$ whose IAD, $0.030$, is closest to the median, resampled according to its weights, and with the CMC replica with $N = 30{,}000$ and the same index. 
On the first row of the plot, the pair $(\log \omega, u_4)$ shows the funnel described above, while on the second row, the joint distribution of $(u_3, m_0)$ lies along a curved ridge with a long upper tail. ABF preserves both $2d$-marginals: the location and spread of $\log \omega$ and $u_4$, and the curvature of the ridge of $(u_3, m_0)$, the position of its mode, which the third row of Figure~\ref{fig:ph_pairs} shows magnified, and its upper tail all closely match the reference.

The subposteriors of $\log \omega$ and of $u_4$ are not Gaussian: each shard contains about one failure of family $4$, so each subposterior has the funnel geometry between $u_4$ and $\log \omega$. CMC misrepresents both the location and the uncertainty: it shifts the mode of $\log \omega$ well above that of the reference, so that their $95\%$ intervals barely overlap, and makes the marginal of $u_4$ more than twice as wide. For $(u_3, m_0)$, CMC moves the mode of $m_0$ to much shorter lifetimes, while its upper tail extends much further. 

\section{Conclusion} \label{sec:discussion}

In this work we introduced Approximate Bayesian Fusion (ABF), which adopts the sequential Monte Carlo structure of Bayesian Fusion (BF) but replaces the unbiased estimate of the path integral in the fusion weights, obtained through the layered Brownian bridge construction, by an inexpensive trapezium approximation. ABF therefore requires no model-specific bounds on $\phi^{(c)}$, and the weight update of each step only requires evaluating $\phi^{(c)}$ at the two ends of the step. The approximation introduces a bias, which we showed to be at most of order $\sqrt{\Delta}$ in total variation distance when $\phi^{(c)}$ is bounded below and Lipschitz (Theorem~\ref{th:bound_strong}).

We also implemented ABF within the divide-and-conquer framework of \citet{chan2023dcf}, with preconditioned Brownian bridges and a step size chosen online by targeting the conditional effective sample size of each transition.

In our experiments, ABF substantially outperforms BF in accuracy as well as speed.
On the $21$-dimensional logistic regression of \citet{chan2023dcf}, with more than $300{,}000$ observations, ABF reached the IAD reported for Divide-and-Conquer Bayesian Fusion by \citet{chan2023dcf} about two orders of magnitude faster. On the proportional hazards model, ABF reproduced features of the posterior such as the funnel shapes that CMC misses, and reached an IAD comparable to that on the logistic regressions, whose posteriors are closer to Gaussian. Notably, the proportional hazards model admits no lower bound on $\phi^{(c)}$, so BF cannot be applied. 
This suggests that ABF may be applicable to a wider class of models than the theory covers, while maintaining high accuracy and low runtime, provided that the steps are short enough.

Two extensions could make ABF more efficient and fully automatic. First, the step-size criterion controls the variance of the weights over each step, but not the discretisation error. A criterion that also accounts for the variation of $\phi^{(c)}$ along the paths could prevent this bias. Second, ABF requires only the gradient of $\log f_c$ and the trace of $\Lambda_c$ multiplied by its Hessian, which enter $\phi^{(c)}$ through Equation~\eqref{eq:phi_precond}. In our implementation, these were coded by hand for each model, but automatic differentiation could provide them for any model whose log density is written in a differentiable language, such as Stan \citep{carpenter2017}. ABF could then be applied to a new model with no model-specific work beyond that of sampling the subposteriors.

\section*{Acknowledgements}

The authors would like to thank Shenggang Hu, Alastair Crossley, and Murray Pollock for their encouragement and fruitful discussions.
FP, AMJ, and GOR acknowledge financial support from UKRI grant EP/Y014650/1 as part of the ERC Synergy project OCEAN.
Computing facilities were provided by the Scientific Computing Research Technology Platform of the University of Warwick.

\bibliography{bib}

\appendix

\newpage
\FloatBarrier

\section{Tuning of the coalescence time via rejection sampling}
\label{app:rejection_acceptance_tuning}

In Section~\ref{sec:coalescence_time}, a tuple $\theta^{(1:C)}$, formed by drawing one particle independently from each subposterior, is accepted with probability $\exp \{ - \mathcal{E} / T \}$, where $\mathcal{E}$ is the energy of the tuple, and $T$ is chosen as the root $\hat T$ of $\widehat{a}(T) = a_0$, where $\widehat{a}(T)$ is the average acceptance probability of $M$ proposed tuples with energies $\mathcal{E}_1, \ldots, \mathcal{E}_M$. Algorithms~\ref{alg:tune-coalescence-time} and~\ref{alg:rejection-initialisation} summarise the calibration and the initialisation. They are written without preconditioning. With preconditioning, the mean $\bar{\theta}$ is replaced by the precision-weighted centre $\tilde{\theta}$, and the energy by $\mathcal{E} = \frac{1}{2} \sum_{c=1}^C (\theta^{(c)} - \tilde{\theta})^\top \Lambda_c^{-1} (\theta^{(c)} - \tilde{\theta})$, as in Algorithm~\ref{alg:approx-bayesian-fusion} of Appendix~\ref{app:complete_algo}. When the particles of a shard carry weights, as in the divide-and-conquer tree, each component of a proposal is drawn according to these weights.

We find $\hat T$ by bisection. The initial bracket needs care, because the scale of $T$ depends on the model, the dimension $d$, and the number of shards $C$, and can vary by orders of magnitude across examples. As the lower endpoint we take a small positive value, and as the upper endpoint the root obtained when every energy is replaced by their mean $\bar{\mathcal{E}} = M^{-1} \sum_{q=1}^{M} \mathcal{E}_q$, that is, the solution of $\exp \{ - \bar{\mathcal{E}} / T \} = a_0$,
\begin{equation}
    \overline{T} = \frac{\bar{\mathcal{E}}}{| \log a_0 |} \, .
\end{equation}
Proposition~\ref{prop:rejection_jensen} below shows that $\overline{T}$ lies above $\hat T$ whenever the energies are not all equal, and the two coincide when they are, so this interval always brackets the root. Our implementation doubles the upper endpoint if it does not, but only as a safeguard against numerical error.

\begin{algorithm}[t]
\caption{Calibration of the coalescence time}
\label{alg:tune-coalescence-time}
\begin{algorithmic}[1]
\Require Shard particle collections $\{ \theta_i^{(c)} \}_{i = 1, \ldots, N}^{c = 1, \ldots, C}$, calibration sample size $M$, target acceptance probability $a_0$.
\Ensure Calibrated coalescence time $\hat T$.
\For{$q = 1, \ldots, M$}
    \State Draw $\theta_q^{(c)}$ independently from the empirical particle distribution of shard $c$, for $c = 1, \ldots, C$.
    \State Set $\bar{\theta}_q \gets C^{-1} \sum_{c=1}^C \theta_q^{(c)}$.
    \State Set $\mathcal{E}_q \gets \frac{1}{2} \sum_{c=1}^C \| \theta_q^{(c)} - \bar{\theta}_q \|^2$.
\EndFor
\State Define $\widehat{a}(T) \gets M^{-1} \sum_{q=1}^M \exp \{ - \mathcal{E}_q / T \}$.
\State Find $\hat T$ by bracketing and bisection such that $\widehat{a}(\hat T) = a_0$.
\State \Return $\hat T$.
\end{algorithmic}
\end{algorithm}

\begin{algorithm}[t]
\caption{Rejection sampler for ABF initialisation}
\label{alg:rejection-initialisation}
\begin{algorithmic}[1]
\Require Shard particle collections $\{ \theta_i^{(c)} \}_{i = 1, \ldots, N}^{c = 1, \ldots, C}$, coalescence time $\hat T$, required number of accepted tuples $N$.
\Ensure Accepted initial particles $\{ \theta_{t_0,i}^{(1:C)} \}_{i=1}^N$ with $t_0 = 0$.
\State Set $i \gets 1$.
\While{$i \le N$}
    \State Draw a proposal $\theta^{(c)}$ independently from the empirical particle distribution of shard $c$, for $c = 1, \ldots, C$.
    \State Set $\bar{\theta} \gets C^{-1} \sum_{c=1}^C \theta^{(c)}$.
    \State Set $\mathcal{E} \gets \frac{1}{2} \sum_{c=1}^C \| \theta^{(c)} - \bar{\theta} \|^2$.
    \If{the proposal is accepted, with probability $\exp \{ - \mathcal{E} / \hat T \}$}
        \State Set $\theta_{t_0,i}^{(1:C)} \gets \theta^{(1:C)}$.
        \State Set $i \gets i + 1$.
    \EndIf
\EndWhile
\State \Return $\{ \theta_{t_0,i}^{(1:C)} \}_{i=1}^N$.
\end{algorithmic}
\end{algorithm}

\begin{proposition} \label{prop:rejection_jensen}
    Let $\mathcal{E}_1, \ldots, \mathcal{E}_M$ be non-negative energies that are not all equal, and let $\bar{\mathcal{E}} = M^{-1} \sum_{q=1}^{M} \mathcal{E}_q$. For $T > 0$, define
    \begin{equation}
        \widehat{a}(T) = \frac{1}{M} \sum_{q=1}^{M} \exp \{ - \mathcal{E}_q / T \} \, , \qquad \bar{a}(T) = \exp \{ - \bar{\mathcal{E}} / T \} \, .
    \end{equation}
    Then
    \begin{equation}
        \widehat{a}(T) > \bar{a}(T) \qquad \text{for all } T > 0 \, .
    \end{equation}
    Moreover, if $a_0 \in (0,1)$ and the equations $\widehat{a}(T) = a_0$ and $\bar{a}(T) = a_0$ have positive solutions $\hat T$ and $\overline{T}$, respectively, then $\hat T < \overline{T}$.
\end{proposition}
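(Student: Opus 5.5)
The first claim follows from Jensen's inequality applied to the strictly convex function $x \mapsto \exp\{-x/T\}$ for fixed $T>0$. Consider the empirical distribution placing mass $1/M$ on each of $\mathcal{E}_1,\ldots,\mathcal{E}_M$. Then
\begin{equation}
\widehat{a}(T) = \frac{1}{M}\sum_{q=1}^M \exp\{-\mathcal{E}_q/T\} \ge \exp\Big\{-\frac{1}{M}\sum_{q=1}^M \mathcal{E}_q / T\Big\} = \bar{a}(T).
\end{equation}
Strict convexity turns this into a strict inequality unless the empirical distribution is a point mass. A point mass would mean all $\mathcal{E}_q$ are equal, which is excluded by hypothesis. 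To avoid appealing to the equality case of Jensen, I would give the elementary version instead: by the tangent line inequality $e^{y} \ge e^{x}(1 + y - x)$, with equality only at $y = x$, take $x = -\bar{\mathcal{E}}/T$ and $y = -\mathcal{E}_q/T$. Averaging over $q$, the linear terms cancel because the $\mathcal{E}_q$ average to $\bar{\mathcal{E}}$. The inequality is strict for at least one $q$, since some $\mathcal{E}_q \neq \bar{\mathcal{E}}$.

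For the second claim, I would first establish that both functions are strictly increasing in $T$. Each map $T \mapsto \exp\{-\mathcal{E}_q/T\}$ is non-decreasing, and it is strictly increasing whenever $\mathcal{E}_q > 0$. Because the energies are non-negative and not all equal, at least one of them is positive. This makes $\widehat{a}$ strictly increasing. For $\bar{a}$, note that $\bar{\mathcal{E}} > 0$ for the same reason, so $\bar{a}$ is strictly increasing as well.

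Now let $\overline{T}$ satisfy $\bar{a}(\overline{T}) = a_0$. By the first part, $\widehat{a}(\overline{T}) > \bar{a}(\overline{T}) = a_0 = \widehat{a}(\hat T)$. Since $\widehat{a}$ is strictly increasing, this forces $\hat T < \overline{T}$. Strict monotonicity also shows that $\hat T$ is unique, so the claim is unambiguous.

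The only delicate point is the strict monotonicity of $\widehat{a}$. It needs at least one positive energy, and that comes from the hypotheses of non-negativity and non-equality. The rest is routine.
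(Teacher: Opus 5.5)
Your proof is correct and follows the paper's argument: strict Jensen for the convex map $e \mapsto \exp\{-e/T\}$, strict monotonicity of $\widehat{a}$ from the existence of a positive energy, and the comparison $\widehat{a}(\overline{T}) > a_0 = \widehat{a}(\hat T)$. The differences are cosmetic. You justify the strict inequality with the tangent-line bound rather than the equality case of Jensen, and you argue monotonicity term by term rather than through $\widehat{a}'(T) > 0$; the strict monotonicity of $\bar{a}$ that you also prove is not needed.
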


\begin{proof}
    Fix $T > 0$ and define the function $g_T(e) = \exp \{ - e / T \}$ of an energy $e \ge 0$. Since
    \begin{equation}
        g_T''(e) = T^{-2} \exp \{ - e / T \} > 0 \, ,
    \end{equation}
    $g_T$ is strictly convex. Therefore, by Jensen's inequality,
    \begin{align}
        \bar{a}(T)
        & = g_T \left ( \frac{1}{M} \sum_{q=1}^{M} \mathcal{E}_q \right ) \\
        & < \frac{1}{M} \sum_{q=1}^{M} g_T(\mathcal{E}_q) \\
        & = \widehat{a}(T) \, .
    \end{align}
    This proves the first claim.

    To compare the corresponding roots, note that $\widehat{a}$ is strictly increasing on $(0, \infty)$, since
    \begin{equation}
        \widehat{a}'(T) = \frac{1}{M} \sum_{q=1}^{M} \frac{\mathcal{E}_q}{T^2} \exp \{ - \mathcal{E}_q / T \} > 0 \, ,
    \end{equation}
    where strict positivity follows from the non-negativity of the energies and the assumption that they are not all equal. At $T = \overline{T}$, the first part gives
    \begin{equation}
        \widehat{a}(\overline{T}) > \bar{a}(\overline{T}) = a_0 = \widehat{a}(\hat T) \, .
    \end{equation}
    Since $\widehat{a}$ is strictly increasing, it follows that $\overline{T} > \hat T$.
\end{proof}

\section{Adaptive step-size selection}
\label{app:adaptive_step_size}

This appendix derives the step-size rule of Section~\ref{sec:adaptive_step_size} and describes the choice of the first step, the safeguards, and the offline variant. Throughout, $\Delta_j := t_j - t_{j-1}$ denotes the length of transition $j$, and $W_{j-1,i}$, for $i = 1, \ldots, N$, denote the normalised particle weights immediately before the transition, after any resampling performed at its start, so that $\sum_{i=1}^N W_{j-1,i} = 1$.

\subsection{Conditional effective sample size}

Write
\begin{equation} \label{eq:summed_potential}
    \varphi_{j,i} := \sum_{c=1}^C \phi^{(c)} \big ( \theta_{t_j,i}^{(c)} \big )
\end{equation}
for the summed potential of particle $i$ at time $t_j$. The trapezium rule of Algorithm~\ref{alg:abf-trapezium} adds
\begin{equation} \label{eq:adaptive_log_weight_main}
    \ell_{j,i} := \frac{\Delta_j}{2} \left ( \varphi_{j-1,i} + \varphi_{j,i} \right )
\end{equation}
to the negative log weight of particle $i$, so that transition $j$ reweights that particle by $\exp \{ - \ell_{j,i} \}$. We measure the degeneracy introduced by the transition through the CESS ratio of \citet{zhou2016toward},
\begin{equation} \label{eq:app_cess}
    \widehat{\zeta}_j := \frac{ \left ( \sum_{i=1}^N W_{j-1,i} \, \mathrm{e}^{- \ell_{j,i}} \right )^2 }{ \sum_{i=1}^N W_{j-1,i} \, \mathrm{e}^{- 2 \ell_{j,i}} } \, .
\end{equation}
This quantity lies in $(0,1]$, and $N \widehat{\zeta}_j$ is the conditional ESS on the usual particle scale. Unlike the ESS \citep{kong1994sequential} of the updated weights, Equation~\eqref{eq:app_cess} ignores any degeneracy inherited from earlier transitions and reports only what the current one contributed. Criteria of this form are standard for choosing the spacing between successive targets in adaptive SMC; see, for example, \cite{jasra2011inference, zhou2016toward} and \citet[Chapter~17]{chopin2020smc}.

\subsection{The local approximation} \label{app:local_approx}

Consider a particle drawn from the current weighted particle system, and let $\ell_j$, $\varphi_{j-1}$ and $\varphi_j$ denote its increment and its summed potentials at the two ends of the step, dropping the particle index $i$. Throughout this subsection, $\mathbb{E}$ and $\mathrm{Var}$ are taken over this draw and over the bridge noise of the transition. Let
\begin{equation}
    \zeta_j := \frac{ \mathbb{E} [ \mathrm{e}^{- \ell_j} ]^2 }{ \mathbb{E} [ \mathrm{e}^{- 2 \ell_j} ] }
\end{equation}
be the population version of Equation~\eqref{eq:app_cess}. Transition $j$ starts at $t_{j-1}$ and has length $\Delta_j$. We first show that $\ell_j$ is approximately proportional to $\Delta_j$, and then that $\log \zeta_j$ is approximately $- \mathrm{Var}(\ell_j)$. Together, these give a model for how $\zeta_j$ depends on $\Delta_j$, with a coefficient that depends on where the transition starts on the bridge.

Over a transition of length $\Delta_j$ the bridge moves a distance of order $\sqrt{\Delta_j}$. Under Assumption~\ref{as:phi_lip}, or under the Lipschitz-gradient Assumption~\ref{as:weak_error_regularity}, the potentials change by the same order, so
\begin{equation}
    \varphi_j = \varphi_{j-1} + \mathcal{O} \big ( \sqrt{\Delta_j} \big ) \, ,
\end{equation}
where the remainder is random, so the order is meant in mean square, i.e.\ $\mathbb{E} [ ( \varphi_j - \varphi_{j-1} )^2 ] = \mathcal{O}(\Delta_j)$. The two endpoint potentials in Equation~\eqref{eq:adaptive_log_weight_main} therefore agree to that order, and their average may be replaced by $\varphi_{j-1}$, giving
\begin{align}
    \ell_j
    & = \frac{\Delta_j}{2} (\varphi_{j-1} + \varphi_{j}) \\
    & = \frac{\Delta_j}{2} (\varphi_{j-1} + \varphi_{j-1} + \mathcal{O}(\sqrt{\Delta_j}) ) \\
    & = \varphi_{j-1} \Delta_j + \frac{\Delta_j}{2} \mathcal{O}(\sqrt{\Delta_j}) \\
    & = \varphi_{j-1} \Delta_j + \mathcal{O} \big ( \Delta_j^{3/2} \big ) \, . \label{eq:app_linear_log_weight}
\end{align}
Taking variances,
\begin{equation} \label{eq:app_variance_delta}
    \mathrm{Var} ( \ell_j ) = v_j \Delta_j^2 + \mathcal{O} \big ( \Delta_j^3 \big ) \, , \qquad v_j := \mathrm{Var}(\varphi_{j-1}) \, ,
\end{equation}
where, since $\varphi_{j-1}$ does not involve the bridge noise, $v_j$ is a variance over the weighted particle system before the step only.
The remainder in Equation~\eqref{eq:app_variance_delta} is of order $\Delta_j^3$ under Assumption~\ref{as:weak_error_regularity}: conditionally on the position of the particle at $t_{j-1}$, the Gaussian noise of the bridge averages out of the first-order term of $\varphi_j - \varphi_{j-1}$, so that only a drift of order $\Delta_j$ remains. Under Assumption~\ref{as:phi_lip} alone, a potential with a kink can leave a conditional mean of order $\sqrt{\Delta_j}$, and the remainder can then only be bounded by $\mathcal{O}(\Delta_j^{5/2})$.
The coefficient $v_j$ depends on the particle system at $t_{j-1}$, and hence on where the transition starts on the bridge, but not on $\Delta_j$. It measures the local difficulty of the bridge, and it is through $v_j$ that the appropriate step size changes from one transition to the next.

Next, $\zeta_j$ is unchanged by a common shift of the increments, because replacing $\ell_j$ with $\ell_j + c$ produces a factor $\mathrm{e}^{- 2 c}$ which appears in both the numerator and the denominator. Let $Y := \ell_j - \mathbb{E}[\ell_j]$ be the centred increment, so that $\mathbb{E}[Y] = 0$ and $\mathbb{E}[Y^2] = \mathrm{Var}(\ell_j)$. Taking $c = - \mathbb{E}[\ell_j]$,
\begin{equation}
    \zeta_j = \frac{ \mathbb{E} [ \mathrm{e}^{- Y} ]^2 }{ \mathbb{E} [ \mathrm{e}^{- 2 Y} ] } \, .
\end{equation}
By Equation~\eqref{eq:app_linear_log_weight}, $Y \approx \Delta_j ( \varphi_{j-1} - \mathbb{E}[\varphi_{j-1}] )$, which is $\Delta_j$ times a quantity that does not depend on $\Delta_j$ and has standard deviation $\sqrt{v_j}$, so $Y = \mathcal{O}(\Delta_j)$. Centring the increments therefore means that the expansion below requires only their spread, of order $\sqrt{v_j} \, \Delta_j$, to be small, and not their mean, which can be large when the potentials are large.

For $s \in \{ 1, 2 \}$, expanding the exponential to second order and taking expectations, using $\mathbb{E}[Y] = 0$ and $\mathbb{E}[Y^2] = \mathrm{Var}(\ell_j)$,
\begin{align}
    \mathbb{E} [ \mathrm{e}^{- s Y} ] & = \mathbb{E} \left [ 1 - s Y + \frac{s^2 Y^2}{2} + \mathcal{O}(\Delta_j^3) \right ] \\
    & = 1 + \frac{s^2}{2} \mathrm{Var}(\ell_j) + \mathcal{O}(\Delta_j^3) \, .
\end{align}
Since $\mathrm{Var}(\ell_j) = \mathcal{O}(\Delta_j^2)$ by Equation~\eqref{eq:app_variance_delta}, the expansion $\log(1 + x) = x + \mathcal{O}(x^2)$ applied with $x = \mathcal{O}(\Delta_j^2)$ gives
\begin{equation}
    \log \mathbb{E} [ \mathrm{e}^{- s Y} ] = \frac{s^2}{2} \mathrm{Var}(\ell_j) + \mathcal{O}(\Delta_j^3) \, .
\end{equation}
Taking logarithms in the expression for $\zeta_j$ and substituting this expansion with $s = 1$ and $s = 2$,
\begin{align}
    \log \zeta_j & = 2 \log \mathbb{E} [ \mathrm{e}^{- Y} ] - \log \mathbb{E} [ \mathrm{e}^{- 2 Y} ] \\
    & = \mathrm{Var}(\ell_j) - 2 \mathrm{Var}(\ell_j) + \mathcal{O}(\Delta_j^3) \\
    & = - \mathrm{Var}(\ell_j) + \mathcal{O}(\Delta_j^3) \, , \label{eq:app_log_cess_expansion}
\end{align}
and hence
\begin{equation} \label{eq:app_cess_variance}
    \zeta_j \approx \exp \left \{ - \mathrm{Var}(\ell_j) \right \} \, .
\end{equation}
The remainder in Equation~\eqref{eq:app_log_cess_expansion} comes from the third and higher moments of $Y$, and is small whenever the step is short, provided that these moments are finite. The approximation is exact when $\ell_j$ is Gaussian: then $Y \sim \mathcal{N}(0, \mathrm{Var}(\ell_j))$ and $\mathbb{E} [ \mathrm{e}^{- s Y} ] = \exp \{ s^2 \mathrm{Var}(\ell_j) / 2 \}$, so that $\log \zeta_j = - \mathrm{Var}(\ell_j)$ with no remainder.

Combining Equations~\eqref{eq:app_cess_variance} and~\eqref{eq:app_variance_delta} gives the local model used in Section~\ref{sec:adaptive_step_size},
\begin{equation} \label{eq:app_local_model}
    \zeta_j \approx \exp \left \{ - v_j \Delta_j^2 \right \} \, .
\end{equation}
To quantify its accuracy, we substitute Equation~\eqref{eq:app_variance_delta} into Equation~\eqref{eq:app_log_cess_expansion},
\begin{align}
    \log \zeta_j & = - \mathrm{Var} ( \ell_j ) + \mathcal{O}(\Delta_j^3) \\
    & = - v_j \Delta_j^2 + \mathcal{O}(\Delta_j^3) + \mathcal{O}(\Delta_j^3) \\
    & = - v_j \Delta_j^2 + \mathcal{O}(\Delta_j^3) \, . \label{eq:app_local_model_error}
\end{align}
Both remainders on the second line are of order $\Delta_j^3$, but they have different sources. The first comes from Equation~\eqref{eq:app_variance_delta}: it accounts for the movement of the particles during the step. The second comes from the Taylor expansion in Equation~\eqref{eq:app_log_cess_expansion}: it accounts for the shape of the distribution of the increments.

The size of this absolute error is not informative on its own: as $\Delta_j \to 0$ it goes to zero, but so does the quantity being approximated, $\log \zeta_j \approx - v_j \Delta_j^2$, and a small error on a small quantity could still be a large fraction of it. Dividing the error by the retained term $v_j \Delta_j^2$, assuming $v_j > 0$ strictly, gives the relative error of $\log \zeta_j$ under Equation~\eqref{eq:app_local_model},
\begin{equation}
    \frac{ \mathcal{O}(\Delta_j^3) }{ v_j \Delta_j^2 } = \mathcal{O}(\Delta_j) \, .
\end{equation}
Both sources contribute at this order, so for short steps the model is accurate whenever the increments have finite moments. Under Assumption~\ref{as:phi_lip} alone, the relative error can only be bounded by $\mathcal{O}(\sqrt{\Delta_j})$.

Equation~\eqref{eq:app_local_model} is local in two senses. First, it neglects the remainder above, so it is accurate only for short steps. Second, the step-size rule below estimates $v_j$ from transition $j$ and uses the estimate to choose $\Delta_{j+1}$, which requires $v_{j+1} \approx v_j$, that is, that the particle system changes little over one transition.

\subsection{The step-size update}

After a completed transition of length $\Delta_j$, the observed CESS ratio $\widehat{\zeta}_j$ estimates $\zeta_j$, so Equation~\eqref{eq:app_local_model} gives $\log \widehat{\zeta}_j \approx - v_j \Delta_j^2$, and hence the estimate
\begin{equation} \label{eq:app_v_estimate}
    \widehat v_j := - \frac{ \log \widehat{\zeta}_j }{ \Delta_j^2 } \, .
\end{equation}
Requiring the next transition to attain the target $\zeta_\star$ under the same model, with $v_{j+1}$ replaced by $\widehat v_j$, that is $\zeta_\star = \exp \{ - \widehat v_j \widehat{\Delta}_{j+1}^2 \}$, and solving for $\widehat{\Delta}_{j+1}$,
\begin{align}
    \widehat{\Delta}_{j+1} & = \sqrt{ \frac{ - \log \zeta_\star }{ \widehat v_j } } \\
    & = \Delta_j \sqrt{ \frac{ \log \zeta_\star }{ \log \widehat{\zeta}_j } } \, ,
\end{align}
which is Equation~\eqref{eq:adaptive_rule_main}. In adaptive tempering, the incremental weights of a candidate step are available in closed form, and the step that attains the target can be found by bisection \citep{jasra2011inference, zhou2016toward}. In ABF they depend on the propagated paths, so each candidate step would require propagating the particles again, and Equation~\eqref{eq:adaptive_rule_main} instead predicts the next step from the completed transition. This has the form of the elementary step-size controller of ODE solvers, with the CESS in place of the estimate of the local error \citep[Section~II.4]{hairer1993solving}. Both logarithms are negative, so the ratio is positive, and the proposed step is smaller than $\Delta_j$ when $\widehat{\zeta}_j < \zeta_\star$ and larger when $\widehat{\zeta}_j > \zeta_\star$.

The same quantity $v_j$ can be estimated directly from the spread of the observed increments. With
\begin{equation}
    \overline{\ell}_j := \sum_{i=1}^N W_{j-1,i} \, \ell_{j,i} \, , \qquad \widehat{\mathrm{Var}}(\ell_j) := \sum_{i=1}^N W_{j-1,i} \left ( \ell_{j,i} - \overline{\ell}_j \right )^2 \, ,
\end{equation}
the alternative estimate is
\begin{equation} \label{eq:app_v_var}
    \widehat v_j^{\, \mathrm{var}} := \frac{ \widehat{\mathrm{Var}}(\ell_j) }{ \Delta_j^2 } \, .
\end{equation}
We use Equation~\eqref{eq:app_v_estimate} for the update, because it is calibrated directly against the quantity being targeted, and Equation~\eqref{eq:app_v_var} can be used as a check: the two estimates agree to the extent that Equation~\eqref{eq:app_cess_variance} holds, so a large discrepancy indicates that the third and higher moments of the increments are not negligible.

\subsection{The first transition}

Before the first transition there is no observed CESS, but Equation~\eqref{eq:app_variance_delta} identifies $v_1$ as the variance of the summed potential across the initial particle system, and the potentials $\varphi_{0,i}$ are already evaluated in order to form the first incremental weight. Writing $\widehat{\mathrm{Var}}(\varphi_0)$ for their weighted variance, solving $\zeta_\star = \exp \{ - \widehat{\mathrm{Var}}(\varphi_0) \Delta_1^2 \}$ for $\Delta_1$, and halving the result, gives the first step
\begin{equation} \label{eq:app_first_step}
    \Delta_1 = \frac{1}{2} \sqrt{ \frac{ - \log \zeta_\star }{ \widehat{\mathrm{Var}}(\varphi_0) } } \, .
\end{equation}

Two features of Equation~\eqref{eq:app_first_step} are worth noting. First, it is conservative by construction. The particles are usually most dispersed at $t = 0$ and coalesce as the run proceeds, so $v_j$ is largest at the start and Equation~\eqref{eq:app_first_step} typically returns the smallest step the run will require; the factor $1/2$ adds a further margin. Second, the direction of the error matters. A first step that is too small is corrected geometrically, at a cost of $\log(\Delta^\star / \Delta_1) / \log g_{\max}$ transitions, where $\Delta^\star$ is the step the rule settles on, and those transitions are inexpensive because they cover little of $[0,T]$. A first step that is too large degenerates the weights immediately, and the rule can only shrink the step by a factor $g_{\min}$ per transition, by which time the information lost cannot be recovered by resampling.

\subsection{Safeguards}

Because Equation~\eqref{eq:app_local_model} is a local approximation, we limit how much the step may change between adjacent transitions, as is standard in step-size control for ODE solvers \citep[Section~II.4]{hairer1993solving},
\begin{equation} \label{eq:app_clipping}
    \widehat{\Delta}_{j+1}^{\mathrm{clip}} := \min \left \{ g_{\max} \Delta_j , \, \max \left \{ g_{\min} \Delta_j , \, \widehat{\Delta}_{j+1} \right \} \right \} \, ,
\end{equation}
with $0 < g_{\min} < 1 < g_{\max}$. We use $g_{\min} = 0.5$ and $g_{\max} = 1.5$, so that a single observation can at most halve the step or increase it by one half. Equation~\eqref{eq:app_clipping} also covers the degenerate case in which $\widehat{\zeta}_j$ is numerically equal to one, so that $\widehat v_j = 0$, and Equation~\eqref{eq:adaptive_rule_main} would propose an unbounded increase.

The step is also kept above a small floor $\Delta_{\min} > 0$ and within the bridge interval, and is extended to reach $T$ exactly when the remaining time is within rounding tolerance of it.

\subsection{Offline adaptation}

Equations~\eqref{eq:adaptive_rule_main} and~\eqref{eq:app_first_step} make the mesh a function of the particle system, as in other adaptive SMC methods \citep{beskos2016convergence}. This dependence can be removed by a pilot run, whose mesh is then held fixed in the production run; in our experiments the two variants gave indistinguishable accuracy, so we adapt online and avoid the pilot run.

\section{Detailed ABF algorithm} \label{app:complete_algo}

\newgeometry{top=0.1cm,bottom=0.1cm}

\begin{algorithm}[t]
\footnotesize
\caption{Detailed Approximate Bayesian Fusion}
\label{alg:approx-bayesian-fusion}
\begin{algorithmic}[1]
\Require Shard particles $\{\theta_{t_0,i}^{(c)}\}_{i=1,\ldots,N}^{c=1,\ldots,C}$ with $t_0=0$, preconditioned potentials $\{\phi^{(c)}\}_{c=1}^C$, calibration sample size $M$, target rejection acceptance $a_0$,
target CESS ratio $\zeta_\star$, resampling threshold $\tau$.

\Statex
\State \textbf{Set the preconditioners.}
Set $\Lambda_c$ to the empirical covariance $\hat{Q}^{(c)}$ of the draws from shard $c$, or to its diagonal, for $c = 1, \ldots, C$, and set $\bar{\Lambda} = ( \sum_{c=1}^C \Lambda_c^{-1} )^{-1}$. For a tuple $\theta^{(1:C)}$, write $\tilde{\theta} = \bar{\Lambda} \sum_{c=1}^C \Lambda_c^{-1} \theta^{(c)}$ for its precision-weighted centre. Setting $\Lambda_c = I_d$ gives ABF without preconditioning.

\Statex
\State \textbf{Tune coalescence time.}
Draw calibration tuples $\theta_q^{(1:C)}$, $q=1,\ldots,M$, and compute $\mathcal{E}_q 
= \frac{1}{2}\sum_{c=1}^C (\theta_q^{(c)}-\tilde \theta_q)^\top \Lambda_c^{-1} (\theta_q^{(c)}-\tilde \theta_q)$.
Find $\hat T$ via bisection s.t.\ $\widehat a(\hat T)=M^{-1}\sum_{q=1}^{M}\exp\{-\mathcal{E}_q/\hat T\}=a_0$.

\Statex
\State \textbf{Run the rejection sampler and initialise SMC.}
Draw independent shard-particle combinations $\theta_i^{(1:C)}$ and accept a proposal with probability $a(\theta_i^{(1:C)};\hat T)=\exp\{-\mathcal{E}(\theta_i^{(1:C)})/\hat T\}$; write the accepted tuple as $\{\theta_{t_0,i}^{(c)}\}_{c=1}^C$.
\State Set $W_i \gets 1/N$, $\quad w_i \gets W_i$, $\quad \varphi_{0,i} \gets \sum_{c=1}^C \phi^{(c)}(\theta_{t_0,i}^{(c)})$

\Statex
\State \textbf{Choose the first step.}
Set $\Delta_1$ by Equation~\eqref{eq:app_first_step} and $j \gets 1$.

\Statex
\While{$t_{j-1} + \Delta_j < \hat T$}
    \If{$\mathrm{ESS}(W_{1:N}) < \tau N$}
        \State Resample ancestor indices $A_1,\ldots,A_N$ from $W_{1:N}$
        \State Set $\theta_{t_{j-1},i}^{(c)} \gets \theta_{t_{j-1},A_i}^{(c)}$ for all $i,c$
        \State Set $W_i \gets 1/N$ and $w_i \gets W_i$
        \State Update $\varphi_{j-1,i} \gets \varphi_{j-1,A_i}$ 
    \EndIf

    \State Set $t_j \gets t_{j-1} + \Delta_j$
    \State Set $\alpha = (\hat T-t_j)/(\hat T-t_{j-1})$, so that $1 - \alpha = \Delta_j / (\hat T-t_{j-1})$

    \For{$i=1,\ldots,N$}
        \State Compute the current weighted centre, $\tilde \theta_{t_{j-1},i} = \bar{\Lambda} \sum_{c=1}^C \Lambda_c^{-1} \theta_{t_{j-1},i}^{(c)}$
        \State Draw common noise $\chi_{j,i} \sim \mathcal N(0,I_d)$
        
        \For{$c=1,\ldots,C$}
            \State Draw shard-specific $\psi_{j,i}^{(c)} \sim \mathcal N(0,I_d)$

            \State Propagate the bridge:
            \begin{equation*}
                \theta_{t_j,i}^{(c)} = \alpha \theta_{t_{j-1},i}^{(c)} + (1 - \alpha) \tilde \theta_{t_{j-1},i} + \sqrt{\Delta_j}\sqrt{1 - \alpha}\,\bar{\Lambda}^{1/2}\chi_{j,i} + \sqrt{\Delta_j}\sqrt{\alpha}\,\Lambda_c^{1/2}\psi_{j,i}^{(c)}
            \end{equation*}
        \EndFor

        \State Evaluate $\varphi_{j,i} = \sum_{c=1}^C \phi^{(c)}\!\left(\theta_{t_j,i}^{(c)}\right)$

        \State Compute the increment by the trapezium rule, $\ell_{j,i} = \frac{\Delta_j}{2} \left( \varphi_{j-1,i} + \varphi_{j,i} \right)$, and update $\log w_i \gets \log w_i - \ell_{j,i}$.
    \EndFor
    \State Compute $\widehat{\zeta}_j$ by Equation~\eqref{eq:app_cess}, from the normalised weights $W_i$ of the previous step and the increments $\ell_{j,i}$ of this step
    \State Set $\Delta_{j+1}$ by Equation~\eqref{eq:adaptive_rule_main}, clipped as in Equation~\eqref{eq:app_clipping} and bounded below by $\Delta_{\min}$
    \State Normalise $W_i = w_i / \sum_{k=1}^N w_k$ for $i=1,\ldots,N$
    \State Set $j \gets j + 1$
\EndWhile
\State Set $n \gets j$ and $\Delta_n \gets \hat T - t_{n-1}$

\Statex
\State \textbf{Final fusion step.}
\For{$i=1,\ldots,N$}
    \State Compute $\tilde \theta_{t_{n-1},i} = \bar{\Lambda} \sum_{c=1}^C \Lambda_c^{-1} \theta_{t_{n-1},i}^{(c)}$

    \State Draw $\chi_{n,i}\sim \mathcal N(0,I_d)$ and set $Y_i = \tilde \theta_{t_{n-1},i} + \sqrt{\Delta_n} \, \bar{\Lambda}^{1/2} \chi_{n,i}$

    \State Evaluate the final fused potential $\varphi_{n,i} = \sum_{c=1}^C \phi^{(c)}(Y_i)$.
    \State Compute $\ell_{n,i} = \frac{\Delta_n}{2} \left( \varphi_{n-1,i} + \varphi_{n,i} \right)$ and update $\log w_i \gets \log w_i - \ell_{n,i}$.
\EndFor
\State Normalise $W_i = w_i / \sum_{k=1}^N w_k$ for $i=1,\ldots,N$

\State \Return $\{(Y_i,W_i)\}_{i=1}^N$.
\end{algorithmic}
\end{algorithm}

\restoregeometry

\section{Step size comparison} \label{app:tuning_calcs_bf}

This appendix compares the step sizes chosen by the rule of Section~\ref{sec:adaptive_step_size} with those recommended by the asymptotic analysis of \cite{dai2023bf}, Section~3. Choosing the two constants of that analysis as in Section~\ref{sec:tuning_asymptotics} at every step gives the adaptive mesh of \citet[Section~4.2.2 and Proposition~19]{chan2023dcf}, here without preconditioning. Section~\ref{sec:tuning_asymptotics} describes their recommendation and how we evaluate it, and Section~\ref{sec:tuning_comparison} compares the two rules along ABF runs on the logistic regression of Section~\ref{sec:logistic_comparison}.

\subsection{The asymptotic step size} \label{sec:tuning_asymptotics}

\cite{dai2023bf} assume that the subposteriors are Gaussian with a common isotropic variance, that is, that the $c$-th subposterior is
\begin{equation} \label{eq:assumptions_normality}
    \mathcal{N} ( \mathbf{a}^{(c)}, \, s^2 I_d ) \, , \qquad s^2 := \frac{C b}{m} \, ,
\end{equation}
where $\mathbf{a}^{(c)} \in \mathbb{R}^d$, $m$ is the total number of observations, and $b > 0$ is a constant. The variance of a subposterior therefore decreases as $\mathcal{O}(C / m)$, since each shard holds $m / C$ observations.

For the transition from $t_{j-1}$ to $t_j = t_{j-1} + \Delta_j$, let
\begin{equation} \label{eq:nu_j}
    \nu_j := \frac{1}{C} \sum_{c=1}^C \| \theta_{t_{j-1}}^{(c)} - \mathbf{a}^{(c)} \|^2
\end{equation}
be the average squared distance of the shard components of a particle at time $t_{j-1}$ from their subposterior means. Theorem~5 of \cite{dai2023bf} states that, for positive constants $k_3$ and $k_4$, if
\begin{equation} \label{eq:deltaT}
    \Delta_j \le \widetilde{\Delta}_j := \min \left \{ \frac{b^2 k_3 C}{m^2 \, \mathbb{E} \nu_j} \, , \; \left ( \frac{k_4 b^2 C}{2 d m^2} \right )^{1/2} \right \} \, ,
\end{equation}
where the expectation is conditional on the particle system at time $t_{j-1}$, then the CESS ratio of the transition is at least $e^{-k_3 - k_4}$ in the limit $N \to \infty$ followed by $\Delta_j \to 0$. Their Remarks~3 and~5 recommend taking $\Delta_j = \widetilde{\Delta}_j$ at every step, with $k_3$ and $k_4$ such that $e^{-k_3 - k_4} = \zeta_\star$ for a target CESS ratio $\zeta_\star \in (0,1)$.

To write Equation~\eqref{eq:deltaT} in terms of $s^2$, we use the definition of $s^2$ in Equation~\eqref{eq:assumptions_normality} in each of the two terms,
\begin{align}
    \frac{b^2 k_3 C}{m^2} & = \left ( \frac{C b}{m} \right )^2 \frac{k_3}{C} = \frac{s^4 k_3}{C} \, , \\
    \frac{k_4 b^2 C}{2 d m^2} & = \left ( \frac{C b}{m} \right )^2 \frac{k_4}{2 C d} = \frac{s^4 k_4}{2 C d} \, ,
\end{align}
where we multiplied and divided by $C$ in both lines. Substituting into Equation~\eqref{eq:deltaT}, and taking $s^4$ out of the square root, gives
\begin{equation} \label{eq:deltaT_s2}
    \widetilde{\Delta}_j = \min \left \{ \frac{s^4 k_3}{C \, \mathbb{E} \nu_j} \, , \; s^2 \left ( \frac{k_4}{2 C d} \right )^{1/2} \right \} \, .
\end{equation}

The constraint $k_3 + k_4 = - \log \zeta_\star$ leaves one constant free. The first term of Equation~\eqref{eq:deltaT_s2} increases with $k_3$ and the second with $k_4$, so the minimum is largest when the two terms are equal, and we choose $k_3$ and $k_4$ in this way. This is the choice most favourable to Bayesian Fusion, as it gives the largest step that the constraint allows. Write
\begin{equation} \label{eq:kappa_j}
    \kappa_j := \frac{C \, ( \mathbb{E} \nu_j )^2}{2 d s^4} \, .
\end{equation}
Equating the two terms of Equation~\eqref{eq:deltaT_s2}, squaring both sides, and multiplying both sides by $C^2 (\mathbb{E} \nu_j)^2 / s^8$, we obtain
\begin{align}
    \frac{s^4 k_3}{C \, \mathbb{E} \nu_j} & = s^2 \left ( \frac{k_4}{2 C d} \right )^{1/2} \\
    \frac{s^8 k_3^2}{C^2 ( \mathbb{E} \nu_j )^2} & = \frac{s^4 k_4}{2 C d} \\
    k_3^2 & = \kappa_j k_4 \, ,
\end{align}
where the last line uses the definition of $\kappa_j$ in Equation~\eqref{eq:kappa_j}. Substituting $k_3 = - \log \zeta_\star - k_4$ and expanding the square, we obtain
\begin{align}
    ( - \log \zeta_\star - k_4 )^2 & = \kappa_j k_4 \\
    k_4^2 - ( \kappa_j - 2 \log \zeta_\star ) \, k_4 + ( \log \zeta_\star )^2 & = 0 \, .
\end{align}
Since $\zeta_\star \in (0,1)$, we have $\log \zeta_\star < 0$. The discriminant of this quadratic, $( \kappa_j - 2 \log \zeta_\star )^2 - 4 ( \log \zeta_\star )^2 = \kappa_j \, ( \kappa_j - 4 \log \zeta_\star )$, is positive for $\kappa_j > 0$, and its roots have sum $\kappa_j - 2 \log \zeta_\star > 0$ and product $( \log \zeta_\star )^2 > 0$. The roots are therefore distinct and positive, and since their product is $( \log \zeta_\star )^2$, exactly one of them is smaller than $- \log \zeta_\star$. Since $k_3 = - \log \zeta_\star - k_4$ must be positive, $k_4$ is the smaller root,
\begin{equation} \label{eq:k3_k4}
    k_4 = \frac{1}{2} \left ( \kappa_j - 2 \log \zeta_\star - \sqrt{\kappa_j \, ( \kappa_j - 4 \log \zeta_\star )} \right ) \, , \qquad k_3 = - \log \zeta_\star - k_4 \, .
\end{equation}
With this choice, the two terms in Equation~\eqref{eq:deltaT_s2} are equal.

The quantities in Equation~\eqref{eq:deltaT_s2} are estimated from the particles. We estimate $\mathbf{a}^{(c)}$ by the mean of the draws from shard $c$, and $\mathbb{E} \nu_j$ by the average of $\nu_j$ over the particles at time $t_{j-1}$, weighted by their normalised weights, which is the weighted version of the estimator in Remark~4 of \cite{dai2023bf}. Since $\mathrm{tr} ( s^2 I_d ) = d s^2$, we estimate $s^2$ by
\begin{equation} \label{eq:s2hat}
    \hat{s}^2 = \frac{1}{C d} \sum_{c=1}^C \mathrm{tr} ( \hat{Q}^{(c)} ) \, ,
\end{equation}
where $\hat{Q}^{(c)}$ is the sample covariance of the draws from shard $c$.

\subsection{Comparison on the logistic regression} \label{sec:tuning_comparison}

We compare the two rules along ABF runs on the logistic regression of Section~\ref{sec:logistic_comparison}, with $d = 6$, $C = 4$, $m = 2{,}000$ observations, and $N = 10{,}000$ particles. We run ABF with the steps chosen by each rule in turn, both with target $\zeta_\star = 0.4$: after each transition, the next step is the one recommended by Equation~\eqref{eq:adaptive_rule_main}, without the bounds $g_{\min}$ and $g_{\max}$, or by Equation~\eqref{eq:deltaT_s2}. Both rules use the coalescence time $T = 0.0353$, and start from the same initial particles and the same first step $\Delta_1 = 0.0039$, which is the uniform step chosen by the bisection of Section~\ref{sec:adaptive_step_size} for $\zeta_\star = 0.4$. We repeat the runs of each rule ten times.

\begin{figure}[!htb]
    \centering
    \includegraphics[width=1\linewidth]{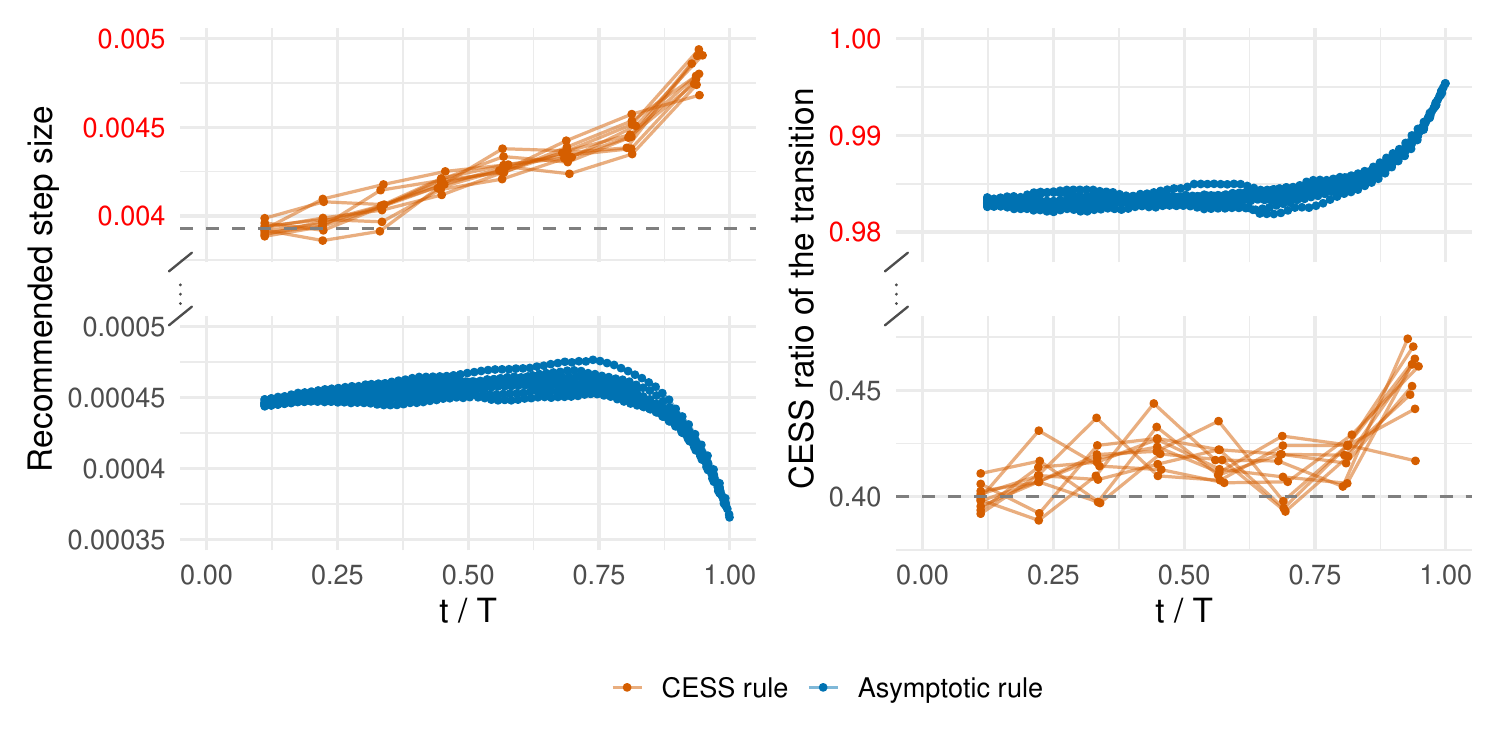}
    \caption{Step sizes along ABF runs on the logistic regression of Section~\ref{sec:logistic_comparison}, in which the steps are chosen by Equation~\eqref{eq:adaptive_rule_main} (CESS rule) or by Equation~\eqref{eq:deltaT_s2} (asymptotic rule), both with target $\zeta_\star=0.4$ and the same first step $\Delta_1 = 0.0039$. Left: after each transition, the step that the rule recommends for the next transition, against the bridge time $t / T$; the dashed line marks the first step $\Delta_1$. Right: the CESS ratio of each transition; the dashed line marks the target $0.4$. The first transition, which is the same for both rules, is shown only for the CESS rule. In both panels the vertical axis is broken, and each of its two parts has its own linear scale. Each line is one of the ten replicas of each rule.}
    \label{fig:step_size_compare}
\end{figure}

Figure~\ref{fig:step_size_compare} shows the results. The first transition is the same for both rules, and its CESS ratio is close to the target. With Equation~\eqref{eq:adaptive_rule_main}, the CESS ratio of every later transition also stays close to the target (right panel), and the steps increase gradually along the bridge (left panel). The steps recommended by the asymptotic rule are about an order of magnitude smaller, change little along most of the bridge, and decrease towards its end. With these steps, the CESS ratio of every later transition is close to one, well above the target, and a run takes about $70$ steps, against $9$ with Equation~\eqref{eq:adaptive_rule_main}.

The asymptotic rule bounds the CESS of the Bayesian Fusion weights in a limit, and on this example it is conservative for ABF: its steps attain a CESS ratio far above the target, at the cost of many more steps. Equation~\eqref{eq:adaptive_rule_main} instead uses the CESS that the completed transition actually attained, and therefore recommends a step close to the one that attains the target.

\section{Proofs from Section \ref{sec:controlling_error}} \label{app:proofs}

\subsection{Technical lemmas for the strong error bound}

The following are some technical lemmas that are needed in the main proof, and were separated for readability.

\begin{lemma} \label{lem:theta_0}
    Let $\{\theta_{0}^{(c)}\}_{c=1}^C$ be a collection of mutually independent $d$-dimensional random vectors with marginal densities $\{f_c\}_{c=1}^C$, where $f_c$ is an unknown distribution with finite mean $\mu^{(c)}$ and covariance $Q^{(c)}$, as per Assumption~\ref{as:smoothness}. Then
    \begin{equation} \label{eq:theta_0}
        \mathbb{E} \| \theta_0^{(c)} - \bar{\theta}_0 \| \le \sqrt{d} \, \kappa^{(c)} \, ,
    \end{equation}
    where $\bar{\theta}_0 = C^{-1} \sum_{c=1}^C \theta_0^{(c)}$, and $\kappa^{(c)}$, $\bar{\mu}$ and $\bar{Q}$ are defined in Proposition~\ref{prop:strong_error}.
\end{lemma}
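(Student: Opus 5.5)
The plan is to use the triangle inequality followed by Jensen's inequality (Cauchy--Schwarz) on each of the two resulting terms. The definition of $\kappa^{(c)}$ in Equation~\eqref{eq:kappa} already has this two-term form: $\sqrt{d}\,\kappa^{(c)} = \sqrt{\mathrm{tr}(Q^{(c)}) + \|\mu^{(c)}\|^2} + \sqrt{C^{-1}\mathrm{tr}(\bar Q) + \|\bar\mu\|^2}$. First write
\begin{equation}
    \mathbb{E}\|\theta_0^{(c)} - \bar\theta_0\| \le \mathbb{E}\|\theta_0^{(c)}\| + \mathbb{E}\|\bar\theta_0\| \, ,
\end{equation}
and then bound each term by $\mathbb{E}\|X\| \le \sqrt{\mathbb{E}\|X\|^2}$. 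All second moments involved are finite by Assumption~\ref{as:smoothness}.

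For the first term, use $\mathbb{E}\|\theta_0^{(c)}\|^2 = \mathrm{tr}(\mathrm{Cov}(\theta_0^{(c)})) + \|\mathbb{E}\theta_0^{(c)}\|^2 = \mathrm{tr}(Q^{(c)}) + \|\mu^{(c)}\|^2$. This gives exactly the first summand of $\sqrt{d}\,\kappa^{(c)}$.

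For the second term, note that $\bar\theta_0$ has mean $\bar\mu$. By mutual independence of the $\theta_0^{(c)}$, its covariance is $C^{-2}\sum_c Q^{(c)} = C^{-1}\bar Q$. Hence $\mathbb{E}\|\bar\theta_0\|^2 = C^{-1}\mathrm{tr}(\bar Q) + \|\bar\mu\|^2$, which is the second summand. Adding the two bounds gives Equation~\eqref{eq:theta_0}.

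There is no real obstacle here. The only point needing care is the covariance of $\bar\theta_0$, where independence is what kills the cross-covariances and produces the factor $1/C$. The bound is loose compared with working directly on $\theta_0^{(c)} - \bar\theta_0$, but it matches the stated constant, so this route is the intended one.
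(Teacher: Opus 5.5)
Your proposal is correct and follows essentially the same route as the paper: the triangle inequality, then Jensen's inequality on each term, then $\mathbb{E}\|X\|^2 = \mathrm{tr}(\mathrm{Var}(X)) + \|\mathbb{E}X\|^2$. For $\bar{\theta}_0$, you use independence to obtain mean $\bar{\mu}$ and covariance $C^{-1}\bar{Q}$, exactly as the paper does.
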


\noindent We introduced the constant $\kappa^{(c)}$ in Equation \eqref{eq:kappa} as a quantity that is stable as the dimension $d$ goes to infinity. This allows us to explicitly express the natural dependence on dimensionality in Equation \eqref{eq:theta_0}.

\begin{proof}
    The expected value and variance of $\bar{\theta}_0$ are:
    \begin{align}
    \mathbb{E} \bar{\theta}_0 & = \frac{1}{C} \sum_{c=1}^C \mathbb{E} \theta_0^{(c)} = \frac{1}{C} \sum_{c=1}^C \mu^{(c)} = \bar{\mu} \\
    \mathrm{Var}(\bar{\theta}_0) & = \frac{1}{C^2} \mathrm{Var} \left ( \sum_{c=1}^C \theta_0^{(c)} \right ) = \frac{1}{C^2} \sum_{c=1}^C \mathrm{Var} \left ( \theta_0^{(c)} \right ) = \frac{1}{C^2} \sum_{c=1}^C Q^{(c)} = \frac{\bar{Q}}{C} \, .
\end{align}
Applying the triangle inequality, Jensen's inequality, and using the formula $\mathbb{E}[\| \theta \|^2] = \mathrm{tr}( \mathrm{Var}(\theta) ) + \| \mathbb{E}[\theta] \|^2$ for a random vector $\theta$, we have
\begin{align}
    \mathbb{E} \| \theta_0^{(c)} - \bar{\theta}_0 \| 
    & \le \mathbb{E} \| \theta_0^{(c)} \| + \mathbb{E} \| \bar{\theta}_0 \| \\
    & \le \sqrt{ \mathbb{E} [ \| \theta_0^{(c)} \|^2 ] } + \sqrt{ \mathbb{E} [ \| \bar{\theta}_0 \|^2 ] } \\
    & = \sqrt{ \mathrm{tr}(Q^{(c)}) + \| \mu^{(c)} \|^2 } + \sqrt{ \frac{1}{C} \mathrm{tr}(\bar{Q}) + \| \bar{\mu} \|^2 } \\
    & := \sqrt{d} \, \kappa^{(c)} \, .
\end{align}
\end{proof}

\begin{lemma} \label{lem:theta_T_conditional}
    Let $\theta_t$ follow the SDE from Equation \eqref{eq:sde} on $t \in [0,T]$. Then, conditional on $\theta_0$,
    \begin{equation}
        \vartheta | \theta_0 \sim \mathcal{N} \left ( \bar \theta_0, \frac{T}{C} I_d \right ) \, , \quad \mbox{and} \quad (\vartheta - \bar \theta_0) | \theta_0 \sim \mathcal{N} \left ( 0, \frac{T}{C} I_d \right ) \, .
    \end{equation}
\end{lemma}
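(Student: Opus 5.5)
The plan is to reduce the SDE~\eqref{eq:sde} to a one-dimensional equation for the shard mean $\bar\theta_t$. Summing \eqref{eq:sde} over $c$ and dividing by $C$, the drift terms cancel exactly, because $\sum_{c=1}^C (\bar\theta_t - \theta_t^{(c)}) = 0$. This leaves
\begin{equation}
    \mathrm{d}\bar\theta_t = \frac{1}{C}\sum_{c=1}^C \mathrm{d}W_t^{(c)} =: \frac{1}{\sqrt{C}}\, \mathrm{d}B_t \, ,
\end{equation}
where $B_t$ is a standard $d$-dimensional Brownian motion, since it is a normalised sum of $C$ independent ones. Hence $\bar\theta_t = \bar\theta_0 + C^{-1/2} B_t$ for $t<T$, and this increment is independent of $\theta_0$.

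The second step is to show that all components coalesce at $\bar\theta_T$, so that $\vartheta = \lim_{t\uparrow T}\bar\theta_t = \bar\theta_0 + C^{-1/2}B_T$. Consider the deviations $D_t^{(c)} := \theta_t^{(c)} - \bar\theta_t$. They satisfy
\begin{equation}
    \mathrm{d}D_t^{(c)} = -\frac{D_t^{(c)}}{T-t}\,\mathrm{d}t + \mathrm{d}\big(W_t^{(c)} - \bar W_t\big) \, ,
\end{equation}
which is a Brownian-bridge-type equation pinned at zero at time $T$. Its explicit solution is
\begin{equation}
    D_t^{(c)} = (T-t)\left[\frac{D_0^{(c)}}{T} + \int_0^t \frac{\mathrm{d}(W_s^{(c)}-\bar W_s)}{T-s}\right] \, .
\end{equation}
This tends to zero almost surely as $t\uparrow T$, by the standard Brownian bridge argument (or equivalently by the property quoted from \cite{dai2023bf} that the components coalesce). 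Therefore $\theta_T^{(c)} = \bar\theta_T$ for all $c$.

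Combining the two steps, conditional on $\theta_0$ we get $\vartheta = \bar\theta_0 + C^{-1/2}B_T$. Since $B_T \sim \mathcal{N}(0, T I_d)$ independently of $\theta_0$, this gives $\vartheta\mid\theta_0 \sim \mathcal{N}(\bar\theta_0, (T/C) I_d)$. Subtracting the $\theta_0$-measurable quantity $\bar\theta_0$ then yields the second claim.

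I expect the main obstacle to be the coalescence limit at $t=T$, where the drift is singular. I will either cite the Brownian bridge property or check it directly: the variance of $D_t^{(c)}$ is of order $(T-t)$, and almost-sure convergence follows from the law of the iterated logarithm for the time-changed martingale.
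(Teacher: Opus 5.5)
Your argument is correct, but it takes a different route from the paper's.

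\textbf{The paper's route.} The paper treats the coalescing system as $C$ independent Brownian motions started at $\theta_0^{(1)},\ldots,\theta_0^{(C)}$ and conditioned to meet at time $T$. It writes the density of the meeting point as proportional to the product of heat kernels $\prod_{c}\exp\{-\|y-\theta_0^{(c)}\|^2/(2T)\}$, and then completes the square. The $y$-independent remainder $\sum_c\|\theta_0^{(c)}-\bar\theta_0\|^2/(2T)$ equals $-\log\rho_0(\theta_0)$, and is absorbed into the normalising constant. This is short and shows where the initial weight $\rho_0$ comes from. However, it silently relies on identifying \eqref{eq:sde} with the conditioned (Doob $h$-transformed) system from \cite{dai2023bf}, and it conditions on a null event.

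\textbf{Your route.} You work pathwise from the SDE itself. The drift cancels in the shard mean, so $\bar\theta_t-\bar\theta_0=\bar W_t$ is a Brownian motion with covariance $(t/C)I_d$, independent of $\theta_0$. The deviations $D_t^{(c)}$ are pinned at zero at time $T$. This is self-contained and treats the singular endpoint explicitly. It also yields slightly more than the lemma: the whole path of $\bar\theta$ is a Brownian motion, and $\vartheta-\bar\theta_0$ is independent of $\theta_0$. What it does not show is the link to $\rho_0$.

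\textbf{The endpoint limit.} You need less machinery than you anticipate. With $M=W^{(c)}-\bar W$, the process $(T-t)\int_0^t (T-s)^{-1}\,\mathrm{d}M_s$ is exactly the standard SDE representation of a Brownian bridge from $0$ to $0$ on $[0,T]$, scaled by $\sqrt{1-1/C}$. Its almost-sure continuity at $T$ can therefore be quoted directly. The variance bound alone would give only $L^2$ convergence.
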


\begin{proof}
    Starting from $\theta_0^{(c)}$, the transition density of the $c$th Brownian motion at the common endpoint $\vartheta=y$ is proportional to
    \begin{equation}
        \exp \left \{ -\frac{1}{2T} \| y - \theta_0^{(c)} \|^2 \right \} \, .
    \end{equation}
    Since the $C$ Brownian motions are independent before conditioning on the common endpoint, the density of the shared endpoint, conditional on $\theta_0$, is therefore the product
    \begin{align}
        p(y \mid \theta_0) 
        & \propto \prod_{c=1}^C \exp\left\{-\frac{1}{2T}\|y-\theta_0^{(c)}\|^2\right\} \\
        & \propto \exp \left \{ -\frac{1}{2T} \sum_{c=1}^C \| y - \theta_0^{(c)} \|^2 \right \} \, .
    \end{align}
    Completing the square by adding and subtracting the mean $\bar{\theta}_0$ gives
    \begin{align}
        \sum_{c=1}^C \|y-\theta_0^{(c)}\|^2
        & = \sum_{c=1}^C \|y - \bar{\theta}_0 + \bar{\theta}_0 - \theta_0^{(c)}\|^2 \\
        & = \sum_{c=1}^C \|y - \bar \theta_0\|^2 + 2 (y-\bar \theta_0)^\top \sum_{c=1}^C (\bar \theta_0-\theta_0^{(c)}) \\
        & \qquad + \sum_{c=1}^C \| \bar \theta_0 - \theta_0^{(c)} \|^2 \\
        & = C \| y - \bar \theta_0 \|^2 + \sum_{c=1}^C \|\theta_0^{(c)}-\bar \theta_0\|^2 \, ,
    \end{align}
    where on the second line the middle term vanishes, as by the definition of $\bar{\theta}_0$, the sum in $c$ is equal to zero.
    On the last line, the second term does not depend on $y$, so it is absorbed into the normalising constant. Hence
    \begin{equation}
        p(y \mid \theta_0)
        \propto
        \exp\left\{
        -\frac{C}{2T}\|y-\bar \theta_0\|^2
        \right\} \, ,
    \end{equation}
    and therefore
    \begin{equation}
        \vartheta | \theta_0 \sim \mathcal{N} \left ( \bar \theta_0, \frac{T}{C} I_d \right ) \, , \quad \mbox{and} \quad (\vartheta - \bar \theta_0) | \theta_0 \sim \mathcal{N} \left ( 0, \frac{T}{C} I_d \right ) \, .
    \end{equation}
\end{proof}

\begin{lemma} \label{lem:theta_T}
    Let $\theta_t$ follow the SDE from Equation \eqref{eq:sde} on $t \in [0,T]$. Then
    \begin{equation}
        \mathbb{E} \| \vartheta - \bar{\theta}_0 \| \le \sqrt{ d \, \frac{T}{C}}
    \end{equation}
\end{lemma}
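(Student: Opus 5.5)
The plan is to condition on $\theta_0$, apply Lemma~\ref{lem:theta_T_conditional}, and then take the outer expectation using the tower property. All the distributional work has already been done in Lemma~\ref{lem:theta_T_conditional}. What remains is a moment computation for a Gaussian norm, together with Jensen's inequality.

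First, write $\mathbb{E} \| \vartheta - \bar{\theta}_0 \| = \mathbb{E} \left[ \mathbb{E} \left[ \| \vartheta - \bar{\theta}_0 \| \mid \theta_0 \right] \right]$. This is legitimate because the integrand is non-negative, so no integrability condition is needed. By Lemma~\ref{lem:theta_T_conditional}, conditionally on $\theta_0$ the vector $Z := \vartheta - \bar{\theta}_0$ is $\mathcal{N}(0, \tfrac{T}{C} I_d)$.

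Next, bound the inner expectation. Since the square root is concave, the conditional Jensen inequality gives
\begin{equation}
    \mathbb{E} \left[ \| Z \| \mid \theta_0 \right] \le \sqrt{ \mathbb{E} \left[ \| Z \|^2 \mid \theta_0 \right] } = \sqrt{ \mathrm{tr} \left( \tfrac{T}{C} I_d \right) } = \sqrt{ d \, \tfrac{T}{C} } \, .
\end{equation}
Here we use the identity $\mathbb{E}\|Z\|^2 = \mathrm{tr}(\mathrm{Var}(Z)) + \|\mathbb{E} Z\|^2$, exactly as in the proof of Lemma~\ref{lem:theta_0}, with mean zero.

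Finally, the conditional bound is a deterministic constant that does not depend on $\theta_0$. Taking the outer expectation therefore leaves it unchanged, which yields the claim.

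There is no real obstacle in this argument. The only point needing care is that the conditional law in Lemma~\ref{lem:theta_T_conditional} does not depend on $\theta_0$ except through its mean. Consequently the bound holds uniformly in $\theta_0$, and the initial distributions $f_c$ play no role.
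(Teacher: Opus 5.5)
Your proposal is correct and follows the paper's proof essentially step by step. Both condition on $\theta_0$, apply Lemma~\ref{lem:theta_T_conditional}, and use Jensen's inequality to bound the conditional first moment by $\sqrt{d\,T/C}$; the paper sums the coordinate variances where you use the trace identity. Taking the outer expectation of this constant then gives the claim.
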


\begin{proof}
    By the law of total expectation,
    \begin{equation}
        \mathbb{E} \| \vartheta - \bar{\theta}_0 \| = \mathbb{E} \left [ \mathbb{E} \left [ \| \vartheta - \bar{\theta}_0 \| \; | \; \theta_0 \right ] \right ] \, .
    \end{equation}

    By Lemma \ref{lem:theta_T_conditional},
    \begin{equation}
        (\vartheta - \bar \theta_0) | \theta_0 \sim \mathcal{N} \left ( 0, \frac{T}{C} I_d \right ) \, .
    \end{equation}

    By Jensen's inequality and independence,
    \begin{align}
        \mathbb{E}\left [ \| \vartheta - \bar{\theta}_0 \| \; | \; \theta_0 \right ]
        & \le \sqrt{\mathbb{E}\left [ \| \vartheta - \bar{\theta}_0 \|^2 \; | \; \theta_0 \right ] } \\
        & = \sqrt{ \mathbb{E} \left [ \sum_{r=1}^d (\vartheta_r - \bar{\theta}_{0,r})^2 \; | \; \theta_0 \right ] }\\
        & = \sqrt{ \sum_{r=1}^d \mathrm{Var} \left ( (\vartheta_r - \bar{\theta}_{0,r}) \; | \; \theta_0 \right ) } \\
        & = \sqrt{ d \, \frac{T}{C}} \, .
    \end{align}
    Consequently,
    \begin{align}
        \mathbb{E} \| \vartheta - \bar{\theta}_0 \| 
        & \le \mathbb{E} \left [ \sqrt{ d \, \frac{T}{C}} \right ]
        = \sqrt{ d \, \frac{T}{C}} \, .
    \end{align}
\end{proof}

\begin{lemma} \label{lem:bridge_incr}
    Let $B_t$ be a $d$-dimensional Brownian bridge starting at $0$ at time $0$, and finishing at $0$ at time $T$. 
    For $0 \le s \leq t \le T$,
    \begin{equation}
        \mathbb{E} \left [ \|B_{t} - B_{s} \| \; | \; B_0, B_T \right ] \le \sqrt{d \, (t-s) } \, .
    \end{equation}
\end{lemma}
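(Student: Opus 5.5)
The plan is to use the explicit Gaussian law of the bridge increment and then apply Jensen's inequality. Conditionally on $B_0 = B_T = 0$, the bridge $B$ is a centred Gaussian process whose $d$ coordinates are independent one-dimensional Brownian bridges on $[0,T]$. Each coordinate has covariance function $\mathrm{Cov}(B_{s,r}, B_{t,r}) = s(T-t)/T$ for $s \le t$. So the increment $B_t - B_s$ is a centred Gaussian vector with covariance $\sigma^2 I_d$, and the first step is to compute $\sigma^2$.

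Expanding the variance of the difference for $s \le t$ gives
\begin{equation}
    \sigma^2 = \frac{t(T-t)}{T} + \frac{s(T-s)}{T} - 2\frac{s(T-t)}{T} = (t-s)\left(1 - \frac{t-s}{T}\right) \le t-s \, .
\end{equation}
The middle equality is a short algebraic simplification, which I would check directly: the numerator $tT - t^2 + sT - s^2 - 2sT + 2st$ equals $(t-s)T - (t-s)^2$. The final inequality uses $0 \le t-s \le T$.

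Next I apply Jensen's inequality, exactly as in the proof of Lemma~\ref{lem:theta_T}:
\begin{equation}
    \mathbb{E}\left[\|B_t - B_s\| \mid B_0, B_T\right] \le \sqrt{\mathbb{E}\left[\|B_t-B_s\|^2 \mid B_0,B_T\right]} = \sqrt{\sum_{r=1}^d \sigma^2} = \sqrt{d\,\sigma^2} \le \sqrt{d\,(t-s)} \, .
\end{equation}
This is the claimed bound.

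The only step needing care is justifying the bridge covariance. I would derive it from the representation $B_t = W_t - (t/T) W_T$, with $W$ a standard Brownian motion, which has the law of a Brownian bridge pinned at zero at both ends. Then $\mathrm{Cov}(B_s, B_t) = s - st/T - st/T + st/T = s(T-t)/T$ for $s \le t$. Coordinate independence follows from the independence of the coordinates of $W$. No real obstacle is expected; the bound even holds with the sharper factor $1 - (t-s)/T$, which we simply drop.
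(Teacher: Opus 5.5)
Your proposal is correct and follows essentially the same route as the paper: it computes $\mathrm{Var}(B_t - B_s) = (t-s)\bigl(1 - (t-s)/T\bigr) I_d$ from the bridge covariance $s(T-t)/T$, then applies Jensen's inequality and drops the factor $1 - (t-s)/T \le 1$. Your extra justification of the covariance via $B_t = W_t - (t/T)W_T$ is a detail the paper simply quotes as standard.
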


\begin{proof}
    Brownian bridge increments are normally distributed, with mean zero and the following variance. As the components are all independent, we have
    \begin{align}
        & \mathrm{Var}(B_{t} - B_s) \\
        & = \mathrm{Var}(B_{t}) + \mathrm{Var}(B_s) - 2 \mathrm{Cov}(B_{t},B_{s}) \\
        & = \left ( t - \frac{t^2}{T} \right ) I_d + \left ( s - \frac{s^2}{T} \right ) I_d - 2 \left ( s - \frac{s t}{T} \right ) I_d \\
        & = \left ( t+s-2s - \frac{1}{T} \left ( t^2 + s^2 -2st \right ) \right ) I_d \\
        & = \left ( (t-s) - \frac{(t-s)^2}{T} \right ) I_d \\
        & = (t-s) \left ( 1- \frac{(t-s)}{T} \right ) I_d \, .
    \end{align}
    Therefore,
    \begin{equation} \label{eq:bridge_incr}
        B_{t} - B_{s} \sim \mathcal{N} \left ( 0, \, (t-s) (T-t+s)/T \cdot I_d \right ) \, .
    \end{equation}
    Using Jensen's inequality and the variance calculated above, we have
    \begin{align}
        \mathbb{E} \left [ \|B_{t} - B_{s} \| \; | \; B_0, B_T \right ] 
        & \le \sqrt{ \mathbb{E} \left [ \|B_{t} - B_{s} \|^2 \; | \; B_0, B_T \right ] } \\
        & = \sqrt{(t-s) (1-(t-s)/T) \; \mathrm{tr}( I_d )} \\
        & = \sqrt{d \, (t-s) (1-(t-s)/T)} \\
        & \le \sqrt{d \, (t-s) } \, ,
    \end{align}
    where $\sqrt{t-s} \ge \sqrt{(t-s) (1-(t-s)/T)}$ for $0 \le t-s \le T$.
\end{proof}

\begin{lemma} \label{lem:theta_increment}
    Let $\theta_t$ follow the SDE from Equation \eqref{eq:sde} on $t \in [0,T]$. Then, 
    with $0 \le s \leq t \le T$,
    \begin{equation}
        \mathbb{E} \| \theta_{t}^{(c)} - \theta_{s}^{(c)} \| \le (t-s) \sqrt{\frac{d}{CT}} + \frac{(t-s)}{T} \sqrt{d} \, \kappa^{(c)} + \sqrt{d \, (t-s)}
    \end{equation}
\end{lemma}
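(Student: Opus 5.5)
The approach is to decompose the path $\theta^{(c)}_t$ into its conditional mean given the endpoints plus a Brownian bridge fluctuation. As noted in Section~\ref{sec:BF} (following the proof of Theorem~1 of \cite{dai2023bf}), conditionally on $\theta_0$ and the common endpoint $\vartheta$, each component is an independent Brownian bridge from $\theta_0^{(c)}$ to $\vartheta$. So we write
\begin{equation}
    \theta_t^{(c)} = \theta_0^{(c)} + \frac{t}{T} \big ( \vartheta - \theta_0^{(c)} \big ) + B_t^{(c)} \, ,
\end{equation}
where $B^{(c)}$ is a $d$-dimensional Brownian bridge from $0$ at time $0$ to $0$ at time $T$, independent of $(\theta_0, \vartheta)$. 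Taking the difference at times $t$ and $s$ gives
\begin{equation}
    \theta_t^{(c)} - \theta_s^{(c)} = \frac{t-s}{T} \big ( \vartheta - \bar\theta_0 \big ) + \frac{t-s}{T} \big ( \bar\theta_0 - \theta_0^{(c)} \big ) + \big ( B_t^{(c)} - B_s^{(c)} \big ) \, .
\end{equation}
Here we have inserted $\pm \bar\theta_0$ so that each piece matches one of the earlier lemmas.

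Next we apply the triangle inequality and take expectations term by term. The first term is controlled by Lemma~\ref{lem:theta_T}: it contributes $\frac{t-s}{T}\sqrt{dT/C} = (t-s)\sqrt{d/(CT)}$. The second term is controlled by Lemma~\ref{lem:theta_0}: it contributes $\frac{t-s}{T}\sqrt{d}\,\kappa^{(c)}$. For the third term we condition on the endpoints and apply Lemma~\ref{lem:bridge_incr}, which gives $\mathbb{E}[\|B_t^{(c)}-B_s^{(c)}\| \mid \cdot\,] \le \sqrt{d(t-s)}$. The tower property then removes the conditioning. Summing the three contributions gives exactly the stated bound.

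The main obstacle is justifying the decomposition rigorously. We must check that, under $\mathbb{P}$, the conditional law of $\theta^{(c)}$ given $(\theta_0,\vartheta)$ is a Brownian bridge with the linear interpolating mean above. We must also check that the fluctuation $B^{(c)}$ has the law assumed in Lemma~\ref{lem:bridge_incr}, and that this law does not depend on $(\theta_0,\vartheta)$. I would cite the cited conditional-independence property from \cite{dai2023bf}. As a cross-check, one can verify it through the Gaussian transition in Equations~\eqref{eq:transitionsM}--\eqref{eq:transitionsS}. The joint law of $(\theta_t,\theta_s,\vartheta)$ given $\theta_0$ is Gaussian, so the conditional mean and covariance can be matched directly. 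Once this is in place, the remaining steps are routine.
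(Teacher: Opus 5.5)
Your proposal is correct and follows the paper's proof essentially step for step. The common ingredients are the conditional Brownian-bridge decomposition given $(\theta_0,\vartheta)$, the triangle inequality, Lemma~\ref{lem:bridge_incr} applied through the tower property, and Lemmas~\ref{lem:theta_T} and~\ref{lem:theta_0} after inserting $\pm\bar\theta_0$; the only cosmetic difference is that you insert $\pm\bar\theta_0$ before taking expectations, whereas the paper does so after bounding the bridge term.
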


\begin{proof}
    Conditioning on the endpoints $\theta_0, \vartheta$, the process $\theta_t$ is a vector of $d$ independent Brownian bridges on $[0,T]$, and can be decomposed as 
    \begin{equation} \label{eq:theta_increment}
        \theta_{t}^{(c)} - \theta_{s}^{(c)} = \frac{t-s}{T}(\vartheta - \theta_0^{(c)}) + (B_{t}^{(c)} - B_{s}^{(c)}) \, ,
    \end{equation}
    where the first term represents the drift, and the second is the increment of a $d$-dimensional Brownian bridge on $[0,T]$
    \citep[Ch.~I, Sec.~3, pp.~35--36]{revuz1999continuous}.

    By the law of total expectations, we can write
    \begin{equation}
        \mathbb{E} \| \theta_{t}^{(c)} - \theta_{s}^{(c)} \|
        = \mathbb{E} \left [ \mathbb{E} \left [ \| \theta_{t}^{(c)} - \theta_{s}^{(c)} \| \; | \; \theta_0, \vartheta \right ] \right ] \, .
    \end{equation}
    Substituting in Equation \eqref{eq:theta_increment}, using the triangle inequality, and Lemma \ref{lem:bridge_incr}, we have
    \begin{align}
        \mathbb{E} \| \theta_{t}^{(c)} - \theta_{s}^{(c)} \|
        & \le \mathbb{E} \left [ \frac{t-s}{T}  \| \vartheta - \theta_0^{(c)} \| + \mathbb{E} \left [ \| (B_{t}^{(c)} - B_{s}^{(c)}) \| \; | \; \theta_0, \vartheta \right ] \right ] \\
        & \le \frac{t-s}{T} \mathbb{E} \| \vartheta - \theta_0^{(c)} \| + \sqrt{d \, (t-s)}  \, .
    \end{align}
    We can now add and subtract $\bar{\theta}_0$ in the norm, and use triangle inequality to obtain
    \begin{align}
        \mathbb{E} \| \theta_{t}^{(c)} - \theta_{s}^{(c)} \|
        & \le \frac{t-s}{T} \mathbb{E} \left \| \vartheta - \bar{\theta}_0 \right \| + \frac{t-s}{T} \mathbb{E} \left \| \theta_0^{(c)} - \bar{\theta}_0 \right \| + \sqrt{d \, (t-s)} \, .
    \end{align}
    Using Lemma \ref{lem:theta_0}, and Lemma \ref{lem:theta_T}, we can bound the expression above by
    \begin{align}
        \mathbb{E} \| \theta_{t}^{(c)} - \theta_{s}^{(c)} \|
        & \le \frac{t-s}{T} \sqrt{d \, \frac{T}{C}} + \frac{t-s}{T} \sqrt{d} \,\kappa^{(c)} + \sqrt{d \, (t-s)} \\
        & = (t-s) \sqrt{\frac{d}{CT}} + \frac{t-s}{T} \sqrt{d} \, \kappa^{(c)} + \sqrt{d \, (t-s)} \, .
    \end{align}
\end{proof}

\begin{lemma} \label{lem:ex_mvt}
For any $x, y \in \mathbb{R}$,
\begin{equation}
    |\mathrm{e}^{-y} - \mathrm{e}^{-x}|
    \le
    \mathrm{e}^{-\min(x,y)} |y-x| \, .
\end{equation}
\end{lemma}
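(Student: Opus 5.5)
The plan is to reduce the inequality to the one-dimensional mean value theorem applied to the function $g(u) = \mathrm{e}^{-u}$. First dispose of the trivial case $x = y$, where both sides vanish. By symmetry of both sides in $(x,y)$ (the right-hand side involves only $\min(x,y)$ and $|y-x|$), we may assume without loss of generality that $x < y$, so that $\min(x,y) = x$.

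Since $g$ is continuously differentiable on $[x,y]$ with $g'(u) = -\mathrm{e}^{-u}$, the mean value theorem gives some $\xi \in (x,y)$ with
\begin{equation}
    \mathrm{e}^{-y} - \mathrm{e}^{-x} = -\mathrm{e}^{-\xi} (y - x) \, .
\end{equation}
Taking absolute values gives $|\mathrm{e}^{-y} - \mathrm{e}^{-x}| = \mathrm{e}^{-\xi} |y-x|$. Because $u \mapsto \mathrm{e}^{-u}$ is decreasing and $\xi > x = \min(x,y)$, we have $\mathrm{e}^{-\xi} \le \mathrm{e}^{-\min(x,y)}$, which yields the claim.

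An equivalent route, avoiding the mean value theorem, would write $|\mathrm{e}^{-y} - \mathrm{e}^{-x}| = \int_x^y \mathrm{e}^{-u}\, \mathrm{d}u$ for $x<y$. Bounding the integrand by its value at the left endpoint $u = x$ then gives the same inequality. There is no real obstacle here. The only point needing care is identifying the monotonicity direction, so that the derivative is bounded by its value at the \emph{smaller} argument. This matters later because, in the proof of Theorem~\ref{th:bound_strong}, $\min(x,y)$ will be bounded below via Assumption~\ref{as:phi_min} by $C\bar{\Phi}T$, producing the factor $\mathrm{e}^{-C\bar{\Phi}T}$.
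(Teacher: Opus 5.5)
Your proof is correct and is essentially the paper's argument. The paper writes $\mathrm{e}^{-y}-\mathrm{e}^{-x}=\int_x^y -\mathrm{e}^{-z}\,\mathrm{d}z$ and bounds the integrand by its supremum $\mathrm{e}^{-\min(x,y)}$ on the interval, which is exactly your alternative route; your mean value theorem version is the same derivative bound in pointwise form, and your explicit reduction to $x<y$ is slightly more careful than the paper's $\sup_{w\in[x,y]}$, which tacitly assumes the endpoints are ordered.
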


\begin{proof}
   For any real $x$ and $y$ we have:
    \begin{align}
    |e^{-y} - e^{-x}| 
    & = \left\vert \int_x^{y} \frac{\mathrm{d} (e^{-z})}{\mathrm{d}z} \, \mathrm{d}z \right\vert \\
    & = \left\vert \int_x^{y} - e^{-z} \, \mathrm{d}z \right\vert \\
    & \leq \sup_{w \in [x,y]} \left\vert \int_x^{y} e^{-w} \, \mathrm{d}z \right\vert \\
    & = \left\vert \int_x^{y} e^{-\inf\{w : w \in [x,y]\}} \, \mathrm{d}z \right\vert \\
    & = \exp(-\min(x,y)) |y-x| \, .
    \end{align}
\end{proof}

\begin{lemma} \label{lem:minAB}
    Let us define the following shorthand notation:
    \begin{align}
    A := \sum_{c=1}^C \int_0^T \phi^{(c)}(\theta_s^{(c)}) \, \mathrm{d} s \, , 
    \qquad 
    A_\Delta := \sum_{c=1}^C \int_0^T \phi_\Delta^{(c)}(s, \theta_{t_{0:n}}^{(c)}) \, \mathrm{d} s \, .
    \end{align}
    Under Assumption \ref{as:phi_min}, we have:
    \begin{equation}
        \min (A, A_\Delta) \ge C \bar{\Phi} T \, ,
    \end{equation}
    where $\bar{\Phi} := C^{-1} \sum_{c=1}^C \Phi^{(c)}$.
\end{lemma}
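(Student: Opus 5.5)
The plan is to bound $A$ and $A_\Delta$ separately from below by the same constant, since $\min(A,A_\Delta) \ge C\bar\Phi T$ holds as soon as each of them does. Both are sums over $c$ of time integrals over $[0,T]$. It therefore suffices to show that each integrand is bounded below by $\Phi^{(c)}$ pointwise in $s$, and for every path.

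For $A$, Assumption~\ref{as:phi_min} gives $\phi^{(c)}(\theta_s^{(c)}) \ge \Phi^{(c)}$ for every $s\in[0,T]$ and every realisation of the path. Integrating over $[0,T]$ gives
\begin{equation}
    \int_0^T \phi^{(c)}(\theta_s^{(c)})\,\mathrm{d}s \ge \Phi^{(c)} T \, .
\end{equation}
The integral is well defined almost surely by Assumption~\ref{as:fubini}, or it can be read as an integral of a function bounded below, possibly equal to $+\infty$, in which case the inequality holds trivially. Summing over $c$ and using $\sum_c \Phi^{(c)} = C\bar\Phi$ gives $A \ge C\bar\Phi T$.

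For $A_\Delta$, I use the explicit form of $\phi_\Delta^{(c)}$ in Definition~\ref{def:trapezium}. On each interval $[t_j,t_{j+1}]$ with $s=t_j+h$, it equals $(1-h/\Delta)\,\phi^{(c)}(\theta_{t_j}^{(c)}) + (h/\Delta)\,\phi^{(c)}(\theta_{t_{j+1}}^{(c)})$. This is a convex combination of two values, each at least $\Phi^{(c)}$, so it is itself at least $\Phi^{(c)}$. Integrating over $[0,T]$ and summing over $c$ gives $A_\Delta \ge C\bar\Phi T$. Equivalently, one can argue directly from the trapezium sum: $\sum_j \tfrac{\Delta}{2}(\phi_j+\phi_{j+1}) \ge \sum_j \Delta\,\Phi^{(c)} = n\Delta\,\Phi^{(c)} = T\Phi^{(c)}$.

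Combining the two bounds gives the claim. There is no real obstacle here. The only points needing care are that the interpolation weights $1-h/\Delta$ and $h/\Delta$ lie in $[0,1]$, and that $n\Delta = T$ on the uniform mesh; for a non-uniform mesh, $\sum_j \Delta_j = T$ plays the same role.
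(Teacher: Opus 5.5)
Your proof is correct and follows essentially the same route as the paper, which bounds each integrand pointwise by $\inf_\theta \phi^{(c)}(\theta) \ge \Phi^{(c)}$ on the grounds that the linear interpolant $\phi_\Delta^{(c)}$ cannot fall below the infimum of $\phi^{(c)}$, and then integrates and sums over $c$. Your convex-combination argument spells out the interpolation step that the paper only states.
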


\begin{proof}
The function $\phi^{(c)}_\Delta$ interpolates points of $\phi^{(c)}$, and hence the infimum of $\phi^{(c)}_\Delta$ cannot be smaller than the infimum of $\phi^{(c)}$. 
Thus: 
\begin{align}
    \min (A, A_\Delta) & = \min \left ( \sum_{c=1}^C \int_0^T \phi^{(c)}(\theta_s^{(c)}) \, \mathrm{d} s, \quad \sum_{c=1}^C \int_0^T \phi_\Delta^{(c)}(s, \theta_{t_{0:n}}^{(c)}) \, \mathrm{d} s \right ) \\
    & \ge \sum_{c=1}^C \int_0^T \inf_\theta \phi^{(c)}(\theta) \, \mathrm{d} s  \\
    & \ge \sum_{c=1}^C \int_0^T \Phi^{(c)}  \, \mathrm{d}s \\
    & = \sum_{c=1}^C \Phi^{(c)}  \, T \\
    & = C \, \bar{\Phi} \, T \, ,
\end{align}
where $\Phi^{(c)}$ exists and is finite by Assumption \ref{as:phi_min}.
    
\end{proof}

\subsection{Proof of Proposition \ref{prop:strong_error} (Strong error bound)} \label{sec:strong_error_proof}

\begin{proof}[Proof of Proposition \ref{prop:strong_error}]
    Starting from the expression of the strong error, we can apply Jensen's inequality, and Fubini's theorem (relying on Assumption \ref{as:fubini}) to obtain
    \begin{align}
        \epsilon_\Delta 
        & = \mathbb{E} \left [ \left | \sum_{c=1}^C \int_0^T \phi^{(c)}(\theta_s^{(c)}) \, \mathrm{d} s - \sum_{c=1}^C \int_0^T \phi_\Delta^{(c)}(s, \theta_{t_{0:n}}^{(c)}) \, \mathrm{d} s \right | \right ] \\
        & = \mathbb{E}\left [ \left | \sum_{c=1}^C \int_0^T \left ( \phi^{(c)}(\theta_s^{(c)}) - \phi^{(c)}_\Delta(s, \theta_{t_{0:n}}^{(c)}) \right ) \, \mathrm{d} s\right | \right ] \\
        & \le \mathbb{E}\left [ \sum_{c=1}^C \int_0^T \left | \phi^{(c)}(\theta_s^{(c)}) - \phi^{(c)}_\Delta(s, \theta_{t_{0:n}}^{(c)}) \right | \, \mathrm{d}s \right ] \\
        & = \sum_{c=1}^C \int_0^T \mathbb{E} \left | \phi^{(c)}(\theta_s^{(c)}) - \phi^{(c)}_\Delta(s, \theta_{t_{0:n}}^{(c)}) \right | \, \mathrm{d}s \, .
    \end{align}
    Let us divide the integral on each subinterval on the partition $\mathcal{P}$, and apply a change  of variable to express the increments in terms of the variable $h$:
    \begin{align}
        \epsilon_\Delta 
        & \le \sum_{c=1}^C \int_0^T \mathbb{E} \left | \phi^{(c)}(\theta_s^{(c)}) - \phi^{(c)}_\Delta(s, \theta_{t_{0:n}}^{(c)}) \right | \, \mathrm{d}s \\
        & = \sum_{c=1}^C \sum_{j=0}^{n-1} \int_{t_{j}}^{t_{j+1}} \mathbb{E} \left | \phi^{(c)}(\theta_s^{(c)}) - \phi^{(c)}_\Delta(s, \theta_{t_{0:n}}^{(c)}) \right | \, \mathrm{d}s \\
        & = \sum_{c=1}^C \sum_{j=0}^{n-1} \int_0^\Delta \mathbb{E} \left | \phi^{(c)}(\theta_{t_{j}+h}^{(c)}) - \phi^{(c)}_\Delta(t_{j}+h, \theta_{t_{0:n}}^{(c)}) \right | \, \mathrm{d}h \, .
    \end{align}
    Substituting the definition of $\phi^{(c)}_\Delta(\theta_{t_{j}+h}^{(c)})$ from Definition \ref{def:trapezium} in the modulus above and using the triangle inequality, we obtain
    \begin{align}
        & \left | \phi^{(c)}(\theta_{t_{j}+h}^{(c)}) - \phi^{(c)}_\Delta(t_{j}+h, \theta_{t_{0:n}}^{(c)}) \right | \\
        & = \Big | \phi^{(c)}(\theta_{t_{j}+h}^{(c)}) - \phi^{(c)}(\theta_{t_{j}}^{(c)}) - \frac{h}{\Delta} \left ( \phi^{(c)}(\theta_{t_{j+1}}^{(c)}) - \phi^{(c)}(\theta_{t_{j}}^{(c)}) \right ) \Big | \\
        & \le \Big | \phi^{(c)}(\theta_{t_{j}+h}^{(c)}) - \phi^{(c)}(\theta_{t_{j}}^{(c)}) \Big | + \frac{h}{\Delta} \Big  | \phi^{(c)}(\theta_{t_{j+1}}^{(c)}) - \phi^{(c)}(\theta_{t_{j}}^{(c)}) \Big | \, .
    \end{align}
    Substituting this decomposition into the bound above we obtain
    \begin{align}
        \epsilon_\Delta & \le \sum_{c=1}^C \sum_{j=0}^{n-1} \int_0^\Delta \mathbb{E} \left | \phi^{(c)}(\theta_{t_{j}+h}^{(c)}) - \phi^{(c)}(\theta_{t_{j}}^{(c)}) \right | \, \mathrm{d}h \\
        & \qquad + \sum_{c=1}^C \sum_{j=0}^{n-1} \int_0^\Delta \frac{h}{\Delta} \mathbb{E} \left | \phi^{(c)}(\theta_{t_{j+1}}^{(c)}) - \phi^{(c)}(\theta_{t_{j}}^{(c)}) \right | \, \mathrm{d}h \, .
    \end{align}

    Using the Lipschitz property from Assumption \ref{as:phi_lip}, we can bound the two terms directly as
    \begin{align}
        \epsilon_\Delta
        & \le L \sum_{c=1}^C \sum_{j=0}^{n-1} \int_0^\Delta
        \mathbb{E} \| \theta_{t_j+h}^{(c)} - \theta_{t_j}^{(c)} \| \, \mathrm{d}h \\
        & \quad + L \sum_{c=1}^C \sum_{j=0}^{n-1} \int_0^\Delta
        \frac{h}{\Delta}
        \mathbb{E} \| \theta_{t_{j+1}}^{(c)} - \theta_{t_j}^{(c)} \| \, \mathrm{d}h \, .
    \end{align}
    Using Lemma \ref{lem:theta_increment}, with $h$ in the first term and with $h=\Delta$ in the second, we have
    \begin{align}
        \epsilon_\Delta
        & \le L \sum_{c=1}^C \sum_{j=0}^{n-1} \int_0^\Delta
        \left ( h \sqrt{\frac{d}{CT}} + \frac{h}{T}\sqrt{d} \, \kappa^{(c)} + \sqrt{d h} \right ) \, \mathrm{d}h \\
        & \quad + L \sum_{c=1}^C \sum_{j=0}^{n-1} \int_0^\Delta
        \frac{h}{\Delta}
        \left ( \Delta \sqrt{\frac{d}{CT}} + \frac{\Delta}{T}\sqrt{d} \, \kappa^{(c)} + \sqrt{d \Delta} \right ) \, \mathrm{d}h \\[1em]
        & = L \sum_{c=1}^C \sum_{j=0}^{n-1}
        \left ( \frac{\Delta^2}{2}\sqrt{\frac{d}{CT}} + \frac{\Delta^2}{2T}\sqrt{d} \, \kappa^{(c)} + \frac{2}{3}\sqrt{d}\Delta^{3/2} \right ) \\
        & \quad + L \sum_{c=1}^C \sum_{j=0}^{n-1}
        \left ( \frac{\Delta^2}{2}\sqrt{\frac{d}{CT}} + \frac{\Delta^2}{2T}\sqrt{d} \, \kappa^{(c)} + \frac{1}{2}\sqrt{d}\Delta^{3/2} \right ) \\[1em]
        & = L \sum_{c=1}^C \sum_{j=0}^{n-1}
        \left ( \Delta^2 \sqrt{\frac{d}{CT}} + \frac{\Delta^2}{T}\sqrt{d} \, \kappa^{(c)} + \frac{7}{6}\sqrt{d}\Delta^{3/2} \right ) \\
        & = L C n \sqrt{d}
        \left ( \frac{\Delta^2}{\sqrt{CT}} + \frac{\Delta^2}{T}\bar{\kappa} + \frac{7}{6}\Delta^{3/2} \right ) \\
        & = L C \sqrt{d}
        \left ( \Delta \sqrt{\frac{T}{C}} + \Delta \bar{\kappa} + \frac{7}{6} T \sqrt{\Delta} \right ) \\
        & := E_\Delta \, ,
    \end{align}
    where we resolved the sums in $C$ and $n$, substituted $n \Delta = T$, and set $\bar{\kappa} = C^{-1} \sum_{c=1}^C \kappa^{(c)}$. For small $\Delta$, the term corresponding to the Brownian fluctuations dominates.
\end{proof}

\subsection{Proof of Theorem \ref{th:bound_strong} (TV distance bound)} \label{sec:bound_strong_proof}

\begin{proof}[Proof of Theorem \ref{th:bound_strong}]
    Let us adopt the following shorthand notation:
    \begin{align}
        A := \sum_{c=1}^C \int_0^T \phi^{(c)}(\theta_s^{(c)}) \, \mathrm{d} s \, , 
        \qquad 
        A_\Delta := \sum_{c=1}^C \int_0^T \phi_\Delta^{(c)}(s, \theta_{t_{0:n}}^{(c)}) \, \mathrm{d} s \, ,
    \end{align}
	and define $\mathbb{F}$ and $\mathbb{F}_\Delta$ via their Radon--Nikodym derivatives with respect to $\mathbb{P}$:
    \begin{equation}
        \frac{\mathrm{d} \mathbb{F}}{\mathrm{d} \mathbb{P}} = \frac{\rho_0 \mathrm{e}^{-A}}{Z} \, ,
        \qquad 
        \frac{\mathrm{d} \mathbb{F}_\Delta}{\mathrm{d} \mathbb{P}} = \frac{\rho_0 \mathrm{e}^{-A_\Delta}}{Z_\Delta}
    \end{equation}
    where $\rho_0$, $A$ and $A_\Delta$ are regarded as functions of the path $\theta_{0:T}$, with $\rho_0 = \rho_0(\theta_0)$, $Z = \mathbb{E}[\rho_0 \mathrm{e}^{-A}]$, and $Z_\Delta = \mathbb{E} [\rho_0 \mathrm{e}^{-A_\Delta} ]$.
    The total variation distance between $\mathbb{F}$ and $\mathbb{F}_\Delta$
    can be written as 
    \begin{align}
      \| \mathbb{F} - \mathbb{F}_\Delta \|_{\text{TV}} 
        & = \frac{1}{2} \left \| \frac{\mathrm{d} \mathbb{F}}{\mathrm{d} \mathbb{P}} - \frac{\mathrm{d} \mathbb{F}_\Delta}{\mathrm{d} \mathbb{P}} \right \|_{L_1(\mathbb{P})}\\ 
        & = \frac{1}{2} \mathbb{E} \left | \frac{\rho_0 e^{-A}}{Z} - \frac{\rho_0 e^{-A_\Delta}}{Z_\Delta}\right | \\
        & = \frac{1}{2} \mathbb{E} \left [ \rho_0 \left | \frac{e^{-A}}{Z} - \frac{e^{-A_\Delta}}{Z_\Delta}\right | \right ] \\
        & = \frac{1}{2 Z} \mathbb{E} \left [ \rho_0 \left | e^{-A} - e^{-A_\Delta} \frac{Z}{Z_\Delta} \right | \right ] \, .
    \end{align}
    Adding and subtracting $\exp \{ - A_\Delta\}$ inside the absolute value, 
    \begin{align}
        \| \mathbb{F} - \mathbb{F}_\Delta \|_{\text{TV}} 
        & = \frac{1}{2 Z} \mathbb{E}\left [ \rho_0 \left | e^{-A} - e^{-A_\Delta} + e^{-A_\Delta} - e^{-A_\Delta} \frac{Z}{Z_\Delta} \right | \right ] \\
        & = \frac{1}{2 Z} \mathbb{E} \left [ \rho_0 \left | e^{-A} - e^{-A_\Delta} + \left ( 1 - \frac{Z}{Z_\Delta} \right ) e^{-A_\Delta} \right | \right ] \\
        & \le \frac{1}{2 Z} \mathbb{E} \left [ \rho_0 \left | e^{-A} - e^{-A_\Delta} \right | \right ] + \frac{1}{2 Z} \frac{| Z_\Delta - Z| }{Z_\Delta} \mathbb{E} \left [ \rho_0 e^{-A_\Delta} \right ] \, ,
    \end{align}
    where on the last line we used the triangle inequality. Using the definition of $Z$ and $Z_\Delta$,
    \begin{align}
        | Z_\Delta - Z | 
        & = \left | \mathbb{E} \left [ \rho_0 e^{-A_\Delta} - \rho_0 e^{-A} \right ] \right |
        \le \mathbb{E} \left [ \rho_0 \left | e^{-A} -  e^{-A_\Delta} \right | \right ] \, .
    \end{align}
    Therefore, we have
    \begin{align}
        \| \mathbb{F} - \mathbb{F}_\Delta \|_{\text{TV}}
        & \le \frac{1}{2 Z} \mathbb{E} \left [ \rho_0 \left | e^{-A} - e^{-A_\Delta} \right | \right ] + \frac{1}{2Z} \mathbb{E} \left [ \rho_0 \left | e^{-A} -  e^{-A_\Delta} \right | \right ] \\
        & = \frac{1}{Z} \mathbb{E} \left [ \rho_0 \left | e^{-A} - e^{-A_\Delta} \right | \right ] \\
        & \le \frac{1}{Z} \mathbb{E} \left [ \left | e^{-A} - e^{-A_\Delta} \right | \right ] \, ,
    \end{align}
    where on the last line we used the fact that $0 \leq \rho_0 \le 1$  almost surely, as the initial weight $\rho_0$ is the exponential of a negative number. 

    Using Lemma \ref{lem:ex_mvt} on the first line, and Lemma \ref{lem:minAB} on the second,
    \begin{align}
        \| \mathbb{F} - \mathbb{F}_\Delta \|_{\text{TV}}
        & \le \frac{1}{Z} \mathbb{E} \left [ \left | e^{-A} - e^{-A_\Delta} \right | \right ] \\
        & \le \frac{1}{Z} \mathbb{E} \left [ \mathrm{e}^{- \min (A, A_\Delta)}  \, |A-A_\Delta| \right ] \\
        & \le \frac{1}{Z} \mathbb{E} \left [ \mathrm{e}^{- C \,\bar{\Phi}  \, T}  \, |A-A_\Delta| \right ] \\
        & = \frac{1}{Z} \mathrm{e}^{- C \, \bar{\Phi}  \, T}  \, \mathbb{E} \left [ |A-A_\Delta| \right ] \\
        & = \frac{1}{Z} \mathrm{e}^{- C \, \bar{\Phi}  \, T}  \cdot \epsilon_\Delta \, .
    \end{align}
    Using Proposition \ref{prop:strong_error}, this expression can be bounded as
    \begin{align}
        \| \mathbb{F} - \mathbb{F}_\Delta \|_{\text{TV}}
        & \le \frac{1}{Z} \mathrm{e}^{- C \, \bar{\Phi}  \, T}  \cdot E_\Delta  \\
        & =  \frac{1}{Z} \mathrm{e}^{- C \, \bar{\Phi}  \, T} \; L C \sqrt{d} \left ( \Delta \sqrt{\frac{T}{C}} + \Delta \bar{\kappa} + \frac{7}{6} T \sqrt{\Delta} \right ) \, .
    \end{align}
\end{proof}

\subsection{Proof of Remark \ref{rem:Phi_neg} (Negative lower bounds)} \label{sec:Phi_neg_proof}

\begin{proof}[Proof of Remark \ref{rem:Phi_neg}]
    We first show that $\phi^{(c)}$ has mean zero under $f_c$. Let $\theta \sim f_c$. Then
    \begin{align}
    \mathbb{E} [ \phi^{(c)}(\theta) ]
    & = \int_{\mathbb{R}^d} f_c(\theta) \, \phi^{(c)}(\theta) \, \mathrm{d} \theta \\
    & = \int_{\mathbb{R}^d} f_c(\theta) \, \frac{1}{2} \sum_{r=1}^d \left \{ \left ( \frac{\partial}{\partial \theta_r} \log f_c(\theta) \right )^2 + \frac{\partial^2}{\partial \theta_r^2} \log f_c(\theta) \right \} \mathrm{d} \theta \\
    & = \int_{\mathbb{R}^d} f_c(\theta) \, \frac{1}{2} \sum_{r=1}^d \frac{ \frac{\partial^2}{\partial \theta_r^2} f_c(\theta) }{ f_c(\theta) } \, \mathrm{d} \theta \\
    & = \frac{1}{2} \sum_{r=1}^d \int_{\mathbb{R}^d} \frac{\partial^2}{\partial \theta_r^2} f_c(\theta) \, \mathrm{d} \theta \, ,
    \end{align}
    where the second line uses the definition of $\phi^{(c)}$ in Equation~\eqref{eq:phi}, the third uses $\frac{\partial^2}{\partial \theta_r^2} \log f_c = \frac{\partial^2}{\partial \theta_r^2} f_c / f_c - \big ( \frac{\partial}{\partial \theta_r} \log f_c \big )^2$, obtained by differentiating $\frac{\partial}{\partial \theta_r} \log f_c = \frac{\partial}{\partial \theta_r} f_c / f_c$, and the last cancels $f_c$. Integrating by parts each component $r$ on the last line above, and from Assumption \ref{as:smoothness} applying the property $\frac{\partial}{\partial \theta_r} f_c(\theta_r,\theta_{-r}) \to 0$,
    as $\theta_r \to \pm \infty$, it follows that $\forall r=1,\ldots,d$,
    \begin{equation}
        \int_{\mathbb{R}^d} \frac{\partial^2}{\partial \theta_r^2} f_c(\theta) \, \mathrm{d} \theta = 0 \, ,
    \end{equation}
    and therefore that
    \begin{equation}
        \mathbb{E} [ \phi^{(c)}(\theta) ] = 0 \, .
    \end{equation}
    Hence a lower bound $\Phi^{(c)}$ cannot be positive. It cannot be zero either. If $\phi^{(c)} \ge 0$, then $\phi^{(c)} = 0$ almost everywhere, because its mean is zero and $f_c > 0$, and so everywhere, because $\phi^{(c)}$ is continuous by Assumption~\ref{as:smoothness}. By the third line above, $\phi^{(c)} = \sum_{r=1}^d \frac{\partial^2}{\partial \theta_r^2} f_c / (2 f_c)$, so $f_c$ would be a positive harmonic function on $\mathbb{R}^d$. By Liouville's theorem for positive harmonic functions, $f_c$ would then be constant, and could not be a probability density. Therefore $\Phi^{(c)} < 0$ for every $c$, $\bar{\Phi} < 0$, and the exponent $- C \, \bar{\Phi} \, T$ is positive. The same argument applies to the preconditioned potential in Equation~\eqref{eq:phi_precond}, after the linear change of variables $\theta \mapsto \Lambda_c^{-1/2} \theta$.
\end{proof}

\subsection{Technical lemmas for the weak error bound}

\begin{lemma} \label{lem:int_incr_h}
    Let $\theta_t = \{ \theta_t^{(c)} \}_{c=1}^C$ follow the SDE from Equation \eqref{eq:sde} on $t \in [0,T]$. Then for $s \leq v \leq t$,
    \begin{equation}
        \int_s^t \mathbb{E} [\theta_{v}^{(c)} - \theta_{s}^{(c)} \, | \, \theta_{s}, \theta_{t} ] \, \mathrm{d}v = \frac{t-s}{2} (\theta_{t}^{(c)} - \theta_{s}^{(c)}) \, .
    \end{equation}
\end{lemma}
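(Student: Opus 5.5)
The plan is to use the Markov structure of the coalescing diffusion. Conditionally on $\theta_s$, the future evolution on $[s,T]$ is again the SDE \eqref{eq:sde}, started at $\theta_s$. By the bridge representation used in the proof of Lemma~\ref{lem:theta_increment}, conditionally on $\theta_s$ and on the endpoint $\vartheta$, each component $\theta^{(c)}$ on $[s,T]$ is a Brownian bridge from $\theta_s^{(c)}$ to $\vartheta$. The components are independent across $c$.

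Conditioning further on $\theta_t$, with $t\le T$, and using the Markov property of Brownian bridges, the piece of $\theta^{(c)}$ on $[s,t]$ is a Brownian bridge from $\theta_s^{(c)}$ to $\theta_t^{(c)}$. The information after $t$, namely $\vartheta$ and the other components, is irrelevant once $\theta_s,\theta_t$ are fixed.

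This gives the conditional mean, for $v\in[s,t]$,
\begin{equation}
    \mathbb{E}[\theta_v^{(c)} \mid \theta_s,\theta_t] = \theta_s^{(c)} + \frac{v-s}{t-s}\,(\theta_t^{(c)}-\theta_s^{(c)}) \, ,
\end{equation}
which is linear interpolation of the endpoints. Subtracting $\theta_s^{(c)}$ and integrating in $v$ over $[s,t]$ uses $\int_s^t (v-s)\,\mathrm{d}v = (t-s)^2/2$. This yields exactly $\frac{t-s}{2}(\theta_t^{(c)}-\theta_s^{(c)})$. The case $s=t$ is trivial, since both sides vanish.

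The only delicate step is justifying the bridge-from-$\theta_s^{(c)}$-to-$\theta_t^{(c)}$ claim for the coupled system, where the conditioning is on the whole vector $(\theta_s,\theta_t)$ rather than on $\vartheta$. I would argue it in two stages. First, condition on $(\theta_0,\vartheta)$, under which the components are independent Brownian bridges \citep[proof of Theorem~1]{dai2023bf}; a bridge restricted to $[s,t]$, given its values at $s$ and $t$, is a bridge between those values regardless of the outer endpoints. Second, average out $(\theta_0,\vartheta)$. By the Markov property of \eqref{eq:sde}, the extra variables carry no further information about $\theta_v$ given $(\theta_s,\theta_t)$, so the tower property gives the stated conditional mean.

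Alternatively, one could verify it directly from the Gaussian transitions \eqref{eq:transitions_theta}--\eqref{eq:transitionsS} by computing $\mathbb{E}[\theta_v\mid\theta_s,\theta_t]$ with Gaussian conditioning. This is a more tedious check that uses the Kronecker structure of $V$. Finally, an exchange of integral and conditional expectation is harmless here, since the integrand is linear and integrable.
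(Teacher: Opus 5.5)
Your proposal is correct and follows essentially the same route as the paper. The paper writes $\theta_v^{(c)}$, given $(\theta_s,\theta_t)$, as the linear interpolation $\theta_s^{(c)}+\frac{v-s}{t-s}(\theta_t^{(c)}-\theta_s^{(c)})$ plus a zero-mean Brownian bridge on $[s,t]$, citing Revuz--Yor; it then takes the conditional mean and integrates using $\int_s^t (v-s)\,\mathrm{d}v=(t-s)^2/2$. Your two-stage justification that the coupled system's restriction to $[s,t]$, given $(\theta_s,\theta_t)$, is a Brownian bridge fills in a step the paper only asserts via the citation: you condition first on $(\theta_0,\vartheta)$ and then apply the tower property. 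The tower property alone already suffices, because the resulting conditional mean is $\sigma(\theta_s,\theta_t)$-measurable, so the appeal to the Markov property is not needed.
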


\begin{proof}
    Let us decompose the process $\theta_t^{(c)}$ as follows \citep[Ch.~I, Sec.~3, pp.~35--36]{revuz1999continuous}.
\begin{equation}
    \theta_{v}^{(c)} \, | \, \theta_{s}, \theta_{t} = \theta_{s}^{(c)} + \frac{v-s}{t-s} (\theta_{t}^{(c)} - \theta_{s}^{(c)}) + (B_{v}^{(c)} - B_{s}^{(c)}) \, ,
\end{equation}
where $B_{v}^{(c)}$ is a Brownian bridge over time interval $[s,t]$, with endpoints $B_{s}^{(c)} = B_{t}^{(c)} = \boldsymbol{0}$.
The increment of $\theta_v^{(c)}$ conditioned on the endpoints of the interval is then
\begin{equation}
    \theta_{v}^{(c)} - \theta_{s}^{(c)} \, | \, \theta_{s}, \theta_{t} = \frac{v-s}{t-s} (\theta_{t}^{(c)} - \theta_{s}^{(c)}) + (B_{v}^{(c)} - B_{s}^{(c)}) \, .
\end{equation}
As $(B_{v}^{(c)} - B_{s}^{(c)})$ is an increment of a Brownian bridge with zero as initial and final point, it has $\mathbb{E}[B_{v}^{(c)} - B_{s}^{(c)} \, | \, B_{s}, B_{t}] = \boldsymbol{0}$. Therefore,
\begin{equation}
    \mathbb{E} \left [ \theta_{v}^{(c)} - \theta_{s}^{(c)} \, | \, \theta_{s}, \theta_{t} \right ] = \frac{v-s}{t-s} (\theta_{t}^{(c)} - \theta_{s}^{(c)}) \, ,
\end{equation}
and 
\begin{align} \label{eq:expected_increment_h}
    \int_s^t \mathbb{E} \left [ \theta_{v}^{(c)} - \theta_{s}^{(c)} \, | \, \theta_{s}, \theta_{t} \right ] \, \mathrm{d} v
    & = \frac{1}{t-s} (\theta_{t}^{(c)} - \theta_{s}^{(c)}) \int_s^t (v-s) \, \mathrm{d} v \\
    & = \frac{t-s}{2} (\theta_{t}^{(c)} - \theta_{s}^{(c)}) \, .
\end{align}
\end{proof}

\begin{lemma} \label{lem:int_square_norm_incr_h}
    Let $\theta_t = \{ \theta_t^{(c)} \}_{c=1}^C$ follow the SDE from Equation \eqref{eq:sde} on $t \in [0,T]$. 
    For $s \leq v \leq t$,
    \begin{equation}
        \int_s^t \mathbb{E} [ \| \theta_{v}^{(c)} - \theta_{s}^{(c)} \|^2 \, | \, \theta_{s}, \theta_{t}] \, \mathrm{d}v = \frac{t-s}{3} \| \theta_{t}^{(c)} - \theta_{s}^{(c)} \|^2 + d \frac{(t-s)^2}{6} \, .
    \end{equation}
\end{lemma}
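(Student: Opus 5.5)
We follow the structure of Lemma~\ref{lem:int_incr_h}, now taking second moments. Conditionally on $\theta_s$ and $\theta_t$, the process on $[s,t]$ admits the decomposition used there:
\begin{equation}
    \theta_{v}^{(c)} - \theta_{s}^{(c)} = \frac{v-s}{t-s} (\theta_{t}^{(c)} - \theta_{s}^{(c)}) + (B_{v}^{(c)} - B_{s}^{(c)}) \, .
\end{equation}
Here $B^{(c)}$ is a $d$-dimensional Brownian bridge on $[s,t]$ pinned at zero at both ends. Its coordinates are independent, and it is independent of the conditioning endpoints.

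Expanding the squared norm gives three terms: the squared deterministic drift, a cross term, and the squared bridge increment. The cross term is linear in $B_v^{(c)} - B_s^{(c)}$, and this increment has conditional mean zero (as shown in Lemma~\ref{lem:int_incr_h}), so the cross term vanishes in conditional expectation. The drift term is deterministic given the endpoints and contributes
\begin{equation}
    \frac{(v-s)^2}{(t-s)^2}\| \theta_{t}^{(c)} - \theta_{s}^{(c)} \|^2 \, .
\end{equation}
For the bridge term we reuse the variance computation of Lemma~\ref{lem:bridge_incr}, with the horizon $T$ replaced by the interval length $t-s$ and times shifted to start at $s$. Since $B_s^{(c)} = 0$, this gives
\begin{equation}
    \mathbb{E}\| B_v^{(c)} - B_s^{(c)} \|^2 = d\,\frac{(v-s)(t-v)}{t-s} \, .
\end{equation}

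Finally we integrate over $v \in [s,t]$, substituting $u = v-s$ and writing $\delta = t-s$. The drift part gives $\delta^{-2}\int_0^\delta u^2\,du = \delta/3$. The bridge part gives $d\,\delta^{-1}\int_0^\delta u(\delta-u)\,du = d\,\delta^2/6$. Adding the two yields the stated identity.

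Interchanging the conditional expectation and the $\mathrm{d}v$ integral is harmless, since the integrand is nonnegative and is integrated in closed form. The only step needing care is the bridge variance: Lemma~\ref{lem:bridge_incr} is stated on $[0,T]$, so we must make sure the formula is applied on the subinterval $[s,t]$ with the correct length. The rest is routine algebra.
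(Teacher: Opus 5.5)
Your proof is correct and follows essentially the same route as the paper: the same conditional Brownian-bridge decomposition on $[s,t]$, a cross term that vanishes in conditional expectation, the bridge variance $\frac{(v-s)(t-v)}{t-s}\,d$, and the same two elementary integrals. The one point you flagged, taking the bridge variance on the subinterval $[s,t]$ rather than on $[0,T]$, is exactly how the paper computes it.
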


\begin{proof}
    As in the proof of Lemma \ref{lem:int_incr_h}, let us decompose the increment of $\theta_{v}^{(c)}$ as
    \begin{equation}
        \theta_{v}^{(c)} - \theta_{s}^{(c)} \, | \, \theta_{s}, \theta_{t} = \frac{v-s}{t-s} (\theta_{t}^{(c)} - \theta_{s}^{(c)}) + (B_{v}^{(c)} - B_{s}^{(c)}) \, ,
    \end{equation}
    with endpoints $B_{s}^{(c)} = B_{t}^{(c)} = \boldsymbol{0}$ \citep[Ch.~I, Sec.~3, pp.~35--36]{revuz1999continuous}. Then
    \begin{align}
        & \mathbb{E} [ \| \theta_{v}^{(c)} - \theta_{s}^{(c)} \|^2 \, | \, \theta_{s}, \theta_{t}] \\
        & = \mathbb{E} \left [ \left \| \frac{v-s}{t-s} ( \theta_{t}^{(c)} - \theta_{s}^{(c)} ) + (B_{v}^{(c)} - B_{s}^{(c)}) \right \|^2  \, | \, B_{s}, B_{t} \right ] \\[1em]
        & = \frac{(v-s)^2}{(t-s)^2} \| \theta_{t}^{(c)} - \theta_{s}^{(c)} \|^2 + \mathbb{E} \left [ \| B_{v}^{(c)} - B_{s}^{(c)} \|^2 \, | \, B_{s}, B_{t} \right ] \\
        & \qquad + 2 \frac{v-s}{t-s} ( \theta_{t}^{(c)} - \theta_{s}^{(c)} )^\top \, \mathbb{E} \left [ ( B_{v}^{(c)} - B_{s}^{(c)} ) \, | \, B_{s}, B_{t} \right ] \\[1em]
        & = \frac{(v-s)^2}{(t-s)^2} \| \theta_{t}^{(c)} - \theta_{s}^{(c)} \|^2 + \mathbb{E} \left [ \| B_{v}^{(c)} - B_{s}^{(c)} \|^2 \, | \, B_{s}, B_{t} \right ] \, ,
    \label{eq:squarednorm}
    \end{align}
    where the second equality follows from $\| a + b \|^2 =  (a+b)^\top (a+b) = \| a \|^2 + 2 a^\top b + \| b \|^2$, and the final equality uses the fact that the centred Brownian bridge fluctuation satisfies 
    \begin{equation}
        \mathbb{E} \left[ B_{v}^{(c)} - B_{s}^{(c)} \, \middle| \, B_{s}^{(c)}, B_{t}^{(c)} \right] = \boldsymbol{0} \, .
    \end{equation}
    Using standard properties of Brownian motion, we know that
    \begin{align}
        \mathrm{Var}(B_{v}^{(c)}) 
        & = (v-s) \frac{t-v}{t-s} I_d \, ,
    \end{align}
    and that
    \begin{align}
        & \mathbb{E} [ \| B_{v}^{(c)} - B_{s}^{(c)} \|^2 \, | \, B_{s}, B_{t}] \\
        & = \mathrm{tr} \left ( \frac{(v-s) (t - v)}{t-s} I_d  \right ) + \| \mathbb{E}[B_{v}^{(c)} - B_{s}^{(c)} \, | \, B_{s}, B_{t}] \|^2 \\
        & = \frac{(v-s) (t-v)}{t-s} d \, .
    \end{align}
    Therefore, substituting these results into Equation~\eqref{eq:squarednorm}, we have
    \begin{align}
        & \mathbb{E} [ \| \theta_{v}^{(c)} - \theta_{s}^{(c)} \|^2 \, | \, \theta_{s}, \theta_{t}] \\
        & = \frac{(v-s)^2}{(t-s)^2} \| \theta_{t}^{(c)} - \theta_{s}^{(c)} \|^2 + \mathbb{E} \left [ \| B_{v}^{(c)} - B_{s}^{(c)} \|^2 \, | \, B_{s}, B_{t} \right ] \\
        & = \frac{(v-s)^2}{(t-s)^2} \| \theta_{t}^{(c)} - \theta_{s}^{(c)} \|^2 + \frac{(v-s) (t-v)}{t-s} d \, .
    \end{align}
    Integrating this expression we have 
    \begin{align} \label{eq:int_square_norm_incr_h}
        & \int_s^t \mathbb{E} [ \| \theta_{v}^{(c)} - \theta_{s}^{(c)} \|^2 \, | \, \theta_{s}, \theta_{t}] \, \mathrm{d}v \\
        & = \int_s^t \frac{(v-s)^2}{(t-s)^2} \| \theta_{t}^{(c)} - \theta_{s}^{(c)} \|^2 \, \mathrm{d}v + \int_s^t (v-s) \frac{t-v}{t-s} d \, \mathrm{d}v \\
        & = \frac{1}{(t-s)^2} \| \theta_{t}^{(c)} - \theta_{s}^{(c)} \|^2 \int_s^t (v-s)^2 \, \mathrm{d}v + d \int_s^t (v-s) \, \mathrm{d}v - d \frac{1}{t-s} \int_s^t (v-s)^2 \, \mathrm{d} v \\
        & = \frac{t-s}{3} \| \theta_{t}^{(c)} - \theta_{s}^{(c)} \|^2 + d \frac{(t-s)^2}{2} - d \frac{(t-s)^2}{3} \\
        & = \frac{t-s}{3} \| \theta_{t}^{(c)} - \theta_{s}^{(c)} \|^2 + d \frac{(t-s)^2}{6}
    \end{align}
    
\end{proof}

\begin{lemma} \label{lem:theta_delta_square_norm}
    Let $\theta_t = \{ \theta_t^{(c)} \}_{c=1}^C$ follow the SDE from Equation \eqref{eq:sde} on $t \in [0,T]$. 
    For $0 \le s \leq t \le T$,
    \begin{equation}
        \mathbb{E} \left [ \| \theta_{t}^{(c)} - \theta_{s}^{(c)} \|^2  \, \big | \, \theta_0, \vartheta \right ] = \frac{(t-s)^2}{T^2} \| \vartheta - \theta_0^{(c)} \|^2 + \frac{(t-s) (T - t+s)}{T} d \, .
    \end{equation}
\end{lemma}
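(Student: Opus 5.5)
The plan is to reuse the conditional decomposition already used in the proof of Lemma~\ref{lem:theta_increment}. Conditionally on the endpoints $\theta_0$ and $\vartheta$, the $c$th component is a Brownian bridge from $\theta_0^{(c)}$ to $\vartheta$ on $[0,T]$. Its increment therefore splits as in Equation~\eqref{eq:theta_increment}:
\begin{equation}
    \theta_t^{(c)} - \theta_s^{(c)} = \frac{t-s}{T}(\vartheta - \theta_0^{(c)}) + (B_t^{(c)} - B_s^{(c)}) \, ,
\end{equation}
where $B^{(c)}$ is a $d$-dimensional Brownian bridge on $[0,T]$ pinned at $\mathbf{0}$ at both ends and independent of $(\theta_0, \vartheta)$. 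The first term is deterministic given $(\theta_0,\vartheta)$, so everything reduces to the second moment of a bridge increment.

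Next I would expand the squared norm exactly as in the proof of Lemma~\ref{lem:int_square_norm_incr_h}, using $\|a+b\|^2 = \|a\|^2 + 2a^\top b + \|b\|^2$. This gives three terms:
\begin{itemize}
    \item The first term is $\frac{(t-s)^2}{T^2}\|\vartheta - \theta_0^{(c)}\|^2$.
    \item The cross term vanishes after conditional expectation, because $\mathbb{E}[B_t^{(c)} - B_s^{(c)} \mid \theta_0,\vartheta] = \mathbf{0}$.
    \item The last term is $\mathbb{E}\|B_t^{(c)} - B_s^{(c)}\|^2 = \mathrm{tr}\,\mathrm{Var}(B_t^{(c)} - B_s^{(c)})$.
\end{itemize}
For the last term I invoke Equation~\eqref{eq:bridge_incr} from the proof of Lemma~\ref{lem:bridge_incr}:
\begin{equation}
    B_t - B_s \sim \mathcal{N}\left(0, \frac{(t-s)(T-t+s)}{T} I_d\right) \, .
\end{equation}
Taking the trace gives $d\,(t-s)(T-t+s)/T$. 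Summing the surviving terms yields the claimed identity.

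The only delicate step is justifying that the bridge fluctuation is independent of $(\theta_0,\vartheta)$ and has exactly the stated covariance under the coupled SDE~\eqref{eq:sde}. This follows from the conditional independence of the components given the endpoints \citep[proof of Theorem~1]{dai2023bf} together with the standard bridge representation \citep{revuz1999continuous}, both of which the paper already relies on. The rest is routine algebra.
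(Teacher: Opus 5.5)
Your proposal is correct and is essentially the paper's own proof. It uses the same conditional decomposition of the increment into $\frac{t-s}{T}(\vartheta-\theta_0^{(c)})$ plus a pinned-bridge increment, the same expansion of the squared norm with the cross term vanishing, and the same trace of the bridge-increment covariance taken from Lemma~\ref{lem:bridge_incr}.
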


\begin{proof}
    Following the same steps as in Lemma \ref{lem:int_square_norm_incr_h}, let us write the increment as
    \begin{equation}
        \theta_{t}^{(c)} - \theta_{s}^{(c)} \, | \, \theta_0, \vartheta = \frac{(t-s)}{T} (\vartheta - \theta_0^{(c)}) + (B_{t}^{(c)} - B_{s}^{(c)}) \, ,
    \end{equation}
    with $B_{0}^{(c)} = B_T^{(c)} = \boldsymbol{0}$, $\forall c$ \citep[Ch.~I, Sec.~3, pp.~35--36]{revuz1999continuous}. Then by decomposing the square of the $2$-norm as $\| a+b \|^2  = \| a \|^2 + \| b \|^2 + 2 a^\top b$, and cancelling the mixed term due to the fact that $\mathbb{E} [ B_{t}^{(c)} - B_{s}^{(c)} \, | \, B_0, B_T] = \boldsymbol{0}$,
    \begin{align}
        & \mathbb{E}[ \| \theta_{t}^{(c)} - \theta_{s}^{(c)} \|^2 \, | \, \theta_0, \vartheta ] \\
        & = \frac{(t-s)^2}{T^2} \| \vartheta - \theta_0^{(c)} \|^2 + \mathbb{E} \left [ \| B_{t}^{(c)} - B_{s}^{(c)} \|^2 \, \big | \, B_0, B_T \right ] \, .
    \end{align}
    As seen in the proof of Lemma \ref{lem:bridge_incr}, from standard arguments we know that 
    \begin{equation}
        B_{t}^{(c)} - B_{s}^{(c)} \big | B_0, B_T \sim \mathcal{N} \left ( 0, \frac{(t-s) (T - t+s)}{T} I_d \right ) \, ,
    \end{equation}
    and therefore,
    \begin{equation}
        \mathbb{E} \left [ \| B_{t}^{(c)} - B_{s}^{(c)} \|^2 \, \big | \, B_0, B_T \right ] = \mathrm{tr} \left ( \frac{(t-s) (T - t+s)}{T} I_d \right ) = \frac{(t-s) (T - t+s)}{T} d \, .
    \end{equation}
    Putting everything together,
    \begin{align}
        \mathbb{E}[ \| \theta_{t}^{(c)} - \theta_{s}^{(c)} \|^2 \, | \, \theta_0, \vartheta ]
        & = \frac{(t-s)^2}{T^2} \| \vartheta - \theta_0^{(c)} \|^2 + \frac{(t-s) (T - t+s)}{T} d \, .
    \end{align}
\end{proof}

\begin{lemma} \label{lem:theta_T_square_norm}
    Let $\{\theta_{0}^{(c)}\}_{c=1}^C$ be a collection of mutually independent $d$-dimensional random vectors with marginal densities $\{f_c\}_{c=1}^C$, where $f_c$ is an unknown distribution with finite mean $\mu^{(c)}$ and covariance $Q^{(c)}$, as per Assumption~\ref{as:smoothness}.
    Further, let $\theta_t = \{ \theta_t^{(c)} \}_{c=1}^C$ follow the SDE from Equation \eqref{eq:sde} on $t \in [0,T]$. Then
    \begin{equation}
        \mathbb{E} \| \vartheta - \theta_0^{(c)} \|^2 \le d \frac{T}{C} + 2 d \lambda^{(c)} \, ,
    \end{equation}
    where we defined the constant $\lambda^{(c)}$ as
    \begin{equation}
        \lambda^{(c)} = d^{-1} \left ( \mathrm{tr}(Q^{(c)}) + \| \mu^{(c)} \|^2 + C^{-1} \mathrm{tr}(\bar{Q}) + \| \bar{\mu} \|^2 \right ) \, ,
    \end{equation}
    to explicitly show the linear dependence on $d$, and where the quantities $\mu^{(c)}$, $\bar{\mu}$, $Q^{(c)}$, and  $\bar{Q}$ are moments of $\theta_0$ and $\bar{\theta}_0$, and are defined as $\bar{\mu} := C^{-1} \sum_{c=1}^C \mu^{(c)}$, $\bar{Q} := C^{-1} \sum_{c=1}^C Q^{(c)}$, and $\bar{\theta}_0 = C^{-1} \sum_{c=1}^C \theta_0^{(c)}$, as in Lemma \ref{lem:theta_0}.
\end{lemma}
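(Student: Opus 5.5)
We prove the bound on $\mathbb{E} \| \vartheta - \theta_0^{(c)} \|^2$ by splitting the difference through the initial mean $\bar{\theta}_0$, exactly as in the proof of Lemma~\ref{lem:theta_increment}, but now working with squared norms. We write
\begin{equation}
    \vartheta - \theta_0^{(c)} = (\vartheta - \bar{\theta}_0) + (\bar{\theta}_0 - \theta_0^{(c)}) \, .
\end{equation}
We then handle the two pieces separately: the first is a bridge-endpoint fluctuation controlled by Lemma~\ref{lem:theta_T_conditional}, and the second is a purely initial-condition term controlled by the moments in Assumption~\ref{as:smoothness}.

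For the first piece, we use the law of total expectation together with Lemma~\ref{lem:theta_T_conditional}. Since $(\vartheta - \bar{\theta}_0) \mid \theta_0 \sim \mathcal{N}(0, \frac{T}{C} I_d)$, we have
\begin{equation}
    \mathbb{E} \left[ \| \vartheta - \bar{\theta}_0 \|^2 \mid \theta_0 \right] = d \, T / C \, .
\end{equation}
Because this conditional mean is zero, the cross term vanishes:
\begin{equation}
    \mathbb{E} \left[ (\vartheta - \bar{\theta}_0)^\top (\bar{\theta}_0 - \theta_0^{(c)}) \right] = \mathbb{E} \left[ \mathbb{E}[\vartheta - \bar{\theta}_0 \mid \theta_0]^\top (\bar{\theta}_0 - \theta_0^{(c)}) \right] = 0 \, ,
\end{equation}
as $\bar{\theta}_0 - \theta_0^{(c)}$ is $\theta_0$-measurable. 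This yields the exact decomposition
\begin{equation}
    \mathbb{E} \| \vartheta - \theta_0^{(c)} \|^2 = d \frac{T}{C} + \mathbb{E} \| \theta_0^{(c)} - \bar{\theta}_0 \|^2 \, ,
\end{equation}
so the first summand of the claimed bound appears with coefficient exactly one.

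For the second piece, we apply $\| a - b \|^2 \le 2 \| a \|^2 + 2 \| b \|^2$ to obtain
\begin{equation}
    \mathbb{E} \| \theta_0^{(c)} - \bar{\theta}_0 \|^2 \le 2 \, \mathbb{E} \| \theta_0^{(c)} \|^2 + 2 \, \mathbb{E} \| \bar{\theta}_0 \|^2 \, .
\end{equation}
We evaluate each term with $\mathbb{E} \| \theta \|^2 = \mathrm{tr}(\mathrm{Var}(\theta)) + \| \mathbb{E} \theta \|^2$, using the mean $\bar{\mu}$ and covariance $\bar{Q}/C$ of $\bar{\theta}_0$ computed in the proof of Lemma~\ref{lem:theta_0}, which relies on independence across shards. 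This gives
\begin{equation}
    2 \left( \mathrm{tr}(Q^{(c)}) + \| \mu^{(c)} \|^2 + C^{-1} \mathrm{tr}(\bar{Q}) + \| \bar{\mu} \|^2 \right) = 2 d \lambda^{(c)} \, .
\end{equation}
Combining this with the decomposition above gives the claim.

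There is no real obstacle here. The only point needing care is justifying that the cross term vanishes, which follows by conditioning on $\theta_0$. The finiteness of all expectations is guaranteed by the finite second moments in Assumption~\ref{as:smoothness} and by Gaussianity of the conditional endpoint. If one preferred to avoid the conditioning argument, the cruder bound $2dT/C$ would result from using the factor-two inequality on the whole sum; the orthogonality argument is what gives the sharper constant $dT/C$ in the statement.
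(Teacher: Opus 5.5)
Your proof is correct and follows the paper's argument step for step. You split $\vartheta - \theta_0^{(c)}$ through $\bar{\theta}_0$, remove the cross term by conditioning on $\theta_0$ and using Lemma~\ref{lem:theta_T_conditional}, and bound $\mathbb{E}\|\theta_0^{(c)} - \bar{\theta}_0\|^2$ by $2\mathbb{E}\|\theta_0^{(c)}\|^2 + 2\mathbb{E}\|\bar{\theta}_0\|^2$ before evaluating the moments; the only cosmetic difference is that you first record the exact identity $\mathbb{E}\|\vartheta - \theta_0^{(c)}\|^2 = dT/C + \mathbb{E}\|\theta_0^{(c)} - \bar{\theta}_0\|^2$.
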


\begin{remark}
    Under mild assumptions, the trace terms appearing in $\lambda^{(c)}$ are expected to grow proportionally to the dimension $d$. The factor $d^{-1}$ keeps $\lambda^{(c)}$ of comparable magnitude as $d$ varies.
\end{remark}

\begin{proof}
    Following similar steps to the proof of Lemmas \ref{lem:theta_0} and \ref{lem:theta_T}, we can write
    \begin{align} \label{eq:expected_norm_increment_T}
        \mathbb{E} \| \vartheta - \theta_0^{(c)} \|^2 
        & = \mathbb{E} \| \vartheta - \bar{\theta}_0 + \bar{\theta}_0 - \theta_0^{(c)} \|^2 \\[1em]
        & = \mathbb{E} \| \vartheta - \bar{\theta}_0 \|^2 + \mathbb{E} \| \theta_0^{(c)} - \bar{\theta}_0 \|^2 \\
        & \qquad + 2 \mathbb{E} [ ( \vartheta - \bar{\theta}_0 )^\top ( \bar{\theta}_0 - \theta_0^{(c)} ) ] \\[1em]
        & = \mathbb{E} \| \vartheta - \bar{\theta}_0 \|^2 + \mathbb{E} \| \theta_0^{(c)} - \bar{\theta}_0 \|^2 \\
        & \qquad + 2 \mathbb{E} [ ( \bar{\theta}_0 - \theta_0^{(c)} )^\top \mathbb{E} [ ( \vartheta - \bar{\theta}_0 ) \, | \, \theta_0] ] \\[1em]
        & \le d \frac{T}{C} + 2 \mathbb{E} \| \theta_0^{(c)} \|^2 + 2 \mathbb{E} \| \bar{\theta}_0 \|^2 \\
        & = d \frac{T}{C} + 2 (\mathrm{tr}(Q^{(c)}) + \| \mu^{(c)} \|^2 ) + 2 (C^{-1} \mathrm{tr}(\bar{Q}) + \| \bar{\mu} \|^2 ) \\
        & = d \frac{T}{C} + 2 d \lambda^{(c)} \, ,
    \end{align}
    where we added and subtracted $\bar{\theta}_0$, decomposed the square of the $2$-norm as $\| a+b \|^2  = \| a \|^2 + \| b \|^2 + 2 a^\top b$, we cancelled the mixed term as $\mathbb{E} [ ( \vartheta - \bar{\theta}_0 ) \, | \, \theta_0] = \boldsymbol{0}$, we used
    the triangle inequality and the $C_p$ inequality, and we defined the constant $\lambda^{(c)}$ as
    \begin{equation}
        \lambda^{(c)} = d^{-1} \left ( \mathrm{tr}(Q^{(c)}) + \| \mu^{(c)} \|^2 + C^{-1} \mathrm{tr}(\bar{Q}) + \| \bar{\mu} \|^2 \right ) \, ,
    \end{equation}
    to explicitly show the linear dependence on $d$.
\end{proof}

\subsection{Proof of Proposition \ref{prop:weak_error} (Weak error bound)} \label{sec:weak_error_proof}

\begin{proof}[Proof of Proposition \ref{prop:weak_error}]
    Let us decompose the weak error as follows
    \begin{align}
        \eta_\Delta 
        & = \left | \mathbb{E} \left [ \sum_{c=1}^C \int_0^T \phi^{(c)}(\theta_s^{(c)}) \, \mathrm{d} s - \sum_{c=1}^C \int_0^T \phi_\Delta^{(c)}(s, \theta_{t_{0:n}}^{(c)}) \, \mathrm{d} s \right ] \right | \\
        & = \left | \sum_{c=1}^C \mathbb{E} \left [ \int_0^T \phi^{(c)}(\theta_s^{(c)}) \, \mathrm{d} s - \int_0^T \phi_\Delta^{(c)}(s, \theta_{t_{0:n}}^{(c)}) \, \mathrm{d} s \right ] \right | \\
        & = \left | \sum_{c=1}^C \mathbb{E} \left [ \int_0^T \phi^{(c)}(\theta_s^{(c)}) \, \mathrm{d} s - \sum_{j=0}^{n-1} \frac{\Delta}{2} ( \phi^{(c)}(\theta_{t_j}^{(c)}) + \phi^{(c)}(\theta_{t_{j+1}}^{(c)}) ) \right ] \right | \\
        & = \Bigg | \sum_{c=1}^C \sum_{j=0}^{n-1} \mathbb{E} \Bigg [ \mathbb{E} \Bigg [ \int_0^\Delta \phi^{(c)}(\theta_{t_j+h}^{(c)}) \, \mathrm{d} h \\
        & \qquad \qquad \qquad \qquad - \frac{\Delta}{2} ( \phi^{(c)}(\theta_{t_j}^{(c)}) + \phi^{(c)}(\theta_{t_{j+1}}^{(c)}) ) \; | \; \theta_{t_j}, \theta_{t_{j+1}} \Bigg ] \Bigg ] \Bigg  | \\
        & =: \left | \sum_{c=1}^C \sum_{j=0}^{n-1} \mathbb{E} \left [ K_j^{(c)} (\Delta) \right ] \right | \\
        & =: \left | G(\Delta) \right | \, ,
    \end{align}
    where $K_j^{(c)} (\Delta)$ is the expected integration error relative to a single shard $c$ and single subinterval from $t_j$ to $t_{j+1} = t_j + \Delta$:
    \begin{align}
        & K_j^{(c)} (\Delta) \\
        & := \mathbb{E} \left [ \int_0^\Delta \phi^{(c)}(\theta_{t_{j}+h}^{(c)}) \, \mathrm{d}h - \frac{\Delta}{2}(\phi^{(c)}(\theta_{t_{j}}^{(c)}) +\phi^{(c)}(\theta_{t_{j+1}}^{(c)})) \; \Big | \, \theta_{t_{j}}, \theta_{t_{j+1}} \right ] \\
        & = \mathbb{E} \left [ \int_0^\Delta \phi^{(c)}(\theta_{t_{j}+h}^{(c)}) \, \mathrm{d}h \; \Big | \, \theta_{t_{j}}, \theta_{t_{j+1}} \right ] - \frac{\Delta}{2}(\phi^{(c)}(\theta_{t_{j}}^{(c)}) +\phi^{(c)}(\theta_{t_{j+1}}^{(c)})). \label{eq:weak_err_interval}
    \end{align}
    We bound each $K_j^{(c)}(\Delta)$ in absolute value. 
    In order to do that, we write the Taylor expansions with Lagrange-form remainders of $\phi^{(c)}(\theta_{t_{j}+h}^{(c)})$ and $\phi^{(c)}(\theta_{t_{j+1}}^{(c)})$ about $\theta_{t_{j}}^{(c)}$, which are, respectively,
    \begin{align} \label{eq:taylor_h}
        \phi^{(c)}(\theta_{t_{j}+h}^{(c)}) 
        & = \phi^{(c)}(\theta_{t_{j}}^{(c)}) + \nabla \phi^{(c)}(\theta_{t_{j}}^{(c)})^\top (\theta_{t_{j}+h}^{(c)} - \theta_{t_{j}}^{(c)}) \\
        & \qquad + \frac{1}{2} (\theta_{t_{j}+h}^{(c)} - \theta_{t_{j}}^{(c)})^\top \nabla^2 \phi^{(c)} (\xi_h) (\theta_{t_{j}+h}^{(c)} - \theta_{t_{j}}^{(c)}) \, ,
    \end{align}
    and
    \begin{align} \label{eq:taylor_delta}
        \phi^{(c)}(\theta_{t_{j+1}}^{(c)}) 
        & = \phi^{(c)}(\theta_{t_{j}}^{(c)}) + \nabla \phi^{(c)}(\theta_{t_{j}}^{(c)})^\top (\theta_{t_{j+1}}^{(c)} - \theta_{t_{j}}^{(c)}) \\
        & \qquad + \frac{1}{2} (\theta_{t_{j+1}}^{(c)} - \theta_{t_{j}}^{(c)})^\top \nabla^2 \phi^{(c)} (\xi_\Delta) (\theta_{t_{j+1}}^{(c)} - \theta_{t_{j}}^{(c)}) \, ,
    \end{align}
    with $\xi_h \in [\theta_{t_{j}}^{(c)}, \theta_{t_{j}+h}^{(c)}]$, and $\xi_\Delta \in [\theta_{t_{j}}^{(c)}, \theta_{t_{j+1}}^{(c)}]$. 
    Let us now substitute them into the definition of $K_j^{(c)}(\Delta)$.
    \begin{align}
        & K_j^{(c)} (\Delta) \\
        & = \mathbb{E} \left [ \int_0^\Delta \phi^{(c)}(\theta_{t_{j}+h}^{(c)}) \, \mathrm{d}h \; \Big | \, \theta_{t_{j}}, \theta_{t_{j+1}} \right ] - \frac{\Delta}{2}(\phi^{(c)}(\theta_{t_{j}}^{(c)}) +\phi^{(c)}(\theta_{t_{j+1}}^{(c)})) \\[1em]
        & = \nabla \phi^{(c)} (\theta_{t_{j}}^{(c)})^\top \int_0^\Delta \mathbb{E} [\theta_{t_{j}+h}^{(c)} - \theta_{t_{j}}^{(c)} \, | \, \theta_{t_{j}}, \theta_{t_{j+1}} ] \, \mathrm{d}h \\
        & \quad + \frac{1}{2} \int_0^\Delta \mathbb{E} [ (\theta_{t_{j}+h}^{(c)} - \theta_{t_{j}}^{(c)})^\top \nabla^2 \phi^{(c)} (\xi_h) (\theta_{t_{j}+h}^{(c)} - \theta_{t_{j}}^{(c)}) \, | \, \theta_{t_{j}}, \theta_{t_{j+1}}] \, \mathrm{d} h \\
        & \quad - \frac{\Delta}{2} \nabla \phi^{(c)}(\theta_{t_{j}}^{(c)})^\top (\theta_{t_{j+1}}^{(c)} - \theta_{t_{j}}^{(c)}) \\
        & \quad - \frac{1}{2} \frac{\Delta}{2} (\theta_{t_{j+1}}^{(c)} - \theta_{t_{j}}^{(c)})^\top \nabla^2 \phi^{(c)} (\xi_\Delta) (\theta_{t_{j+1}}^{(c)} - \theta_{t_{j}}^{(c)}) \, ,
    \end{align}
    where we applied Fubini's theorem (Assumption \ref{as:fubini}), the zero-order term cancel out immediately, and we are only left with the first and second order terms. Using Lemma \ref{lem:int_incr_h}, the first order terms cancel out too, leaving us with only the second order terms, i.e.\
    \begin{align} \label{eq:E_second_order_terms}
        & K_j^{(c)} (\Delta) \\
        & = \frac{1}{2} \int_0^\Delta \mathbb{E} [ (\theta_{t_{j}+h}^{(c)} - \theta_{t_{j}}^{(c)})^\top \nabla^2 \phi^{(c)} (\xi_h) (\theta_{t_{j}+h}^{(c)} - \theta_{t_{j}}^{(c)}) \, | \, \theta_{t_{j}}, \theta_{t_{j+1}}] \, \mathrm{d} h \\
        & \quad - \frac{1}{2} \frac{\Delta}{2} (\theta_{t_{j+1}}^{(c)} - \theta_{t_{j}}^{(c)})^\top \nabla^2 \phi^{(c)} (\xi_\Delta) (\theta_{t_{j+1}}^{(c)} - \theta_{t_{j}}^{(c)}) \, .
    \end{align}
    
    By Assumption~\ref{as:weak_error_regularity}, for every $v, \xi \in \mathbb{R}^d$,
    \begin{equation}
        \| \nabla^2 \phi^{(c)}(\xi) \, v \| = \lim_{h \to 0} \frac{\| \nabla \phi^{(c)}(\xi + h v) - \nabla \phi^{(c)}(\xi) \|}{|h|} \le U \| v \| \, ,
    \end{equation}
    so that, by the Cauchy--Schwarz inequality, $| x^\top \nabla^2 \phi^{(c)}(\xi) \, x | \le \| x \| \, \| \nabla^2 \phi^{(c)}(\xi) \, x \| \le U \| x \|^2$ for every $x, \xi \in \mathbb{R}^d$. Applying this bound to both terms of Equation~\eqref{eq:E_second_order_terms}, together with the triangle inequality and $| \mathbb{E}[Y] | \le \mathbb{E} | Y |$, gives
    \begin{align}
        | K_j^{(c)} (\Delta) |
        & \le \frac{1}{2} U \int_0^\Delta \mathbb{E} [ \| \theta_{t_{j}+h}^{(c)} - \theta_{t_{j}}^{(c)} \|^2 \, | \, \theta_{t_{j}}, \theta_{t_{j+1}}] \, \mathrm{d} h + \frac{1}{2} U \frac{\Delta}{2} \| \theta_{t_{j+1}}^{(c)} - \theta_{t_{j}}^{(c)} \|^2 \\
        & = \frac{1}{2} U \frac{\Delta}{3} \| \theta_{t_{j+1}}^{(c)} - \theta_{t_{j}}^{(c)} \|^2 + \frac{1}{2} U \frac{\Delta^2}{6} d + \frac{1}{2} U \frac{\Delta}{2} \| \theta_{t_{j+1}}^{(c)} - \theta_{t_{j}}^{(c)} \|^2 \\
        & = \frac{5}{12} U \Delta \| \theta_{t_{j+1}}^{(c)} - \theta_{t_{j}}^{(c)} \|^2 + \frac{1}{12} U \Delta^2 d \, ,
    \end{align}
    where the second line uses Lemma~\ref{lem:int_square_norm_incr_h}, and the third collects the terms in $\| \theta_{t_{j+1}}^{(c)} - \theta_{t_{j}}^{(c)} \|^2$, using $\frac{1}{3} + \frac{1}{2} = \frac{5}{6}$. Taking expectations, and using Lemmas~\ref{lem:theta_delta_square_norm} and~\ref{lem:theta_T_square_norm} through the law of total expectation,
    \begin{align}
        \mathbb{E} | K_j^{(c)} (\Delta) |
        & \le \frac{5}{12} U \Delta \, \mathbb{E} \| \theta_{t_{j+1}}^{(c)} - \theta_{t_{j}}^{(c)} \|^2 + \frac{1}{12} U \Delta^2 d \\
        & = \frac{5}{12} U \Delta \, \mathbb{E} \left [ \frac{\Delta^2}{T^2} \| \vartheta - \theta_0^{(c)} \|^2 + \frac{\Delta (T - \Delta)}{T} d \right ] + \frac{1}{12} U \Delta^2 d \\
        & \le \frac{5}{12} U \Delta \left ( \frac{\Delta^2}{T^2} \left ( d \frac{T}{C} + 2 d \lambda^{(c)} \right ) + \frac{\Delta (T - \Delta)}{T} d \right ) + \frac{1}{12} U \Delta^2 d \\
        & = \frac{5}{12} U \Delta^2 d + \frac{1}{12} U \Delta^2 d + \mathcal{O}(\Delta^3) \\
        & = \frac{1}{2} U \Delta^2 d + \mathcal{O}(\Delta^3) \, ,
    \end{align}
    where the fourth line collects the terms of order $\Delta^3$, which include the terms in $T / C$ and in $\lambda^{(c)}$. Finally, summing over the $C$ shards and the $n$ intervals, and using $n \Delta = T$,
    \begin{align}
        \eta_\Delta
        & = | G(\Delta) | \\
        & \le \sum_{c=1}^C \sum_{j=0}^{n-1} \mathbb{E} | K_j^{(c)} (\Delta) | \\
        & \le C n \left ( \frac{1}{2} U \Delta^2 d + \mathcal{O}(\Delta^3) \right ) \\
        & = \frac{1}{2} C T \Delta d U + \mathcal{O}(\Delta^2) \\
        & =: H_\Delta \, .
    \end{align}
\end{proof}

\section{Relaxing Assumption~\ref{as:phi_lip} (Lipschitzness of \texorpdfstring{$\phi^{(c)}$}{phi(c)})} \label{app:weaker_assumptions}

The global Lipschitz condition in Assumption \ref{as:phi_lip} is a convenient way to control increments of the potential along the bridge, though it is not essential for the $\sqrt{\Delta}$ dependence of the strong-error bound. 
Its main convenience is that it allows us to keep the analysis as simple as possible. The downside is that it excludes models with a $\phi^{(c)}$ that grows quadratically, e.g.\ a Gaussian distribution.

In this appendix we replace Assumption \ref{as:phi_lip}, which postulates that $\phi^{(c)}$ is Lipschitz, with Assumption~\ref{as:weak_error_regularity} of Section~\ref{sec:controlling_error}, which postulates that the gradient of $\phi^{(c)}$ is $U$-Lipschitz, allowing more flexibility. This is the assumption under which Proposition~\ref{prop:weak_error} bounds the weak error.

Unlike Assumption \ref{as:phi_lip}, which gives a uniform bound on $\|\nabla\phi^{(c)}\|$, Assumption \ref{as:weak_error_regularity} allows $\|\nabla\phi^{(c)}(\theta)\|$ to grow at most linearly in $\|\theta\|$. We control this growth using moments of the Brownian bridge.

The proof proceeds in three steps. First, we obtain a uniform second-moment bound for the position of the bridge. We then combine this bound with the Lipschitz continuity of the gradient to control increments of $\phi^{(c)}$. Finally, we integrate this increment bound over the time partition to obtain a strong-error bound, which in turn yields a total variation bound.

Recall the constant $\lambda^{(c)}$ defined in Lemma \ref{lem:theta_T_square_norm}. For each $c = 1,\ldots,C$, define
\begin{align}
    \mathcal{G}^{(c)} & := d^{-1/2} \left\| \nabla \phi^{(c)}(0) \right\| \, , \\
    \mathcal{R}^{(c)} & := \frac{T}{C} + 2 \lambda^{(c)} \, , \\
    \mathcal{K}^{(c)} & := \mathcal{G}^{(c)} + U \left( \lambda^{(c)} + \frac{T}{C} + \frac{T}{4} \right)^{1/2} \, . \label{eq:relaxed_constants}
\end{align}
The quantity $\mathcal{G}^{(c)}$ controls the gradient at the origin, $\mathcal{R}^{(c)}$ controls increments of the bridge, and $\mathcal{K}^{(c)}$ controls the gradient along the bridge. For fixed $d$, $\mathcal{G}^{(c)}$ is finite under Assumption \ref{as:weak_error_regularity}.

We first bound the second moment of the position of the process, uniformly in time.

\begin{lemma}[Bridge position moment] \label{lem:bridge_position_moment_relaxed}
    Let $\theta_t$ follow the SDE from Equation \eqref{eq:sde}, and suppose Assumption \ref{as:smoothness} holds. Then, for all $c = 1,\ldots,C$ and $t \in [0,T]$,
    \begin{equation}
        \mathbb{E} \| \theta_t^{(c)} \|^2 \le d \left( \lambda^{(c)} + \frac{T}{C} + \frac{T}{4} \right) \, . \label{eq:bridge_position_moment_relaxed}
    \end{equation}
\end{lemma}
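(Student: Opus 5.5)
We will use the endpoint decomposition of the conditioned process already employed in Lemmas~\ref{lem:theta_increment} and~\ref{lem:theta_delta_square_norm}. Conditionally on $\theta_0$ and $\vartheta$, each component is a Brownian bridge, so
\begin{equation}
    \theta_t^{(c)} = \Big(1-\frac{t}{T}\Big)\theta_0^{(c)} + \frac{t}{T}\vartheta + B_t^{(c)} \, ,
\end{equation}
where $B^{(c)}$ is a $d$-dimensional Brownian bridge from $0$ to $0$ on $[0,T]$, independent of $(\theta_0,\vartheta)$. Its covariance is $\frac{t(T-t)}{T} I_d$. The cross term between $B_t^{(c)}$ and the endpoint part vanishes in expectation, because $\mathbb{E}[B_t^{(c)} \mid \theta_0,\vartheta]=0$. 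This gives
\begin{equation}
    \mathbb{E}\|\theta_t^{(c)}\|^2 = \mathbb{E}\Big\|\Big(1-\frac{t}{T}\Big)\theta_0^{(c)} + \frac{t}{T}\vartheta\Big\|^2 + \frac{t(T-t)}{T}\,d \, .
\end{equation}
Since $t(T-t)/T \le T/4$, the last term already produces the $dT/4$ contribution.

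For the first term, we rewrite the convex combination as $\theta_0^{(c)} + \frac{t}{T}(\vartheta-\theta_0^{(c)})$. It is cleaner, however, to split through $\bar\theta_0$ as $(1-\frac{t}{T})\theta_0^{(c)} + \frac{t}{T}\bar\theta_0 + \frac{t}{T}(\vartheta-\bar\theta_0)$. By Lemma~\ref{lem:theta_T_conditional}, $\vartheta-\bar\theta_0$ is conditionally centred given $\theta_0$ with covariance $\frac{T}{C}I_d$. The cross term therefore vanishes again, and the last piece contributes $\frac{t^2}{T^2}\,d\,\frac{T}{C}\le d\,\frac{T}{C}$. That gives the $dT/C$ term.

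The remaining piece, $\mathbb{E}\|(1-\frac{t}{T})\theta_0^{(c)} + \frac{t}{T}\bar\theta_0\|^2$, is the squared norm of a convex combination. By convexity of $\|\cdot\|^2$ (Jensen), it is at most $(1-\frac{t}{T})\mathbb{E}\|\theta_0^{(c)}\|^2 + \frac{t}{T}\mathbb{E}\|\bar\theta_0\|^2 \le \mathbb{E}\|\theta_0^{(c)}\|^2 + \mathbb{E}\|\bar\theta_0\|^2$. Using the moment formulas from Lemma~\ref{lem:theta_0}, this equals
\begin{equation}
    \mathrm{tr}(Q^{(c)})+\|\mu^{(c)}\|^2 + C^{-1}\mathrm{tr}(\bar Q) + \|\bar\mu\|^2 = d\lambda^{(c)} \, .
\end{equation}
Summing the three contributions yields $d(\lambda^{(c)} + T/C + T/4)$, uniformly in $t\in[0,T]$.

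The main point requiring care is avoiding a factor of $2$, which the $C_p$ inequality would introduce; Lemma~\ref{lem:theta_T_square_norm} has such a factor. This is why the convexity of the square is used for the $(\theta_0^{(c)},\bar\theta_0)$ part, and why the orthogonality given by conditional centring is used for the $\vartheta-\bar\theta_0$ and bridge parts. Two further facts need justification. First, the bridge fluctuation $B^{(c)}$ must be independent of, or at least conditionally centred given, $(\theta_0,\vartheta)$. Second, $\vartheta-\bar\theta_0$ must be centred conditionally on $\theta_0$, which follows from Lemma~\ref{lem:theta_T_conditional}.
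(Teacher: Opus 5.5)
Your proposal is correct and follows essentially the paper's argument. It uses the same ingredients: the endpoint decomposition, the conditional centring of the bridge term, convexity of $\|\cdot\|^2$, and the conditional law of $\vartheta$ given $\theta_0$ from Lemma~\ref{lem:theta_T_conditional}. The only difference is the order. You split off $\vartheta-\bar\theta_0$ by orthogonality before applying convexity to $(\theta_0^{(c)},\bar\theta_0)$. The paper applies convexity to $(\theta_0^{(c)},\vartheta)$ first and then computes $\mathbb{E}\|\vartheta\|^2=\mathbb{E}\|\bar\theta_0\|^2+dT/C$, which leads to the same bound.
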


\begin{proof}
    Conditional on $\theta_0$ and $\vartheta$, the $c$th component can be written as a Brownian bridge from $\theta_0^{(c)}$ to $\vartheta$:
    \begin{equation} \label{eq:bridge_decomposition_relaxed}
        \theta_t^{(c)} = \left( 1 - \frac{t}{T} \right) \theta_0^{(c)} + \frac{t}{T} \vartheta + B_t^{(c)} \, ,
    \end{equation}
    where $B_0^{(c)} = B_T^{(c)} = \boldsymbol{0}$ and
    \begin{equation}
        \mathbb{E} \left[ B_t^{(c)} \mid \theta_0,\vartheta \right] = \boldsymbol{0} \, , \qquad \mathbb{E} \left[ \| B_t^{(c)} \|^2 \mid \theta_0,\vartheta \right] = d \frac{t(T-t)}{T} \, .
    \end{equation}
    Using Equation \eqref{eq:bridge_decomposition_relaxed}, we have
    \begin{align}
        \mathbb{E} \| \theta_t^{(c)} \|^2
        & = \mathbb{E} \left\| \left( 1 - \frac{t}{T} \right) \theta_0^{(c)} + \frac{t}{T} \vartheta + B_t^{(c)} \right\|^2 \\
        & = \mathbb{E} \left\| \left( 1 - \frac{t}{T} \right) \theta_0^{(c)} + \frac{t}{T} \vartheta \right\|^2 + \mathbb{E} \| B_t^{(c)} \|^2 \\
        & \le \left( 1 - \frac{t}{T} \right) \mathbb{E} \| \theta_0^{(c)} \|^2 + \frac{t}{T} \mathbb{E} \| \vartheta \|^2 + d \frac{t(T-t)}{T} \\
        & \le \mathbb{E} \| \theta_0^{(c)} \|^2 + \mathbb{E} \| \vartheta \|^2 + d \frac{T}{4} \, . \label{eq:bridge_position_decomposition_relaxed}
    \end{align}
    On the second line of Equation \eqref{eq:bridge_position_decomposition_relaxed}, the mixed term vanishes because the linear interpolation is a function of $\theta_0$ and $\vartheta$, whereas the centred bridge has conditional mean zero. On the third line, we used the convexity of $\theta \mapsto \| \theta \|^2$. On the final line, we used $1-t/T \le 1$, $t/T \le 1$, and $t(T-t) \le T^2/4$.

    It remains to control the two moment terms on the final line of Equation \eqref{eq:bridge_position_decomposition_relaxed}. As shown in Lemma \ref{lem:theta_T_conditional}, 
    \begin{equation}
        \vartheta \mid \theta_0 \sim \mathcal{N} \left( \bar{\theta}_0, \frac{T}{C} I_d \right) \, .
    \end{equation}
    Hence, for a standard $d$-dimensional normal random vector $\Upsilon$ independent of $\theta_0$, we have
    \begin{align}
        \mathbb{E} \| \vartheta \|^2
        & = \mathbb{E} \left\| \bar{\theta}_0 + \sqrt{\frac{T}{C}} \, \Upsilon \right\|^2 \\
        & = \mathbb{E} \| \bar{\theta}_0 \|^2 + 2 \sqrt{\frac{T}{C}} \, \mathbb{E} \left[ \bar{\theta}_0^\top \Upsilon \right] + \frac{T}{C} \mathbb{E} \| \Upsilon \|^2 \\
        & = \mathbb{E} \| \bar{\theta}_0 \|^2 + \frac{dT}{C} \, . \label{eq:theta_T_second_moment_relaxed}
    \end{align}
    On the final line, the mixed term vanishes because $\Upsilon$ is independent of $\theta_0$ and has mean zero, and we used $\mathbb{E} \| \Upsilon \|^2 = d$.

    Substituting Equation \eqref{eq:theta_T_second_moment_relaxed} into Equation \eqref{eq:bridge_position_decomposition_relaxed} gives
    \begin{equation}
        \mathbb{E} \| \theta_t^{(c)} \|^2 \le \mathbb{E} \| \theta_0^{(c)} \|^2 + \mathbb{E} \| \bar{\theta}_0 \|^2 + \frac{dT}{C} + \frac{dT}{4} \, .
    \end{equation}
    By the definition of $\lambda^{(c)}$ in Lemma \ref{lem:theta_T_square_norm},
    \begin{equation}
        \mathbb{E} \| \theta_0^{(c)} \|^2 + \mathbb{E} \| \bar{\theta}_0 \|^2 = d \lambda^{(c)} \, .
    \end{equation}
    Therefore,
    \begin{equation}
        \mathbb{E} \| \theta_t^{(c)} \|^2 \le d \left( \lambda^{(c)} + \frac{T}{C} + \frac{T}{4} \right) \, ,
    \end{equation}
    which proves the result.
\end{proof}

Lemma~\ref{lem:bridge_position_moment_relaxed} controls the size of the gradient along the bridge. Combining this with a second-moment bound for bridge increments gives the following bound on increments of the potential.

\begin{lemma}[Potential increment bound] \label{lem:potential_increment_bound_relaxed}
    Let $\theta_t$ follow the SDE from Equation \eqref{eq:sde}, let $0 < \Delta \le T$, let $n \Delta = T$, and let $t_j = j \Delta$. Under Assumptions \ref{as:smoothness} and \ref{as:weak_error_regularity}, for $c = 1,\ldots,C$, $j = 0,\ldots,n-1$, and $0 \le h \le \Delta$,
    \begin{align}
        & \mathbb{E} \left| \phi^{(c)}(\theta_{t_j+h}^{(c)}) - \phi^{(c)}(\theta_{t_j}^{(c)}) \right| \\
        & \le d \, \mathcal{K}^{(c)} \left( \sqrt{h} + \frac{h}{T} \sqrt{\mathcal{R}^{(c)}} \right) + \frac{Ud}{2} \left( h + \frac{h^2}{T^2} \mathcal{R}^{(c)} \right) \, . \label{eq:potential_increment_bound_relaxed}
    \end{align}
\end{lemma}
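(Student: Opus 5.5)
The plan is a first-order Taylor expansion of $\phi^{(c)}$ around $\theta_{t_j}^{(c)}$, with the linear term controlled by Cauchy--Schwarz and the remainder controlled by Assumption~\ref{as:weak_error_regularity}. Write $x = \theta_{t_j}^{(c)}$ and $\delta = \theta_{t_j+h}^{(c)} - \theta_{t_j}^{(c)}$. Since $\nabla\phi^{(c)}$ is $U$-Lipschitz, the integral form of Taylor's theorem gives the pointwise bound
\begin{equation}
    \left| \phi^{(c)}(x+\delta) - \phi^{(c)}(x) - \nabla\phi^{(c)}(x)^\top \delta \right| \le \frac{U}{2} \|\delta\|^2 \, .
\end{equation}
This uses only the Lipschitz gradient, not the Hessian. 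Taking expectations and using the triangle inequality, I will bound
\begin{equation}
    \mathbb{E}\left|\phi^{(c)}(\theta_{t_j+h}^{(c)}) - \phi^{(c)}(\theta_{t_j}^{(c)})\right| \le \mathbb{E}\big[ \|\nabla\phi^{(c)}(x)\| \, \|\delta\| \big] + \frac{U}{2}\, \mathbb{E}\|\delta\|^2 \, .
\end{equation}

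\textbf{Increment second moment.} By the law of total expectation, Lemma~\ref{lem:theta_delta_square_norm} with $t-s=h$ gives $\mathbb{E}\|\delta\|^2 = \frac{h^2}{T^2}\mathbb{E}\|\vartheta-\theta_0^{(c)}\|^2 + \frac{h(T-h)}{T}d$. Lemma~\ref{lem:theta_T_square_norm} bounds the first expectation by $d\,\mathcal{R}^{(c)}$, and the second term is at most $hd$. Hence
\begin{equation}
    \mathbb{E}\|\delta\|^2 \le d\left(h + \frac{h^2}{T^2}\mathcal{R}^{(c)}\right) \, .
\end{equation}
Multiplying by $U/2$ yields the second term of \eqref{eq:potential_increment_bound_relaxed} exactly. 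Using $\sqrt{a+b}\le\sqrt a+\sqrt b$, the same bound also gives $(\mathbb{E}\|\delta\|^2)^{1/2} \le \sqrt d\,(\sqrt h + \frac{h}{T}\sqrt{\mathcal{R}^{(c)}})$.

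\textbf{Gradient factor.} By Cauchy--Schwarz, $\mathbb{E}[\|\nabla\phi^{(c)}(x)\|\,\|\delta\|] \le (\mathbb{E}\|\nabla\phi^{(c)}(x)\|^2)^{1/2}(\mathbb{E}\|\delta\|^2)^{1/2}$. The Lipschitz gradient gives $\|\nabla\phi^{(c)}(x)\| \le \|\nabla\phi^{(c)}(0)\| + U\|x\|$. Minkowski's inequality in $L^2$ and Lemma~\ref{lem:bridge_position_moment_relaxed} at $t=t_j$ then give
\begin{equation}
    (\mathbb{E}\|\nabla\phi^{(c)}(x)\|^2)^{1/2} \le \sqrt d\,\mathcal{G}^{(c)} + U\sqrt{d(\lambda^{(c)}+T/C+T/4)} = \sqrt d\,\mathcal{K}^{(c)} \, .
\end{equation}
Multiplying the two $\sqrt d$ factors yields the first term $d\,\mathcal{K}^{(c)}(\sqrt h + \frac hT\sqrt{\mathcal{R}^{(c)}})$.

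\textbf{Main obstacle.} The delicate point is the linear term. The gradient is evaluated at a random point and is no longer uniformly bounded, so I decouple it from the increment by Cauchy--Schwarz and only need second moments of both factors. Those second moments are exactly what Lemmas~\ref{lem:bridge_position_moment_relaxed}, \ref{lem:theta_delta_square_norm} and~\ref{lem:theta_T_square_norm} provide. I also need to check that all expectations are finite, which follows from the finite second moments in Assumption~\ref{as:smoothness}.
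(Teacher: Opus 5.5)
Your proposal is correct and follows the paper's proof essentially step by step: Taylor expansion about $\theta_{t_j}^{(c)}$ with the remainder bounded by $\frac{U}{2}\|\delta\|^2$, then Cauchy--Schwarz on the linear term, with the gradient bounded via Lipschitz continuity from the origin, Minkowski and Lemma~\ref{lem:bridge_position_moment_relaxed}, and the increment bounded via Lemmas~\ref{lem:theta_delta_square_norm} and~\ref{lem:theta_T_square_norm} together with subadditivity of the square root. The only cosmetic difference is that you use the integral form of Taylor's theorem, which needs just the Lipschitz gradient, whereas the paper uses the Lagrange form with the Hessian; both are available under Assumption~\ref{as:weak_error_regularity}.
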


\begin{proof}
    For any $\theta,\theta'\in\mathbb{R}^d$, Taylor's formula with Lagrange-form remainder gives
    \begin{align}
        \phi^{(c)}(\theta') - \phi^{(c)}(\theta)
        & = \nabla \phi^{(c)}(\theta)^\top (\theta'-\theta) \\
        & \quad + \frac{1}{2} (\theta'-\theta)^\top \nabla^2 \phi^{(c)}(\xi) (\theta'-\theta) \, ,
    \end{align}
    for some $\xi$ on the line segment between $\theta$ and $\theta'$.
    Since by Assumption \ref{as:weak_error_regularity} $\nabla\phi^{(c)}$ is $U$-Lipschitz and $\phi^{(c)} \in C^2(\mathbb R^d)$, the quadratic remainder is bounded in absolute value by $\frac{U}{2} \|\theta'-\theta\|^2$. Therefore,
    \begin{equation} \label{eq:phi_increment_hessian_bound_relaxed}
        \left | \phi^{(c)}(\theta') - \phi^{(c)}(\theta) \right | \le \| \nabla \phi^{(c)}(\theta) \| \, \| \theta'-\theta \| + \frac{U}{2} \| \theta'-\theta \|^2 \, .
    \end{equation}
    Substituting $\theta_{t_j+h}^{(c)},  \theta_{t_j}^{(c)}$ for $\theta',\theta$ and adding the expectation, we have
    \begin{align}
        & \mathbb{E} \left | \phi^{(c)}(\theta_{t_j+h}^{(c)}) - \phi^{(c)}(\theta_{t_j}^{(c)}) \right | \\
        & \le \mathbb{E} \left[ \|\nabla \phi^{(c)}(\theta_{t_j}^{(c)}) \| \, \|\theta_{t_j+h}^{(c)}-\theta_{t_j}^{(c)}\| \right] + \frac{U}{2} \mathbb{E} \|\theta_{t_j+h}^{(c)}-\theta_{t_j}^{(c)}\|^2 \\
        & \le \left( \mathbb{E} \| \nabla \phi^{(c)}(\theta_{t_j}^{(c)}) \|^2 \right)^{1/2} \left( \mathbb{E} \|\theta_{t_j+h}^{(c)} - \theta_{t_j}^{(c)}\|^2 \right)^{1/2} + \frac{U}{2} \mathbb{E} \|\theta_{t_j+h}^{(c)} - \theta_{t_j}^{(c)}\|^2 \, , \label{eq:exp_incr_phi}
    \end{align}
    where on the second line we used the Cauchy--Schwarz inequality on the product in the first term.
    We need to bound each term on the RHS of this equation.

    \vspace{1em}
    \noindent \textbf{Controlling the gradient}

    \noindent First, we control the gradient term. Assumption \ref{as:weak_error_regularity} implies that $\nabla \phi^{(c)}$ is $U$-Lipschitz.
    Taking one endpoint to be $0$, $\forall t \in [0,T]$,
    \begin{equation}
        \| \nabla \phi^{(c)}(\theta_t^{(c)}) - \nabla \phi^{(c)}(0) \| \le U \| \theta_t^{(c)} \| \, ,
    \end{equation}
    and by the triangle inequality,
    \begin{equation} \label{eq:bound_gradient}
        \| \nabla \phi^{(c)}(\theta_t^{(c)}) \| \le \| \nabla \phi^{(c)}(0) \| + U \| \theta_t^{(c)} \| \, .
    \end{equation}
    In order to bound the relevant quantity, we take the $L_2$ norm of both sides. By Minkowski's inequality,
    \begin{align}
        \left( \mathbb{E}\| \nabla \phi^{(c)}(\theta_t^{(c)}) \|^2 \right)^{1/2}
        & \le \left( \mathbb{E} \left[ \left( \|\nabla \phi^{(c)}(0)\| + U\|\theta_t^{(c)}\| \right)^2 \right] \right)^{1/2} \\
        & \le \left( \mathbb{E} \| \nabla \phi^{(c)}(0) \|^2 \right)^{1/2} + U \left( \mathbb{E}\|\theta_t^{(c)}\|^2 \right)^{1/2} \\
        & = \| \nabla \phi^{(c)}(0) \| + U \left( \mathbb{E}\|\theta_t^{(c)}\|^2 \right)^{1/2} \, ,
    \end{align}
    where the last line follows as $\| \nabla \phi^{(c)}(0) \|$ is a deterministic quantity.
    Using Equation \eqref{eq:relaxed_constants} and Lemma \ref{lem:bridge_position_moment_relaxed}, we can write
    \begin{align} 
        \left( \mathbb{E} \| \nabla \phi^{(c)}(\theta_t^{(c)}) \|^2 \right)^{1/2}
        & \le \| \nabla \phi^{(c)}(0) \| + U \left( \mathbb{E}\|\theta_t^{(c)}\|^2 \right)^{1/2} \\
        & \le \sqrt{d} \mathcal{G}^{(c)} +  U \sqrt{d} \left( \lambda^{(c)} + \frac{T}{C} + \frac{T}{4} \right)^{1/2} \\
        & = \sqrt{d} \left[ \mathcal{G}^{(c)} + U \left( \lambda^{(c)} + \frac{T}{C} + \frac{T}{4} \right)^{1/2} \right] \\
        & = \sqrt{d}\,\mathcal{K}^{(c)} \, . \label{eq:bound_gradient_minkowski}
    \end{align}

    \vspace{1em}
    \noindent \textbf{Controlling the increment}

    \noindent We next control the increments of the bridge. By the law of total expectation, Lemma~\ref{lem:theta_delta_square_norm} with $s = t_j$ and $t = t_j + h$, and Lemma~\ref{lem:theta_T_square_norm},
    \begin{align}
        \mathbb{E}\|\theta_{t_j+h}^{(c)} - \theta_{t_j}^{(c)}\|^2
        & = \frac{h^2}{T^2} \mathbb{E}\|\vartheta-\theta_0^{(c)}\|^2 + d h \frac{T-h}{T} \\
        & \le \frac{h^2}{T^2} \, d \left( \frac{T}{C} + 2 \lambda^{(c)} \right) + d h \frac{T-h}{T} \\
        & \le d \left( \frac{h^2}{T^2}\mathcal{R}^{(c)} + h \right) \, , \label{eq:bound_diff_2norm_sq_exp}
    \end{align}
    where the first line takes the expectation of Lemma~\ref{lem:theta_delta_square_norm}, the second uses Lemma~\ref{lem:theta_T_square_norm}, and the third uses the definition of $\mathcal{R}^{(c)}$ in Equation~\eqref{eq:relaxed_constants} and $(T-h)/T \le 1$.
    Hence, using the subadditivity of the square root,
    \begin{equation} \label{eq:increment_l2_bound}
        \left( \mathbb{E}\|\theta_{t_j+h}^{(c)} - \theta_{t_j}^{(c)}\|^2 \right)^{1/2} 
        \le \sqrt{d} \left( \sqrt{h} + \frac{h}{T} \sqrt{\mathcal{R}^{(c)}} \right) \, .
    \end{equation}
    These are our bounds for the increment. \\[1em]

    \noindent Finally, substituting Equation \eqref{eq:bound_gradient_minkowski} with $t=t_j$, \eqref{eq:bound_diff_2norm_sq_exp}, and \eqref{eq:increment_l2_bound} into Equation \eqref{eq:exp_incr_phi} we obtain
    \begin{align}
        & \mathbb{E} \left | \phi^{(c)}(\theta_{t_j+h}^{(c)}) - \phi^{(c)}(\theta_{t_j}^{(c)}) \right | \\
        & \le \left( \mathbb{E} \| \nabla \phi^{(c)}(\theta_{t_j}^{(c)}) \|^2 \right)^{1/2} \cdot \left( \mathbb{E} \|\theta_{t_j+h}^{(c)} - \theta_{t_j}^{(c)}\|^2 \right)^{1/2} + \frac{U}{2} \mathbb{E} \|\theta_{t_j+h}^{(c)} - \theta_{t_j}^{(c)}\|^2 \\
        & \le \sqrt{d}\,\mathcal{K}^{(c)} \cdot \sqrt{d} \left( \sqrt{h} + \frac{h}{T}\sqrt{\mathcal{R}^{(c)}} \right) + \frac{U}{2} d \left( h + \frac{h^2}{T^2}\mathcal{R}^{(c)} \right) \\
        & = d \, \mathcal{K}^{(c)} \left( \sqrt{h} + \frac{h}{T}\sqrt{\mathcal{R}^{(c)}} \right) + \frac{Ud}{2} \left( h + \frac{h^2}{T^2}\mathcal{R}^{(c)} \right) \, .
    \end{align}
\end{proof}

The preceding lemma replaces the direct potential-increment estimate obtained from Assumption~\ref{as:phi_lip} in the original strong-error proof. Applying it on every interval of the partition gives the following analogue of Proposition~\ref{prop:strong_error}.

\begin{proposition}[Strong error bound] \label{prop:strong_error_relaxed}
    Let the strong error be defined as
    \begin{align}
        \epsilon_\Delta := \mathbb{E} \left | \sum_{c=1}^C \int_0^T \phi^{(c)}(\theta_s^{(c)}) \, \mathrm{d} s - \sum_{c=1}^C \int_0^T \phi_\Delta^{(c)}(s, \theta_{t_{0:n}}^{(c)}) \, \mathrm{d} s \right | \, .
    \end{align}
    Let $0<\Delta\le T$ and $n\Delta=T$, and let $\mathcal{R}^{(c)}$ and $\mathcal{K}^{(c)}$ be defined as in Equation~\eqref{eq:relaxed_constants}.
    Then, under Assumptions \ref{as:smoothness}, \ref{as:fubini}, and \ref{as:weak_error_regularity}, there exists a bound $E_\Delta \ge \epsilon_\Delta$ such that
    \begin{align} \label{eq:strong_error_bound_relaxed}
        E_\Delta
        =
        d \sum_{c=1}^C \left[ \mathcal{K}^{(c)} \left( \frac{7}{6}T\sqrt{\Delta} + \Delta\sqrt{\mathcal{R}^{(c)}} \right) + U \left( \frac{1}{2}T\Delta + \frac{5}{12T}\mathcal{R}^{(c)}\Delta^2 \right) \right] \, .
    \end{align}
\end{proposition}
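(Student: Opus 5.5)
The proof will follow the template of the proof of Proposition~\ref{prop:strong_error}, with Lemma~\ref{lem:potential_increment_bound_relaxed} used in place of the Lipschitz step.

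\textbf{Reduction to per-interval increments.} First, apply Jensen's inequality and Fubini's theorem (Assumption~\ref{as:fubini}) to move the expectation inside the time integral and the sum over shards. Then split $[0,T]$ into the intervals $[t_j,t_{j+1}]$ and change variables to $s=t_j+h$. Using Definition~\ref{def:trapezium} and the triangle inequality exactly as before gives
\begin{equation}
\epsilon_\Delta \le \sum_{c=1}^C\sum_{j=0}^{n-1}\int_0^\Delta \Big(\mathbb{E}\big|\phi^{(c)}(\theta^{(c)}_{t_j+h})-\phi^{(c)}(\theta^{(c)}_{t_j})\big| + \frac{h}{\Delta}\,\mathbb{E}\big|\phi^{(c)}(\theta^{(c)}_{t_{j+1}})-\phi^{(c)}(\theta^{(c)}_{t_j})\big|\Big)\,\mathrm{d}h .
\end{equation}
This reduction uses only Assumption~\ref{as:fubini} and the definition of the interpolant, not Lipschitzness.

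\textbf{Bounding the increments.} Next, bound both expectations with Lemma~\ref{lem:potential_increment_bound_relaxed}: take step $h$ in the first term and step $\Delta$ in the second. This gives the integrand
\begin{equation}
d\mathcal{K}^{(c)}\Big(\sqrt h+\tfrac{h}{T}\sqrt{\mathcal{R}^{(c)}}\Big)+\tfrac{Ud}{2}\Big(h+\tfrac{h^2}{T^2}\mathcal{R}^{(c)}\Big)
+\tfrac{h}{\Delta}\Big[d\mathcal{K}^{(c)}\Big(\sqrt\Delta+\tfrac{\Delta}{T}\sqrt{\mathcal{R}^{(c)}}\Big)+\tfrac{Ud}{2}\Big(\Delta+\tfrac{\Delta^2}{T^2}\mathcal{R}^{(c)}\Big)\Big].
\end{equation}

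\textbf{Integration and summation.} Integrate over $h\in[0,\Delta]$ term by term:
\begin{itemize}
\item The $\sqrt h$ pieces give $\tfrac23\Delta^{3/2}+\tfrac12\Delta^{3/2}=\tfrac76\Delta^{3/2}$.
\item The $\tfrac{h}{T}\sqrt{\mathcal{R}}$ pieces give $\tfrac{\Delta^2}{2T}+\tfrac{\Delta^2}{2T}=\tfrac{\Delta^2}{T}$.
\item The $U$-linear pieces give $\tfrac{Ud}{2}\big(\tfrac{\Delta^2}{2}+\tfrac{\Delta^2}{2}\big)=\tfrac{Ud}{2}\Delta^2$.
\item The $U\mathcal{R}$ pieces give $\tfrac{Ud\mathcal{R}}{2T^2}\big(\tfrac{\Delta^3}{3}+\tfrac{\Delta^3}{2}\big)=\tfrac{5Ud\mathcal{R}}{12T^2}\Delta^3$.
\end{itemize}
Finally, sum over the $n=T/\Delta$ intervals, which multiplies each term by $T/\Delta$, and over $c$. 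This yields $d\sum_c\big[\mathcal{K}^{(c)}\big(\tfrac76T\sqrt\Delta+\Delta\sqrt{\mathcal{R}^{(c)}}\big)+U\big(\tfrac12T\Delta+\tfrac{5}{12T}\mathcal{R}^{(c)}\Delta^2\big)\big]$, which is exactly \eqref{eq:strong_error_bound_relaxed}.

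\textbf{Main obstacle.} There is no genuine analytic obstacle, since Lemma~\ref{lem:potential_increment_bound_relaxed} holds uniformly in $j$. One point needs care: Fubini must be justified for the $\phi_\Delta$ part, which follows because $\phi^{(c)}$ at grid points is integrable by the same moment bounds. The rest is careful bookkeeping of the constants, where the delicate step is matching the $5/12$ coefficient.
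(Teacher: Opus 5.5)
Your proposal is correct and follows the paper's proof essentially step for step: the same Fubini and triangle-inequality reduction to per-interval increments, the same use of Lemma~\ref{lem:potential_increment_bound_relaxed} with steps $h$ and $\Delta$, and the same term-by-term integration producing the coefficients $\tfrac{7}{6}$, $\tfrac{1}{T}$, $\tfrac{1}{2}$ and $\tfrac{5}{12}$ before multiplying by $n=T/\Delta$. Your extra remark on justifying the exchange of expectation and integral is harmless: Tonelli already permits it for the non-negative integrand, and integrability of $\phi^{(c)}$ at grid points also follows from Lemma~\ref{lem:bridge_position_moment_relaxed} under Assumption~\ref{as:weak_error_regularity}.
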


\begin{proof}[Proof of Proposition \ref{prop:strong_error_relaxed}]
    Starting from the expression of the strong error, we can apply the triangle inequality
    and Fubini's theorem, relying on Assumption \ref{as:fubini}, to obtain
    \begin{align}
        \epsilon_\Delta 
        & = \mathbb{E} \left [ \left | \sum_{c=1}^C \int_0^T \phi^{(c)}(\theta_s^{(c)}) \, \mathrm{d} s - \sum_{c=1}^C \int_0^T \phi_\Delta^{(c)}(s, \theta_{t_{0:n}}^{(c)}) \, \mathrm{d} s \right | \right ] \\
        & = \mathbb{E}\left [ \left | \sum_{c=1}^C \int_0^T \left ( \phi^{(c)}(\theta_s^{(c)}) - \phi^{(c)}_\Delta(s, \theta_{t_{0:n}}^{(c)}) \right ) \, \mathrm{d} s\right | \right ] \\
        & \le \mathbb{E}\left [ \sum_{c=1}^C \int_0^T \left | \phi^{(c)}(\theta_s^{(c)}) - \phi^{(c)}_\Delta(s, \theta_{t_{0:n}}^{(c)}) \right | \, \mathrm{d}s \right ] \\
        & = \sum_{c=1}^C \int_0^T \mathbb{E} \left | \phi^{(c)}(\theta_s^{(c)}) - \phi^{(c)}_\Delta(s, \theta_{t_{0:n}}^{(c)}) \right | \, \mathrm{d}s \, .
    \end{align}
    Let us divide the integral on each subinterval on the partition $\mathcal{P}$, and apply a change of variable to express the increments in terms of the variable $h$:
    \begin{align}
        \epsilon_\Delta 
        & \le \sum_{c=1}^C \int_0^T \mathbb{E} \left | \phi^{(c)}(\theta_s^{(c)}) - \phi^{(c)}_\Delta(s, \theta_{t_{0:n}}^{(c)}) \right | \, \mathrm{d}s \\
        & = \sum_{c=1}^C \sum_{j=0}^{n-1} \int_{t_{j}}^{t_{j+1}} \mathbb{E} \left | \phi^{(c)}(\theta_s^{(c)}) - \phi^{(c)}_\Delta(s, \theta_{t_{0:n}}^{(c)}) \right | \, \mathrm{d}s \\
        & = \sum_{c=1}^C \sum_{j=0}^{n-1} \int_0^\Delta \mathbb{E} \left | \phi^{(c)}(\theta_{t_{j}+h}^{(c)}) - \phi^{(c)}_\Delta(t_{j}+h, \theta_{t_{0:n}}^{(c)}) \right | \, \mathrm{d}h \, .
    \end{align}
    Substituting the definition of $\phi^{(c)}_\Delta$ from Definition \ref{def:trapezium} in the modulus above and using the triangle inequality, we obtain
    \begin{align}
        & \left | \phi^{(c)}(\theta_{t_{j}+h}^{(c)}) - \phi^{(c)}_\Delta(t_{j}+h, \theta_{t_{0:n}}^{(c)}) \right | \\
        & =
        \Big |
            \phi^{(c)}(\theta_{t_{j}+h}^{(c)})
            -
            \phi^{(c)}(\theta_{t_{j}}^{(c)})
            -
            \frac{h}{\Delta}
            \left (
                \phi^{(c)}(\theta_{t_{j+1}}^{(c)})
                -
                \phi^{(c)}(\theta_{t_{j}}^{(c)})
            \right )
        \Big | \\
        & \le
        \Big |
            \phi^{(c)}(\theta_{t_{j}+h}^{(c)})
            -
            \phi^{(c)}(\theta_{t_{j}}^{(c)})
        \Big |
        +
        \frac{h}{\Delta}
        \Big |
            \phi^{(c)}(\theta_{t_{j+1}}^{(c)})
            -
            \phi^{(c)}(\theta_{t_{j}}^{(c)})
        \Big | .
    \end{align}
    Substituting this decomposition into the bound above gives
    \begin{align}
        \epsilon_\Delta
        & \le
        \sum_{c=1}^C \sum_{j=0}^{n-1}
        \int_0^\Delta
        \mathbb{E}
        \left |
            \phi^{(c)}(\theta_{t_{j}+h}^{(c)})
            -
            \phi^{(c)}(\theta_{t_{j}}^{(c)})
        \right |
        \, \mathrm{d}h \\
        & \quad
        +
        \sum_{c=1}^C \sum_{j=0}^{n-1}
        \int_0^\Delta
        \frac{h}{\Delta}
        \mathbb{E}
        \left |
            \phi^{(c)}(\theta_{t_{j+1}}^{(c)})
            -
            \phi^{(c)}(\theta_{t_{j}}^{(c)})
        \right |
        \, \mathrm{d}h \, .
    \end{align}

    Applying Lemma \ref{lem:potential_increment_bound_relaxed} with $h$, and again with $h=\Delta$, gives
    \begin{align}
        & \mathbb{E} \left | \phi^{(c)}(\theta_{t_j+h}^{(c)}) - \phi^{(c)}(\theta_{t_j}^{(c)}) \right | \\
        & \le d \, \mathcal{K}^{(c)} \left( \sqrt{h} + \frac{h}{T}\sqrt{\mathcal{R}^{(c)}} \right) + \frac{Ud}{2} \left( h + \frac{h^2}{T^2}\mathcal{R}^{(c)} \right) \, ,
    \end{align}
    and
    \begin{align}
        & \mathbb{E} \left | \phi^{(c)}(\theta_{t_{j+1}}^{(c)}) - \phi^{(c)}(\theta_{t_j}^{(c)}) \right | \\
        & \le d \, \mathcal{K}^{(c)} \left( \sqrt{\Delta} + \frac{\Delta}{T}\sqrt{\mathcal{R}^{(c)}} \right) + \frac{Ud}{2} \left( \Delta + \frac{\Delta^2}{T^2}\mathcal{R}^{(c)} \right) \, .
    \end{align}

    Substituting these two bounds into the previous decomposition of $\epsilon_\Delta$ gives
    \begin{align}
        \epsilon_\Delta
        & \le d \sum_{c=1}^C \sum_{j=0}^{n-1} \int_0^\Delta \left[ \mathcal{K}^{(c)} \left( \sqrt{h} + \frac{h}{T}\sqrt{\mathcal{R}^{(c)}} \right) + \frac{U}{2} \left( h + \frac{h^2}{T^2}\mathcal{R}^{(c)} \right) \right] \, \mathrm{d}h \\
        & \quad + d \sum_{c=1}^C \sum_{j=0}^{n-1} \int_0^\Delta \frac{h}{\Delta} \left[ \mathcal{K}^{(c)} \left( \sqrt{\Delta} + \frac{\Delta}{T}\sqrt{\mathcal{R}^{(c)}} \right) + \frac{U}{2} \left( \Delta + \frac{\Delta^2}{T^2}\mathcal{R}^{(c)} \right) \right] \, \mathrm{d}h .
    \end{align}
    Evaluating the integrals,
    \begin{align}
        \epsilon_\Delta
        & \le d \sum_{c=1}^C \sum_{j=0}^{n-1} \left[ \mathcal{K}^{(c)} \left( \frac{2}{3}\Delta^{3/2} + \frac{\Delta^2}{2T}\sqrt{\mathcal{R}^{(c)}} \right) + U \left( \frac{\Delta^2}{4} + \frac{\mathcal{R}^{(c)}\Delta^3}{6T^2} \right) \right] \\
        & \quad + d \sum_{c=1}^C \sum_{j=0}^{n-1} \left[ \mathcal{K}^{(c)} \left( \frac{1}{2}\Delta^{3/2} + \frac{\Delta^2}{2T}\sqrt{\mathcal{R}^{(c)}} \right) + U \left( \frac{\Delta^2}{4} + \frac{\mathcal{R}^{(c)}\Delta^3}{4T^2} \right) \right] \\
        & = d \sum_{c=1}^C \sum_{j=0}^{n-1} \left[ \mathcal{K}^{(c)} \left( \frac{7}{6}\Delta^{3/2} + \frac{\Delta^2}{T}\sqrt{\mathcal{R}^{(c)}} \right) + U \left( \frac{\Delta^2}{2} + \frac{5\mathcal{R}^{(c)}\Delta^3}{12T^2} \right) \right].
    \end{align}
    Since $n\Delta=T$,
    \begin{align}
        \epsilon_\Delta
        & \le d \sum_{c=1}^C \left[ \mathcal{K}^{(c)} \left( \frac{7}{6}T\sqrt{\Delta} + \Delta\sqrt{\mathcal{R}^{(c)}} \right) + U \left( \frac{1}{2}T\Delta + \frac{5}{12T}\mathcal{R}^{(c)}\Delta^2 \right) \right] \\
        & := E_\Delta \, .
    \end{align}

\end{proof}

In particular, when $C$, $d$, $T$, $U$, $\mathcal{G}^{(c)}$, and $\lambda^{(c)}$ are fixed, the leading term in $E_\Delta$ is of order $\sqrt{\Delta}$. Thus the same convergence rate in the step size is obtained under Assumption~\ref{as:weak_error_regularity} as under the global Lipschitz condition in Assumption~\ref{as:phi_lip}.

We can now combine this strong-error bound with the change-of-measure argument used in Theorem~\ref{th:bound_strong}.

\begin{theorem}[TV distance bound] \label{th:bound_strong_relaxed}
    Let Assumptions \ref{as:smoothness}, \ref{as:fubini}, \ref{as:phi_min}, and \ref{as:weak_error_regularity} hold.
    Let $0<\Delta\le T$, $n\Delta=T$, and let $\mathcal{R}^{(c)}$ and $\mathcal{K}^{(c)}$ be defined as in Proposition \ref{prop:strong_error_relaxed}.
    Let $\mathbb{F}$ be the fusion measure on $[0,T]$, and let $\mathbb{F}_\Delta$ be the approximate fusion measure on $[0,T]$ obtained using the numerical approximation to calculate the weights. The total variation distance between these distributions can be bounded as follows.
    \begin{align}
        & \| \mathbb{F} - \mathbb{F}_\Delta \|_{\mathrm{TV}} \\
        & \le \frac{1}{Z} \mathrm{e}^{- C \, \bar{\Phi}  \, T} \; d \sum_{c=1}^C \left[ \mathcal{K}^{(c)} \left( \frac{7}{6}T\sqrt{\Delta} + \Delta\sqrt{\mathcal{R}^{(c)}} \right) + U \left( \frac{1}{2}T\Delta + \frac{5}{12T}\mathcal{R}^{(c)}\Delta^2 \right) \right] \, ,
    \end{align}
    with $\bar{\Phi} = C^{-1} \sum_{c=1}^C \Phi^{(c)}$, where $\Phi^{(c)}$ is the lower bound from Assumption \ref{as:phi_min}, and $Z$ is the normalising constant of $\mathbb{F}$,
    \begin{equation}
        Z = \mathbb{E} \! \left[ \rho_0(\theta_0) \exp \left\{ -\sum_{c=1}^C\int_0^T\phi^{(c)}(\theta_s^{(c)}) \, \mathrm{d}s \right\} \right] \, .
    \end{equation}
    Note that the exponent $- C \, \bar{\Phi}  \, T$ is always positive, as shown in Remark \ref{rem:Phi_neg}.
\end{theorem}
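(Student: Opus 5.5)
The plan is to reuse, verbatim, the change-of-measure argument from the proof of Theorem~\ref{th:bound_strong}, since nothing in that argument used Assumption~\ref{as:phi_lip}. The only place where Lipschitzness of $\phi^{(c)}$ entered was through Proposition~\ref{prop:strong_error}, and Proposition~\ref{prop:strong_error_relaxed} now supplies the replacement bound $E_\Delta$ under Assumption~\ref{as:weak_error_regularity}.

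Concretely, with $A$ and $A_\Delta$ as in Lemma~\ref{lem:minAB}, I would write $\mathrm{d}\mathbb{F}/\mathrm{d}\mathbb{P} = \rho_0 \mathrm{e}^{-A}/Z$ and $\mathrm{d}\mathbb{F}_\Delta/\mathrm{d}\mathbb{P} = \rho_0 \mathrm{e}^{-A_\Delta}/Z_\Delta$. The total variation distance is then half the $L_1(\mathbb{P})$ distance of these densities. Adding and subtracting $\rho_0\mathrm{e}^{-A_\Delta}/Z$ and using $|Z-Z_\Delta| \le \mathbb{E}[\rho_0|\mathrm{e}^{-A}-\mathrm{e}^{-A_\Delta}|]$ together with $\mathbb{E}[\rho_0\mathrm{e}^{-A_\Delta}] = Z_\Delta$ gives
\begin{equation}
\|\mathbb{F}-\mathbb{F}_\Delta\|_{\mathrm{TV}} \le \frac{1}{Z}\,\mathbb{E}\big[\,|\mathrm{e}^{-A}-\mathrm{e}^{-A_\Delta}|\,\big],
\end{equation}
after bounding $\rho_0\le 1$. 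Lemma~\ref{lem:ex_mvt} controls the integrand by $\mathrm{e}^{-\min(A,A_\Delta)}|A-A_\Delta|$. Lemma~\ref{lem:minAB}, which uses only Assumption~\ref{as:phi_min} and the fact that $\phi_\Delta^{(c)}$ is a convex combination of values of $\phi^{(c)}$, gives $\min(A,A_\Delta)\ge C\bar\Phi T$. Hence the TV distance is at most $Z^{-1}\mathrm{e}^{-C\bar\Phi T}\epsilon_\Delta$.

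It then remains to insert $\epsilon_\Delta \le E_\Delta$ from Proposition~\ref{prop:strong_error_relaxed}, which holds under Assumptions~\ref{as:smoothness}, \ref{as:fubini} and \ref{as:weak_error_regularity}, all of which are in force. This yields exactly the displayed bound. The positivity of the exponent is Remark~\ref{rem:Phi_neg}.

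The only point needing care, and the one I would check first, is that the quantities involved are well defined without Assumption~\ref{as:phi_lip}. Specifically, $Z>0$ and $Z_\Delta$ must be finite and positive so the normalisations make sense. Assumption~\ref{as:phi_min} gives $\rho_0\mathrm{e}^{-A_\Delta}\le \mathrm{e}^{-C\bar\Phi T}$, so both constants are finite. Strict positivity holds because $A$ and $A_\Delta$ are almost surely finite: Assumption~\ref{as:fubini} handles $A$, and $A_\Delta$ involves only finitely many evaluations of a continuous function. Finally, $\mathbb{E}|A-A_\Delta|$ is finite by Proposition~\ref{prop:strong_error_relaxed}, so every step of the chain of inequalities is legitimate.
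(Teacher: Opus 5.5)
Your proposal is correct and follows the paper's proof: the paper also reuses the change-of-measure chain from Theorem~\ref{th:bound_strong} (the triangle inequality with $|Z-Z_\Delta|$, then $\rho_0\le 1$, Lemma~\ref{lem:ex_mvt} and Lemma~\ref{lem:minAB}) to reach $Z^{-1}\mathrm{e}^{-C\bar\Phi T}\epsilon_\Delta$, and then substitutes the bound $E_\Delta$ of Proposition~\ref{prop:strong_error_relaxed}. Your extra check that $Z$ and $Z_\Delta$ are finite and strictly positive is left implicit in the paper, and your argument for it is sound.
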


\begin{proof}[Proof of Theorem \ref{th:bound_strong_relaxed}]
    The proof follows the same steps as the proof of Theorem~\ref{th:bound_strong}, except for the final step.
    Using Proposition \ref{prop:strong_error_relaxed},
    \begin{align}
        & \| \mathbb{F} - \mathbb{F}_\Delta \|_{\mathrm{TV}} \\
        & \le \frac{1}{Z} \mathrm{e}^{- C \, \bar{\Phi}  \, T}  \cdot \epsilon_\Delta \\
        & \le \frac{1}{Z} \mathrm{e}^{- C \, \bar{\Phi}  \, T}  \cdot E_\Delta  \\
        & = \frac{1}{Z} \mathrm{e}^{- C \, \bar{\Phi}  \, T} \; d \sum_{c=1}^C \left[ \mathcal{K}^{(c)} \left( \frac{7}{6}T\sqrt{\Delta} + \Delta\sqrt{\mathcal{R}^{(c)}} \right) + U \left( \frac{1}{2}T\Delta + \frac{5}{12T}\mathcal{R}^{(c)}\Delta^2 \right) \right] \, .
    \end{align}
\end{proof}

\section{Detailed results for the logistic regression} \label{app:logit_details}

This appendix reports the detailed results of the logistic regression experiment of Section~\ref{sec:logistic_comparison}. Table~\ref{tab:logistic_iad_comparison} lists, for each method and target $\zeta_\star$, the number of SMC steps $n$ selected by the bisection, together with the mean and range of the IAD and of the runtime across the ten replicas. Tuning $T$ and $n$ took between $5$ and $14$ seconds for ABF, and between $44$ and $51$ seconds for BF.

\begin{table}[htbp]
    \centering
    \begin{tabular}{lcccc}
        \toprule
        Method & $\zeta_\star$ & $n$ & IAD & Runtime (seconds) \\
        \midrule
        ABF & $0.1$ & $5$ & $0.0444\;(0.0317,\,0.0600)$ & $7.19\;(5.08,\,8.10)$ \\
        ABF & $0.2$ & $7$ & $0.0293\;(0.0254,\,0.0359)$ & $9.57\;(7.35,\,10.6)$ \\
        ABF & $0.5$ & $11$ & $0.0316\;(0.0256,\,0.0430)$ & $14.6\;(11.2,\,16.1)$ \\
        ABF & $0.8$ & $21$ & $0.0314\;(0.0227,\,0.0536)$ & $26.0\;(20.0,\,28.8)$ \\
        ABF & $0.9$ & $32$ & $0.0264\;(0.0204,\,0.0306)$ & $39.2\;(30.2,\,43.0)$ \\
        ABF & $0.99$ & $108$ & $0.0281\;(0.0240,\,0.0393)$ & $132\;(112,\,143)$ \\
        \midrule
        BF & $0.1$ & $32$ & $0.0739\;(0.0534,\,0.100)$ & $958\;(926,\,1{,}005)$ \\
        BF & $0.2$ & $52$ & $0.0578\;(0.0445,\,0.0751)$ & $1{,}175\;(1{,}148,\,1{,}215)$ \\
        BF & $0.3$ & $75$ & $0.0517\;(0.0411,\,0.0642)$ & $1{,}445\;(1{,}392,\,1{,}616)$ \\
        BF & $0.4$ & $101$ & $0.0471\;(0.0402,\,0.0600)$ & $1{,}771\;(1{,}714,\,1{,}812)$ \\
        \midrule
        CMC & -- & -- & $0.0607\;(0.0596,\,0.0620)$ & $0.172\;(0.146,\,0.265)$ \\
        \bottomrule
    \end{tabular}
    \caption{Accuracy and per-replica runtime for the logistic regression experiment, where $n$ is the number of SMC steps and $\Delta = T / n$ with $T = 0.0353$. Each entry reports the mean across ten replicas, followed by the observed range in parentheses.}
    \label{tab:logistic_iad_comparison}
\end{table}

The bisection selected $n = 32$ for both ABF with $\zeta_\star = 0.9$ and BF with $\zeta_\star = 0.1$, so the two methods can be compared on a common mesh. On this mesh, the ESS after the first transition is $0.90 N$ for ABF and $0.14 N$ for BF, which illustrates the greater variability of the BF weights. ABF has a mean IAD of $0.026$, against $0.074$ for BF, and runs about $25$ times faster.

As the mesh is refined, the IAD of BF decreases, but even its finest mesh remains, on average, less accurate than every ABF configuration. The IAD of ABF is largest on the coarsest mesh and roughly constant for $n \ge 7$. CMC is improved upon by ABF at every $\zeta_\star$, and by BF for $\zeta_\star \ge 0.2$.

\section{Detailed results for the nycflights13 experiment} \label{app:nycflights_details}

This appendix gives the settings of the Stan samplers for the nycflights13 data of Section~\ref{sec:nycflights}, and compares ABF with CMC. Figure~\ref{fig:nycflights_iad_compare_cmc} shows the results of Figure~\ref{fig:nycflights_iad_compare} together with those of CMC, which uses the same values of $N$ as the number of draws per shard.

\begin{figure}[!htb]
    \centering
    \includegraphics[width=1\linewidth]{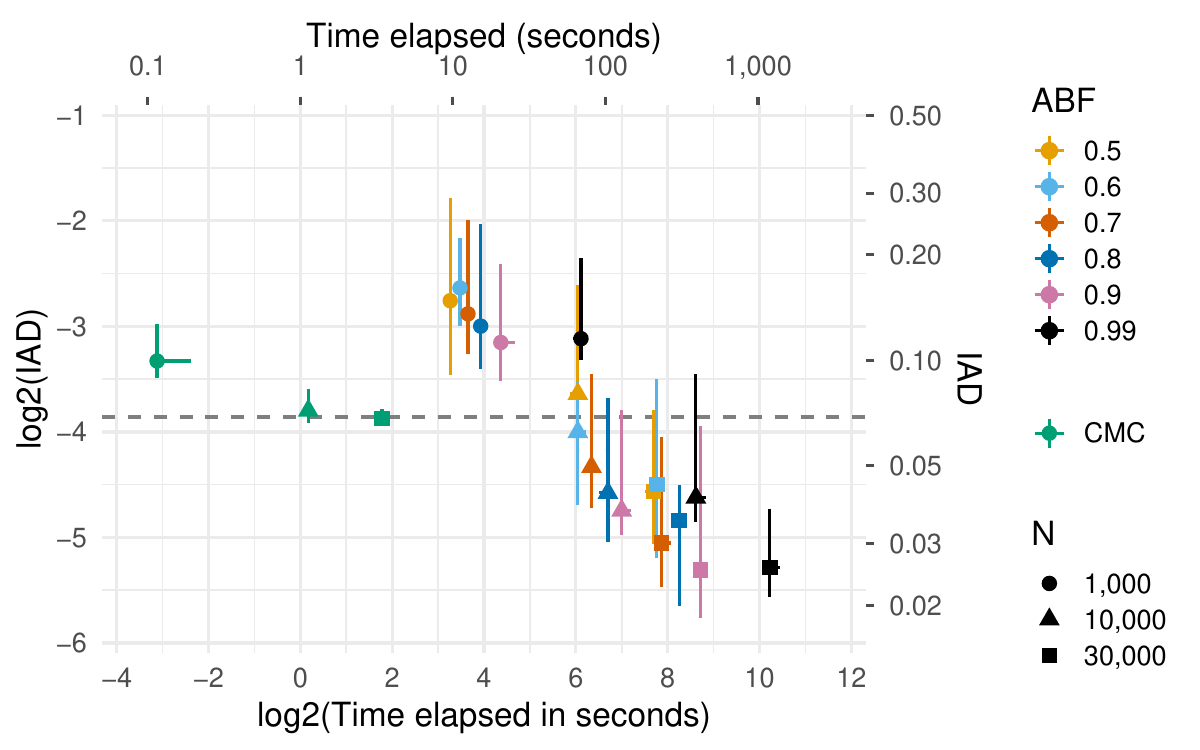}
    \caption{IAD against runtime of ABF and CMC for the logistic regression on the nycflights13 data, on the $\log_2$ scales of \citet[Figure 7]{chan2023dcf}. Colours indicate the target $\zeta_\star$ of ABF, with CMC in a separate colour, and marker shapes the number of particles $N$. The markers indicate the median runtime and the median IAD over ten replicas, and the bars extend from the minimum to the maximum. As in \citet[Figure 7]{chan2023dcf}, the dashed horizontal line marks the smallest mean IAD of CMC amongst all $N$. Points nearer the lower-left corner are preferable.}
    \label{fig:nycflights_iad_compare_cmc}
\end{figure}

CMC is faster than ABF by about two orders of magnitude at every $N$. Its IAD decreases from $N = 1{,}000$ to $N = 10{,}000$ but changes little beyond, which indicates that its remaining error is mostly systematic. ABF is less accurate than CMC for $N = 1{,}000$, but it is already more accurate for $N = 10{,}000$, its IAD keeps decreasing as $N$ increases, and for $N = 30{,}000$ almost every ABF replica is more accurate than every CMC replica.

\subsection{Shard and reference samples}

For each shard, a single Stan chain \citep{carpenter2017} is run for $1{,}000$ warm-up iterations, followed by enough iterations for the bulk and tail effective sample sizes of every coordinate to reach $300{,}000$, and the chain is then thinned to $300{,}000$ draws. These form ten blocks of $30{,}000$ draws: every ABF replica uses the first $N$ draws of the first block, while CMC replica $k$ uses the first $N$ draws of block $k$. The reference sample consists of $256{,}000$ draws from $64$ Stan chains run in parallel on the full data, each with $1{,}000$ warm-up iterations followed by $4{,}000$ kept iterations and no thinning.

\section{Details of the proportional hazards model} \label{sec:ph_model_prior}

This appendix describes the data of the proportional hazards model of Section~\ref{sec:ph} and the settings of the Stan samplers, gives its likelihood, prior and fractional prior, explains the choice of centred coordinates, shows that $\phi^{(c)}$ is not bounded below, and reports the runs with longer steps.

\subsection{Data}
The \texttt{hds} data set arranges the records of each drive in the start--stop format used in survival analysis \citep{christoffersen2021, hds-doc}. After keeping one record per drive and the $G = 4$ families with the most drives, $m = 57{,}954$ drives remain, of which $1{,}717$ failed during the observation period. Three of the families have a capacity of $4$ terabytes and one of $8$ terabytes, and the number of failures per family ranges from $17$ to $1{,}565$. Across the $C = 16$ shards, each shard contains $3{,}622$ or $3{,}623$ drives and $107$ or $108$ failures.

\subsection{Shard and reference samples}

The shard and reference samples are obtained as for the nycflights13 data (Appendix~\ref{app:nycflights_details}).

\subsection{Likelihood}
For drive $i$, let $a_i \geq 0$ be the age at which it enters the data, $b_i > a_i$ the age at which it is last observed, both in years of operation, and $\delta_i \in \{0, 1\}$ indicate a failure at $b_i$. Write $\eta_i = \beta_0 + \beta_1 x_i + u_{g_i}$. By Equation~\eqref{eq:ph_hazard}, the cumulative hazard of drive $i$ is $\int_0^t h_i(s) \, \mathrm{d} s = e^{\eta_i} t^{\gamma}$, and its survival function is $\mathcal{S}_i(t) = \exp(- e^{\eta_i} t^{\gamma})$. The drive enters the data only because it survived to $a_i$, so its likelihood is $h_i(b_i)^{\delta_i} \mathcal{S}_i(b_i) / \mathcal{S}_i(a_i)$: the density of a failure at $b_i$ if $\delta_i = 1$, or the probability of surviving beyond $b_i$ if the drive is censored, conditional on survival to $a_i$. Its logarithm is
\begin{align}
    \mathcal{L}_i & = \delta_i \log h_i(b_i) + \log \mathcal{S}_i(b_i) - \log \mathcal{S}_i(a_i) \nonumber \\
    & = \delta_i \left \{ \log \gamma + (\gamma - 1) \log b_i + \eta_i \right \} - e^{\eta_i} \left ( b_i^\gamma - a_i^\gamma \right ) \, , \label{eq:ph_loglik_row}
\end{align}
where we took the logarithm of each factor, and substituted the logarithm of Equation~\eqref{eq:ph_hazard} and the expression of $\mathcal{S}_i$. All drives of a family have the same capacity, so $x_i = x_{g_i}$ and $\eta_i = \eta_{g_i}$, with $\eta_g = \beta_0 + \beta_1 x_g + u_g$. Summing Equation~\eqref{eq:ph_loglik_row} over the drives of shard $c$, and taking out of the sum over each family the factors that do not depend on $i$, gives
\begin{equation}\label{eq:ph_grouped_loglik}
    \mathcal{L}_c(q) = \sum_{g=1}^G \left \{ E_g (\log \gamma + \eta_g) + (\gamma - 1) S_g - e^{\eta_g} D_g(\gamma) \right \} \, ,
\end{equation}
where, if $\mathcal{I}_{c,g}$ is the set of drives of family $g$ in shard $c$,
\begin{equation}
    E_g = \sum_{i \in \mathcal{I}_{c,g}} \delta_i \, , \qquad S_g = \sum_{i \in \mathcal{I}_{c,g}} \delta_i \log b_i \, , \qquad D_g(\gamma) = \sum_{i \in \mathcal{I}_{c,g}} \left ( b_i^\gamma - a_i^\gamma \right ) \, .
\end{equation}
The counts $E_g$ and $S_g$ are computed once, and drives with the same $(a_i, b_i)$ contribute identical terms to $D_g$, so each distinct pair is evaluated once with its multiplicity.

\subsection{Prior and fractional prior}
The prior is
\begin{align}
    \beta_0 & \sim \mathcal{N}(-4, 2^2) \, , & \log \gamma & \sim \mathcal{N}(\log 1.2, 0.7^2) \, , \nonumber \\
    \beta_1 & \sim \mathcal{N}(0, 1) \, , & \omega & \sim \operatorname{HalfNormal}(0, 0.8^2) \, ,
\end{align}
together with $u_g \mid \omega \overset{\mathrm{iid}}{\sim} \mathcal{N}(0, \omega^2)$. The half-normal prior has positive density at $\omega = 0$, so it does not rule out that all families have the same hazard. Its scale is consistent with $16$ further families of the data set, which the analysis does not use: the standard deviation of the logarithms of their crude failure rates is $1.04$, and the prior gives $\omega > 1$ probability $0.21$. Following \citet{chan2023dcf}, shard $c$ targets $f_c(q) \propto \exp \{ \mathcal{L}_c(q) \} \{ \pi_q(q) \}^{1/C}$, where $\pi_q$ is the prior density of $q$, so that $\prod_{c=1}^C f_c$ is proportional to the posterior density. Write $u = (u_1, \ldots, u_G)^\top$. The density of $\log \omega$ is $\omega \exp \{ - \omega^2 / (2 \cdot 0.8^2) \}$, where the factor $\omega$ is the Jacobian of the map from $\omega$ to $\log \omega$, and the density of $u_g$ given $\omega$ is proportional to $\omega^{-1} \exp \{ - u_g^2 / (2 \omega^2) \}$. The terms of $\log f_c$ that depend on $\log \omega$ or on $u$, apart from $\mathcal{L}_c$, are therefore
\begin{equation}\label{eq:ph_prior_lambda_u}
    p_c(\log \omega, u) = \frac{1}{C} \left ( \log \omega - \frac{\omega^2}{2 \cdot 0.8^2} - G \log \omega - \frac{\| u \|^2}{2 \omega^2} \right ) \, .
\end{equation}
In the fusions of the divide-and-conquer tree, the exponent $1/C$ is replaced by the share of the prior carried by each child.

\subsection{Centred coordinates and the funnel}
For the families with many failures, the data determine $\eta_g$ closely, even when $\beta_0$, $\beta_1$ and $\omega$ are poorly determined. In the coordinates $q$, $\eta_g$ is linear, so the set on which it is constant is a hyperplane, and the mean of a Brownian bridge between two points of this set, which is the segment between them, stays in it. In the non-centred coordinates, with $u_g = \omega z_g$, this set is curved, the bridges leave it, and the data give the corresponding paths very small weights. This agrees with the general finding that centred parameterisations are preferable when the data are informative about each group effect \citep{papaspiliopoulos2007}. Family $4$, with $17$ failures, is the exception: for it, the centred coordinates produce the funnel between $u_4$ and $\log \omega$.

\subsection{No lower bound on \texorpdfstring{$\phi^{(c)}$}{phi(c)}}
Differentiating Equation~\eqref{eq:ph_prior_lambda_u}, using $\mathrm{d} \omega^2 / \mathrm{d} \log \omega = 2 \omega^2$ and $\mathrm{d} \omega^{-2} / \mathrm{d} \log \omega = - 2 \omega^{-2}$, gives
\begin{align}
    \frac{\partial p_c}{\partial u_g} & = - \frac{u_g}{C \omega^2} \, , & \frac{\partial p_c}{\partial \log \omega} & = \frac{1}{C} \left ( 1 - G - \frac{\omega^2}{0.8^2} + \frac{\| u \|^2}{\omega^2} \right ) \, , \nonumber \\
    \frac{\partial^2 p_c}{\partial u_g^2} & = - \frac{1}{C \omega^2} \, , & \frac{\partial^2 p_c}{\partial (\log \omega)^2} & = - \frac{1}{C} \left ( \frac{2 \omega^2}{0.8^2} + \frac{2 \| u \|^2}{\omega^2} \right ) \, ,
\end{align}
together with $\partial^2 p_c / \partial \log \omega \, \partial u_g = 2 u_g / (C \omega^2)$ and $\partial^2 p_c / \partial u_g \, \partial u_h = 0$ for $g \neq h$. Fix $\beta_0$, $\beta_1$ and $\gamma$, and set $u = 0$. Then the gradient of $p_c$ is bounded as $\omega \to 0$, and its Hessian is diagonal. The log-likelihood $\mathcal{L}_c$ depends on $\omega$ and $u$ only through $\eta_g$, which does not depend on $\omega$, so at $u = 0$ its gradient and Hessian do not depend on $\omega$, and neither do the priors on $\beta_0$, $\beta_1$ and $\log \gamma$. In Equation~\eqref{eq:phi_precond}, the first term is therefore bounded as $\omega \to 0$, and in the second term only the entries $\partial^2 p_c / \partial u_g^2$ grow without bound. Hence, if $\Lambda_{c,uu}$ denotes the block of $\Lambda_c$ that corresponds to $u$,
\begin{equation}\label{eq:ph_phi_neck}
    \phi^{(c)}(q) = - \frac{\operatorname{Tr} ( \Lambda_{c,uu} )}{2 C \omega^2} + \mathcal{O}(1) \to - \infty \quad \text{as } \omega \to 0 \text{ with } u = 0 \, ,
\end{equation}
because $\Lambda_c$ is positive definite, so $\operatorname{Tr} ( \Lambda_{c,uu} ) > 0$.

\subsection{Step size tests}
Table~\ref{tab:ph_zeta_N} reports the median IAD of ABF over ten replicas for all the values of $\zeta_\star$ that we ran. For $\zeta_\star \in \{0.95, 0.99\}$, the IAD decreases as $N$ increases. For $\zeta_\star \in \{0.85, 0.9\}$, it decreases from $N = 1{,}000$ to $N = 10{,}000$, and then increases: the bias caused by the longer steps grows with $N$. A particle that lands at a grid point deep in the neck of the funnel receives a very large weight (Section~\ref{sec:ph}), and the more particles there are, the more likely it is that one of them lands there. Shorter steps reduce this weight, which is why, for $\zeta_\star \in \{0.95, 0.99\}$, the IAD keeps decreasing up to $N = 30{,}000$.

\begin{table}[htbp]
    \centering
    \begin{tabular}{lccc}
        \toprule
        $\zeta_\star$ & $N = 1{,}000$ & $N = 10{,}000$ & $N = 30{,}000$ \\
        \midrule
        $0.85$ & $0.114$ & $0.054$ & $0.116$ \\
        $0.9$ & $0.108$ & $0.051$ & $0.067$ \\
        $0.95$ & $0.139$ & $0.045$ & $0.039$ \\
        $0.99$ & $0.159$ & $0.040$ & $0.031$ \\
        \bottomrule
    \end{tabular}
    \caption{Median IAD of ABF over ten replicas for the proportional hazards model.}
    \label{tab:ph_zeta_N}
\end{table}

To check that the bias comes from the time discretisation, we reran ABF with $\zeta_\star = 0.85$ and $N = 30{,}000$ as follows. Each fusion of the tree first runs the adaptive algorithm as a pilot, and is then rerun on the pilot's grid with each step split into $k$ equal parts. Table~\ref{tab:ph_finegrid} compares the results with those of the adaptive runs. Splitting each step into four removes the bias of $\omega$ and brings the IAD below that of $\zeta_\star = 0.99$ in Table~\ref{tab:ph_zeta_N}.

Rerunning on the pilot's grid without splitting it, $k = 1$, is much worse than the adaptive run. The rerun uses new random numbers, so its particles reach the neck at different times from those of the pilot, and the fixed grid cannot shorten the steps in response, as the adaptive rule does. The resulting errors are then passed up the tree: at the fusions above the leaves, the pilots, which start from the output of these reruns, needed up to $4 \times 10^4$ steps, whereas no fusion of the adaptive runs needed more than $25$.

\begin{table}[htbp]
    \centering
    \begin{tabular}{lccc}
        \toprule
        Grid & Replicas & Error of the mean of $\omega$ & IAD \\
        \midrule
        Adaptive & $10$ & $-0.50$ & $0.116$ \\
        Pilot grid, $k = 1$ & $4$ & $-2.28$ & $0.687$ \\
        Pilot grid, $k = 2$ & $5$ & $-0.28$ & $0.062$ \\
        Pilot grid, $k = 4$ & $4$ & $0.07$ & $0.019$ \\
        \bottomrule
    \end{tabular}
    \caption{ABF with $\zeta_\star = 0.85$ and $N = 30{,}000$ on the proportional hazards model, on the adaptive grid and on the grid of an adaptive pilot run with each step split into $k$ equal parts. The error of the weighted mean of $\omega$ is in units of the standard deviation of the reference sample. Medians over the replicas.}
    \label{tab:ph_finegrid}
\end{table}

\end{document}